\newcommand{\h}{\bm{h}}
\newcommand{\w}{\bm{w}}
\renewcommand{\o}{\bm{o}}
\newcommand{\Nat}{\mathbb{N}}
\newcommand{\eqdef}{\stackrel{\text{def}}{=}}
\newcommand{\leqaug}{\leq_{\text{aug}}}
\newcommand{\leqstruct}{\leq_{\text{st}}}
\newcommand{\geqaug}{\geq_{\text{aug}}}
\newcommand{\geqstruct}{\geq_{\text{st}}}
\newcommand{\Bad}{{\textit{Bad}}}
\newcommand{\Dec}{{\textit{Dec}}}
\newcommand{\Inc}{{\textit{Inco}}}
\newcommand{\shift}{{\textit{shift}}}
\newcommand{\om}{{\omega}}
\newcommand{\omom}{{\om^{\om}}}
\newcommand{\nfp}{ \mathbin{\overset{.}{+}}}
\title{On the cartesian product of well-orderings}
\author{Isa Vialard\footnote{Université Paris-Saclay, ENS Paris-Saclay, CNRS, LMF, Gif-sur-Yvette, France}}
\theoremstyle{plain}
\newtheorem{theorem}{Theorem}[section]
\newtheorem{lemma}[theorem]{Lemma}
\newtheorem{proposition}[theorem]{Proposition}
\newtheorem{corollary}[theorem]{Corollary}
\newtheorem{example}[theorem]{Example}
\theoremstyle{definition}
\newtheorem{definition}[theorem]{Definition}
\newtheorem{remark}[theorem]{Remark}
\begin{document}

\maketitle

\begin{abstract}
The width of a well partial ordering (wpo) is the ordinal rank of the set of its antichains ordered by inclusion. We compute the width of wpos obtained as cartesian products of finitely many well-orderings.
\end{abstract}

\section{Introduction}

For a finite poset, and more generally a finite quasi-order (qo), there are intuitive notions of dimension that play a paramount role in combinatorics and algorithmics: its cardinal, but also its height (the cardinal of its longest chain) and its width (the cardinal of its longest antichain, i.e., sequence of incomparable elements).

 With some provisions, these dimensions can be extended to infinite posets:
 If a qo is \emph{well-founded} (or WF), we can define its \emph{height} as the rank of the tree of strictly decreasing sequences.
If a qo is FAC (it only has finite antichain), we can define its \emph{width} as the rank of the trees of antichains.
If a qo is both WF and FAC, then it is called a \emph{well quasi-order}(wqo)(\cite{higman52}): it has a height, a width, and also a maximal order type (\cite{dejongh77}), defined as the rank of the trees of bad sequences (a sequence
$x_0 , x_1 , x_2 , \dots$ is \emph{good} if there are some positions $i < j$ such that $x_i \leq x_j$, and \emph{bad} otherwise).
Wqos can alternatively be defined as qos that do not have infinite bad sequences.

There is a rich theory of wqos (\cite{milner85},\cite{schuster2020}), where these dimensions, which we call \emph{ordinal invariants}, are used to measure complexity. De Jongh and Parikh (\cite{dejongh77}) and Schmidt (\cite{schmidt79}) initiated the study of the maximal order type, for use in proof theory. K\v{r}\'i\v{z} and Thomas (\cite{kriz90b}) later introduced the width for infinitary combinatorics. Blass and Gurevich (\cite{blass2008}) then contributed to the study of invariants for program verification. The maximal order type was also applied in \cite{bonnet2013} for expressiveness results.

In the study of \emph{well-structured transition systems} (WSTS), i.e., systems whose set of configurations is a wqo, some upper bound results on complexity rely on the length of controlled bad sequences of configurations (\cite{HSS-lmcs},\cite{HSS-lics2012}), i.e., on the maximal order type of the underlying wqo. In \cite{schmitz2019b}, Schmitz refined this technique with controlled antichains whose length depends on the width instead. 

A recent article by Dzamonja et al. (\cite{dzamonja2020}) shows that we do not always know how to compute the width of wqos, even in the apparently simple case of a cartesian product. This problem is unfortunate since the cartesian product is the most common and basic data structure in mathematics and computer science. However, one subproblem, the special case of the width of the cartesian product of two linear well-founded orders, i.e., two ordinals, was solved by  Abraham (\cite{abraham87}) 35 years ago.

This article develops a method to compute the width of the cartesian product of $n$ ordinals, for any $n\in\mathbb{N}$. As explained in Section \ref{sec-abraham}, the method of residuals used in \cite{abraham87} relies on specificities of the case $n=2$, which are lost in case $n=3$ and beyond. Our method consequently develops a more general, game-theoretical approach.

\subsection{Outline of the article}

Section \ref{section-preliminaries} introduces definitions, notations and recalls known results, mostly following \cite{dzamonja2020}. Section \ref{section-lowerbound} proves intermediary results on lower bounds on the width that lay the ground for future proofs.

\medskip

The following three sections gradually progress toward our main result:
In Section \ref{section-indecomposable}, we compute the width of the product of indecomposable ordinals.
Section \ref{section-infinite} uses the previous result to compute the width of the product of infinite ordinals.
Section \ref{section-finite-infinite} eventually extends this result by adding finite ordinals to the product.

\medskip
In Section \ref{section-w-equals-o} we leverage our main result to compute the width of the cartesian product of more general wqos.

For completeness, Section \ref{section-finite} recalls classic results for the cartesian product of finite ordinals.

\section{Preliminaries}
\label{section-preliminaries}

We assume familiarity with basics of order theory. See e.g.\ \cite{fraisse86}.

\subsection{The measure of wqos}

\medskip
For any wqo $(A,\leq_A)$ (we often write just $A$ when the order is understood), we write $\Inc(A)$ (resp. $\Dec(A)$ and $\Bad(A)$)  for the tree of non-empty antichains (resp. strictly decreasing sequences, bad sequences) in $A$ ordered by initial segment: if $s = (x_1,\dots,x_n)$ and $t = (x_1,\dots,x_n,y)= s\frown y$, then $t$ is a child of $s$.

Observe that, since $A$ is a wqo (so FAC and WF), the trees $\Inc(A)$, $\Dec(A)$ and $\Bad(A)$ do not have infinite branches: they are well-founded. However, they can be infinitely branching.

\medskip
One can ascribe a \emph{rank} to any node of a well-founded tree $T$ from bottom to top. Let $x \in T$  a node: if $s$ is a leaf, then $r(s) =0$. Else $r(s) = \sup \{r(t) +1\; |\; t \text{ is a son of } s\}$.
The rank of $T$ is defined as the image of $T$ through $r$, i.e., $r(T) = sup_{s\in T} \;r(s) +1$. Since $T$ can be infinitely branching, its rank is a possibly infinite ordinal.
For a more general definition of the rank of any well-founded partial-order set, see Section 2.2 of \cite{dzamonja2020}.

\begin{definition}
\label{def-who}
The \emph{width} $\w(A)$, the \emph{height} $\h(A)$, and the \emph{maximal order type} $\o(A)$ are respectively the rank of $\Inc(A)$, $\Dec(A)$, and $\Bad(A)$. Together, they are called the \emph{ordinal invariants} of $A$.
\end{definition}

\begin{example}

 For any ordinal $\alpha>0$ with order $\subset$, $\o(\alpha)=\h(\alpha)=\alpha$, and $\w(\alpha)=1$.

\end{example}

\begin{remark} The trees $\Inc(A)$, $\Dec(A)$, and $\Bad(A)$ are, more precisely, forests, since they have multiple roots: all one-element sequences.
\end{remark} 

\medskip
Since an antichain is a bad sequence, $\Inc(A)$ is a subtree of $\Bad(A)$. Hence:
\begin{lemma} 
\label{lem-w-leq-o}
For all wqo $A$, $\w(A) \leq \o(A)$.
\end{lemma}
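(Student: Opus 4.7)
The plan is to formalize the hint already given in the text: every antichain is, in particular, a bad sequence, so $\Inc(A)$ sits inside $\Bad(A)$ as a subtree, and rank is monotone under subtree inclusion.

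First I would verify the set-theoretic inclusion. Let $s = (x_1,\dots,x_n) \in \Inc(A)$, so the $x_i$ are pairwise incomparable. Then for any $i < j$ we have $x_i \not\leq_A x_j$, which means $s$ is a bad sequence, i.e., $s \in \Bad(A)$. Moreover, the child-of relation is the same on both sides (namely, extension by one element), so $\Inc(A)$ is literally a sub-forest of $\Bad(A)$ under the initial-segment order.

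Next I would invoke the following general fact about well-founded trees (or forests): if $T \subseteq T'$ is a sub-forest closed under initial segments, then for every $s \in T$, the rank $r_T(s)$ computed inside $T$ is at most the rank $r_{T'}(s)$ computed inside $T'$. This is a straightforward induction on $r_T(s)$: if $s$ is a leaf in $T$ then $r_T(s) = 0 \leq r_{T'}(s)$, and otherwise every child $t$ of $s$ in $T$ is also a child of $s$ in $T'$, so by induction $r_T(t) \leq r_{T'}(t)$, and taking suprema preserves the inequality.

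Applying this to $T = \Inc(A) \subseteq \Bad(A) = T'$ gives $r_{\Inc(A)}(s) \leq r_{\Bad(A)}(s)$ for every $s \in \Inc(A)$. Taking the supremum over all such $s$ (noting that the roots of $\Inc(A)$ are among the roots of $\Bad(A)$, so this does not lose anything), we conclude
\[
\w(A) \;=\; r(\Inc(A)) \;\leq\; r(\Bad(A)) \;=\; \o(A).
\]
There is no real obstacle here; the only mild care needed is to treat the forest structure correctly (the ranks are really suprema over all roots, not the rank of a single tree) and to confirm that the notion of ``child'' agrees in $\Inc(A)$ and $\Bad(A)$, which it does by definition.
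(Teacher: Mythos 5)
Your proof is correct and takes essentially the same route as the paper, which disposes of the lemma with the single observation that every antichain is a bad sequence, so $\Inc(A)$ is a subtree of $\Bad(A)$ and ranks are monotone under subtree inclusion. Your write-up merely makes explicit the induction behind that rank monotonicity, which the paper leaves implicit.
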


\begin{remark}
The maximal order type 
$\o(A)$ was historically defined as the maximal linearisation of $A$ (\cite{dejongh77}), i.e., a linear order (an ordinal) $(A,\leq)$ such that $\leq_A\subseteq \leq$. The height $\h(A)$ can similarly be defined as the longest chain of $A$. Nonetheless, Definition \ref{def-who} allow us to include the width as a third natural invariant.
\end{remark}

\medskip 

When computing ordinal invariants, we frequently want to compare a wqo to one of its substructures or augmentations:

Let $(A,\leq_A),(B,\leq_B)$ be two wqos.
$A$ is an \emph{augmentation} of $B$ if the sets $A,B$ are equal and $\leq_B \subseteq \leq_A$. We denote it by $A \geqaug B$.
$A$ is a \emph{substructure} of $B$ if the set $A$ is a subset of $B$ and 
$\leq_A = (\leq_B)_{|A}$. We denote it by $A \leqstruct B$. 
We use the notation $A\equiv B$ when $A$ is isomorphic to $B$, i.e., when there is a bijection between $A$ and $B$ that preserves the order. We often abuse terminology and say that $A$ is a substructure (resp. an augmentation) when $A$ is isomorphic to a substructure (resp. an augmentation).

\begin{example}
For any ordinals $\alpha < \beta$, $\alpha \leqstruct \beta$.
\end{example}

\begin{example}
For any $n \in\mathbb{N}$, $\Gamma_n \leqaug n$, where $\Gamma_n$ is the antichain with $n$ elements.
\end{example}

We define the disjoint sum $\sqcup$, lexicographic sum $+$, cartesian product $\times$ and direct product $\cdot$ as in \cite{dzamonja2020}.

\begin{example}
For any wqos $A,B$, $A \sqcup B \leqaug A + B$, and $A \times B \leqaug A \cdot B$ 
\end{example}

Observe that, for two wqos $A$ and $B$, if $A \leqstruct B$, then $\Inc(A)$,$\Dec(A)$, and $\Bad(A)$ are respectively subtrees of $\Inc(B)$,$\Dec(B)$, and $\Bad(B)$. Similarly, if $A \geqaug B$, then all antichains or bad sequences of $A$ are antichains or bad sequences of $B$. However, strictly decreasing sequences of $B$ are strictly decreasing in $A$ whenever $A$ and $B$ are partial orders. Therefore:
\begin{lemma}
\label{lem-leq-aug}
For all wqos $A,B$,
\begin{itemize}
\item If $A \leqstruct B$, then $*(A) \leq *(B)$ for $* = \w,\o,\h$.
\item If $A \geqaug B$, then $\*(A) \leq \*(B)$ for $* = \w,\o$. If furthermore $A, B$ are wpos, $\h(A) \geq \h(B)$.
\end{itemize}
\end{lemma}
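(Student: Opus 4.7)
The plan is to reduce both bullets to the general fact that, for well-founded trees ordered by initial segment, a prefix-closed subset has rank at most the rank of the ambient tree. This monotonicity is immediate by transfinite induction on rank. So for each invariant $\w, \o, \h$, it suffices to exhibit the relevant inclusion between the trees $\Inc, \Bad, \Dec$ of $A$ and $B$.

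For the first bullet, suppose $A \leqstruct B$, i.e., $A \subseteq B$ and $\leq_A$ is the restriction of $\leq_B$ to $A$. Then any sequence $(x_1,\dots,x_n)$ of elements of $A$ that is an antichain (resp.\ strictly decreasing, resp.\ bad) with respect to $\leq_A$ is also so with respect to $\leq_B$, since the relevant (in)comparabilities are literally the same. Hence $\Inc(A) \subseteq \Inc(B)$, $\Dec(A) \subseteq \Dec(B)$, and $\Bad(A) \subseteq \Bad(B)$ as prefix-closed subtrees, and the three inequalities follow by the monotonicity of rank.

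For the second bullet, suppose $A \geqaug B$, i.e., $A$ and $B$ have the same carrier and $\leq_B \subseteq \leq_A$. Two elements incomparable under $\leq_A$ are a fortiori incomparable under $\leq_B$, so every antichain of $A$ is an antichain of $B$; hence $\Inc(A) \subseteq \Inc(B)$ and $\w(A) \leq \w(B)$. Similarly, if $(x_1,\dots,x_n)$ is bad in $A$ then no pair $i < j$ satisfies $x_i \leq_A x_j$, and in particular none satisfies $x_i \leq_B x_j$, so it is bad in $B$; hence $\Bad(A) \subseteq \Bad(B)$ and $\o(A) \leq \o(B)$. For height under the extra assumption that $A$ and $B$ are wpos, a strictly decreasing step $x_{i+1} <_B x_i$ means $x_{i+1} \leq_B x_i$ with $x_{i+1} \neq x_i$; since $\leq_B \subseteq \leq_A$ we get $x_{i+1} \leq_A x_i$, and antisymmetry of $\leq_A$ together with $x_{i+1} \neq x_i$ upgrades this to $x_{i+1} <_A x_i$. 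Thus $\Dec(B) \subseteq \Dec(A)$, and the inequality $\h(A) \geq \h(B)$ follows.

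The only mildly delicate point is the height clause: the argument for strict decrease requires antisymmetry, which is exactly why the hypothesis is restricted to partial orders. Without it, an added pair in $\leq_A$ could collapse a strict $<_B$ step into a non-strict $\leq_A$-step, breaking the embedding of $\Dec(B)$ into $\Dec(A)$.
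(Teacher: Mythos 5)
Your proof is correct and follows the same route as the paper: the paper's justification is precisely the observation that the relevant trees $\Inc$, $\Dec$, $\Bad$ embed as subtrees (substructure case) or that antichains and bad sequences transfer from $A$ to $B$ while strictly decreasing sequences transfer from $B$ to $A$ under antisymmetry (augmentation case), with rank monotonicity left implicit. You have merely spelled out the rank-monotonicity step and the antisymmetry argument for the height clause, both of which the paper takes for granted.
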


\subsection{Residual Characterization}

For a quasi-order $A$, $x \in A$, and a relation symbol $* \in \{\perp, <, >, \not\leq, \not\geq \}$, we define the \emph{$*$-residual of $A$ at $x$} as
$$A_{* x} = \{y \in A : y * x\} \;.$$
We can generalize this notion to subsets $Y\subseteq A$:
$$A_{* Y} = \bigcap_{x\in Y} A_{*x} \;.$$

If $Y=\emptyset$, $A_{*Y}=A$.
A residual can be seen as a substructure of $A$. Thus it is a wqo, with its own ordinal invariants, smaller than or equal to the ordinal invariants of $A$ (see Lemma \ref{lem-leq-aug}).

For instance, $\mathbb{N}_{<2} = \{0,1\}$ and $\mathbb{N}_{\perp 2} = \emptyset$.
In Figure \ref{fig-residu-Nat2}, you can see the residuals of $\mathbb{N}^2$ at $x=(4,6)$ in colors:
$\mathbb{N}^2_{<x}$ in blue, $\mathbb{N}^2_{>x}$ in green, and $\mathbb{N}^2_{\perp x}$ in red. The union of the red and blue parts is $\mathbb{N}^2_{\not\geq x}$, the union of the red and green parts is $\mathbb{N}^2_{\not\leq x}$.
\begin{figure}[ht]
\centering
\scalebox{0.6}{
\tikzfading[name=fade right, left color=transparent!100, right color=transparent!0]
\tikzfading[name=fade up, bottom color=transparent!100, top color=transparent!0]
\begin{tikzpicture}[rounded corners]
\draw[use as bounding box,white] (-0.32,-0.32) rectangle (10.32,10.32);


\clip (-0.32,-0.32) -- (-0.32,10.32) -- (10.32,10.32) -- (10.32,-0.32) -- cycle; 

\foreach \x [evaluate=\x as \xval] in {0,...,10}{
  \foreach \y in {0,...,10}{
    \ifnum \x<4
    {\ifnum \y<6
       \node[draw,circle,inner sep=2pt,fill,blue] at (\x,\y) {};
     \else
       \node[draw,circle,inner sep=2pt,fill,red] at (\x,\y) {};
    \fi}
    \fi
    \ifnum \x>4
    {\ifnum \y>5
      \node[draw,circle,inner sep=2pt,fill,green!50!black] at (\x,\y) {};
    \else
      \node[draw,circle,inner sep=2pt,fill,red] at (\x,\y) {};
    \fi}
    \fi
    \ifnum \x=4
      \ifnum \y<6
	\node[draw,circle,inner sep=2pt,fill,blue] at (\x,\y) {};
      \fi
      \ifnum \y>6
	\node[draw,circle,inner sep=2pt,fill,green!50!black] at (\x,\y) {};
      \fi
    \fi
    \ifnum \y=6
      \ifnum \x<4
	\node[draw,circle,inner sep=2pt,fill,blue] at (\x,\y) {};
      \fi
      \ifnum \x>4
	\node[draw,circle,inner sep=2pt,fill,green!50!black] at (\x,\y) {};
      \fi
    \fi
    \node at (\x+0.5,\y) {\large $<$};
    \node[rotate=90] at (\x,\y+0.5) {\large $<$};
  }
}

\node[fill=white,inner sep=0pt] at (4,6) {$\bm{4,6}$};

\draw[thick,green!50!black] (3.75,11) -- (3.75,6.75) -- (4.75,6.75) -- (4.75,5.75) -- (11,5.75);
\draw[thick,blue] (-0.25,-0.25) -- (-0.25,6.25) -- (3.25,6.25) -- (3.25,5.25) -- (4.25,5.25) -- (4.25,-0.25) -- cycle;
\draw[red,thick] (12,-0.25) -- (4.75,-0.25) -- (4.75,5.25) -- (12,5.25); 
\draw[red,thick] (-0.25,12.25) -- (-0.25,6.75) -- (3.25,6.75) -- (3.25,12.25); 

\fill [path fading=fade right,white] (6.0,-0.32) rectangle (10.32,10.32);
\fill [path fading=fade up,white] (-0.32,8.0) rectangle (10.32,10.32);

\end{tikzpicture}
}
\caption{$\Nat^2$: residuals by colors.}
\label{fig-residu-Nat2}
\end{figure}

\medskip
Residuals are essential in the computation of invariants, given:
\begin{align}
\w(A) &= \sup_{x\in A} \bigl(\w(A_{\perp x}) + 1\bigr) \\
\h(A) &= \sup_{x\in A} \bigl(\h(A_{< x}) + 1\bigr) \\
\o(A) &= \sup_{x\in A} \bigl(\o(A_{\not\geq x}) + 1\bigr) \label{eq-residual-o}
\end{align}

These formulas can be seen as a reformulation of tree rank computation (see Section 2.3. of \cite{dzamonja2020} and the references therein). 
We can use them to recursively compute the invariants of $A$: this is called the \emph{method of residuals}.

\begin{example}
For all $x\in \mathbb{N}^2$, $\mathbb{N}^2_{< x}$ is finite  and contains arbitrarily many elements (see Figure \ref{fig-residu-Nat2}). Therefore $\h(\mathbb{N}^2)= \om$.
\end{example}
\begin{remark}
In \cite{dejongh77}, where $\o(X)$ is defined as the maximal linearisation of $X$, Equation (\ref{eq-residual-o}) is an essential theorem since it allows to prove that this definition of the maximal order type is equivalent to its characterization as the rank of $\Bad(X)$.
\end{remark}

\subsection{State of the art}

The state of the art on ordinal invariants can be found in \cite{dzamonja2020}. Here are some useful results:

\begin{lemma}[Lemma 4.1 from \cite{dzamonja2020}]
\label{lem-DSS-sum}
Let $\{P_i : i < \alpha\}$ be an $\alpha$-indexed family of wqos.
Then $\sum_{i<\alpha} P_i$, the lexicographic sum along the ordinal $\alpha$ (see definition in \cite{dzamonja2020}), is a wqo, and $\w(\sum_{i<\alpha} P_i) = sup_{i<\alpha} \w(P_i)$.
\end{lemma}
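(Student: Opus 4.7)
The plan is to exploit the fact that in a lexicographic sum indexed by an ordinal, any two elements from distinct components are comparable: if $(i,x)$ and $(j,y)$ lie in $\sum_{i<\alpha} P_i$ with $i < j$, then $(i,x) < (j,y)$ regardless of $x$ and $y$. Hence every antichain of $\sum_{i<\alpha} P_i$ sits entirely within a single component $P_i$. This locality observation is the crux; everything else follows by bookkeeping.

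First I would use locality to verify that $\sum_{i<\alpha} P_i$ is a wqo. For the FAC property: any antichain lives in some $P_i$, hence is finite since $P_i$ is FAC. For well-foundedness: given a strictly decreasing sequence $((i_n, x_n))_{n \in \om}$, the first coordinates form a weakly decreasing sequence in $\alpha$, which must stabilize at some index $i$ from some rank $N$ onwards; past $N$ the $x_n$'s decrease strictly in the wqo $P_i$, a contradiction. So $\sum_{i<\alpha} P_i$ is well-founded, and thus a wqo.

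For the width computation, locality yields a natural isomorphism of rooted forests
\[
\Inc\!\left(\sum_{i<\alpha} P_i\right) \;\equiv\; \bigsqcup_{i<\alpha} \Inc(P_i),
\]
sending an antichain $((i,x_1),\ldots,(i,x_n))$ of the lex sum (necessarily concentrated in one component $P_i$) to $(x_1,\ldots,x_n) \in \Inc(P_i)$, and preserving the prefix relation in both directions. Since the rank of a node in a well-founded forest depends only on the subtree below it, the rank of a disjoint union is the supremum of the ranks of its pieces, yielding $\w(\sum_{i<\alpha} P_i) = \sup_{i<\alpha} \w(P_i)$.

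The main obstacle is essentially nonexistent here: the key observation about antichain locality is elementary, and the rest is definitional unpacking together with the trivial fact that rank distributes over disjoint union of forests. A small degenerate case worth noting is when $\alpha = 0$ or all $P_i$ are empty: both sides evaluate to $0$, consistent with $\sup \emptyset = 0$.
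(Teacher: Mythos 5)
Your proof is correct: the locality observation (every antichain of a lexicographic sum along an ordinal lies in a single component), the stabilization argument for well-foundedness, and the identification of $\Inc(\sum_{i<\alpha} P_i)$ with the disjoint union of the forests $\Inc(P_i)$ are all sound, and together they give exactly the stated formula. Note that the paper does not prove this lemma itself but imports it from the cited reference; your argument is the standard self-contained one and matches what that reference does in spirit.
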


\begin{lemma}[Lemma 4.2 from \cite{dzamonja2020}]
\label{lem-DSS-sqcup}
Let $A,B$ be wqos. Then $A\sqcup B$ is a wqo, and 
$\w(A \sqcup B ) = \w(A) \oplus \w(B)$, where $\oplus$ is the natural addition on ordinals.
\end{lemma}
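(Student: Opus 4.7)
The plan is to split the claim into the wqo part and the width formula. For the wqo part, I would note that any infinite sequence in $A \sqcup B$ has, by pigeonhole, an infinite subsequence lying entirely in $A$ or in $B$; the wqo property of that component forces the subsequence to be good, hence so is the original.

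For the width, the key observation is that, since elements of $A$ and $B$ are incomparable in the disjoint sum, a sequence $(z_1,\dots,z_n)$ is an antichain of $A \sqcup B$ if and only if its subsequence $s_A$ of $A$-elements is an antichain of $A$ and its subsequence $s_B$ of $B$-elements is an antichain of $B$. Hence the rank of this node in $\Inc(A \sqcup B)$ depends only on the pair $(s_A, s_B)$: its children come from either extending $s_A$ to a child in $\Inc(A)$ or extending $s_B$ to a child in $\Inc(B)$. After adjoining a virtual empty root to each of the forests $\Inc(A)$, $\Inc(B)$, and $\Inc(A \sqcup B)$, one obtains an identification with a ``shuffle tree'' whose nodes are pairs of antichains and whose moves act on one coordinate at a time; then $\w(A \sqcup B)$ is the rank of $(\emptyset, \emptyset)$ in this shuffle tree.

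It then suffices to prove the general rank formula: for any two well-founded trees $T$, $T'$, the rank of $(s, s')$ in the shuffle of $T$ and $T'$ equals $r_T(s) \oplus r_{T'}(s')$. I would do this by well-founded induction on $(s, s')$, combining the identity $(\gamma \oplus \delta) + 1 = (\gamma + 1) \oplus \delta$ (valid because $\gamma \oplus 1 = \gamma + 1$) with the inductive characterization of the Hessenberg sum: $\alpha \oplus \beta$ is the least ordinal strictly greater than every $\alpha' \oplus \beta$ with $\alpha' < \alpha$ and every $\alpha \oplus \beta'$ with $\beta' < \beta$. Instantiated at the pair of empty roots, this yields $\w(A \sqcup B) = \w(A) \oplus \w(B)$.

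The subtlety to avoid is the tempting but broken shortcut of applying the residual formula $\w(X) = \sup_x(\w(X_{\perp x}) + 1)$ and commuting the supremum with $\oplus$: natural addition is not continuous (for instance $\sup_{n<\omega}(n \oplus 1) = \omega \neq \omega + 1 = \omega \oplus 1$), so one really needs the successor characterization of $\oplus$ rather than its behaviour at limits to push the induction through.
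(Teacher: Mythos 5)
The paper does not prove this lemma; it imports it verbatim from \cite{dzamonja2020} (Lemma 4.2), so there is no in-paper argument to compare against. Your proof is correct and self-contained: the identification of $\Inc(A\sqcup B)$ with a shuffle of $\Inc(A)$ and $\Inc(B)$ is valid because every element of $A$ is incomparable to every element of $B$ in the disjoint sum, and your key lemma (rank of a shuffle node equals the natural sum of the component ranks) goes through by well-founded induction precisely because you invoke the recursive mex-style characterization of $\oplus$ rather than trying to commute $\sup$ with $\oplus$ --- the pitfall you correctly flag. This is essentially the standard argument behind the cited result (the residual equation $\w(A\sqcup B)=\sup_x \w((A\sqcup B)_{\perp x})+1$ with $(A\sqcup B)_{\perp x}=A_{\perp x}\sqcup B$ for $x\in A$ yields the same induction in different clothing), so nothing is missing.
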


From Lemma \ref{lem-DSS-sqcup} we can deduce the following result:

\begin{lemma}
\label{lem-times-Gamma}
Let $A$ be a wqo, and $n\in \mathbb{N}$. Then $\w(A\times \Gamma_n) = \w(A) \otimes n$, where $\otimes$ is the natural multiplication on ordinals.
\end{lemma}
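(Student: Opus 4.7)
The plan is to reduce the statement to Lemma~\ref{lem-DSS-sqcup} by exhibiting $A \times \Gamma_n$ as a disjoint sum of $n$ copies of $A$, and then to observe that for a fixed finite $n$, iterated natural addition of an ordinal coincides with natural multiplication by $n$.

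First I would unpack the definition: since $\Gamma_n$ is an antichain, two elements $(a,i)$ and $(a',j)$ of $A \times \Gamma_n$ are comparable iff $i = j$ and $a$ is comparable with $a'$ in $A$. So the obvious isomorphism
\[
A \times \Gamma_n \;\equiv\; \underbrace{A \sqcup A \sqcup \cdots \sqcup A}_{n \text{ times}}
\]
holds. This is the only structural step, and it requires no work beyond checking definitions.

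Next I would proceed by induction on $n$. The base case $n=0$ gives an empty wqo of width $0 = \w(A)\otimes 0$, and the base case $n=1$ is trivial. For the inductive step, Lemma~\ref{lem-DSS-sqcup} applied to $\bigsqcup_{i<n} A$ and $A$ yields
\[
\w\!\left(\bigsqcup_{i<n+1} A\right) \;=\; \w\!\left(\bigsqcup_{i<n} A\right) \oplus \w(A) \;=\; \underbrace{\w(A) \oplus \cdots \oplus \w(A)}_{n+1 \text{ times}}.
\]
So it remains to identify this $n$-fold natural sum with the natural product $\w(A) \otimes n$.

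The final step is a small ordinal-arithmetic remark: writing $\w(A)$ in Cantor normal form as $\om^{\beta_1} c_1 + \cdots + \om^{\beta_k} c_k$ with $\beta_1 > \cdots > \beta_k$ and $c_i \in \mathbb{N}^{>0}$, both $\w(A) \oplus \cdots \oplus \w(A)$ ($n$ times) and $\w(A) \otimes n$ equal $\om^{\beta_1}(c_1 n) + \cdots + \om^{\beta_k}(c_k n)$, since natural multiplication by a finite ordinal distributes coefficient-wise over the CNF and agrees with the iterated natural sum. I do not anticipate any real obstacle here; the whole argument is essentially a corollary of Lemma~\ref{lem-DSS-sqcup}, and the only thing to be careful about is the degenerate case $n=0$, which is handled by the convention that an empty disjoint sum has width $0$.
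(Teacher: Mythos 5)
Your proposal is correct and follows essentially the same route as the paper: the paper's proof also observes that $A \times \Gamma_n \equiv A \sqcup \dots \sqcup A$ ($n$ copies) and concludes $\w(A\times\Gamma_n) = \w(A)\oplus\dots\oplus\w(A) = \w(A)\otimes n$ via Lemma~\ref{lem-DSS-sqcup}. Your additional care with the induction on $n$, the $n=0$ case, and the Cantor-normal-form justification of $\bigoplus_{i<n}\w(A) = \w(A)\otimes n$ only makes explicit what the paper leaves implicit.
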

\begin{proof}
Observe that $A \times \Gamma_n \equiv A \sqcup \dots \sqcup A$ the disjoint sum of $n$ copies of $A$. Therefore $\w(A\times \Gamma_n) = \w(A)\oplus\dots\oplus \w(A) = \w(A) \otimes n$.
\end{proof}

The maximal order type and the height of a cartesian product $A\times B$ are functional in the maximal order type of $A$ and $B$, and their height respectively:

\begin{lemma}[Theorem 3.5 of \cite{dejongh77}]
\label{lem-product-o}
Let $A,B$ be wqos. Then $A\times B$ is a wqo, and 
$\o(A \times B ) = \o(A) \otimes \o(B)$.
\end{lemma}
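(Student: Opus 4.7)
The plan is to establish both inequalities using the de Jongh--Parikh characterization of $\o(P)$ as the maximum order type of any linearization of $P$ into an ordinal. That $A \times B$ is itself a wqo is a standard fact (products of two wqos are wqos). For the lower bound $\o(A \times B) \geq \o(A) \otimes \o(B)$, fix maximal linearizations $L_A \geqaug A$ and $L_B \geqaug B$ with $|L_A| = \o(A)$ and $|L_B| = \o(B)$. The componentwise product $L_A \times L_B$ augments $A \times B$, and the classical Hessenberg linearization of a product of two well-orders yields a total order on $L_A \times L_B$ of order type $|L_A| \otimes |L_B|$. This is in particular a linearization of $A \times B$ of order type $\o(A) \otimes \o(B)$, proving the bound.

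For the upper bound $\o(A \times B) \leq \o(A) \otimes \o(B)$, I would proceed by transfinite induction on $\o(A) \oplus \o(B)$ using the residual formula~(\ref{eq-residual-o}). For each $(a,b) \in A \times B$, note the set partition
\[
(A \times B)_{\not\geq (a,b)} = (A_{\not\geq a} \times B) \sqcup (A_{\geq a} \times B_{\not\geq b}).
\]
Equipped with the inherited componentwise order, the left-hand side is an augmentation of the disjoint-sum wqo on the right. By Lemma~\ref{lem-leq-aug} and the standard $\o$-analog of Lemma~\ref{lem-DSS-sqcup},
\[
\o\bigl((A \times B)_{\not\geq (a,b)}\bigr) \leq \o(A_{\not\geq a} \times B) \oplus \o(A_{\geq a} \times B_{\not\geq b}).
\]
Since $\o(A_{\not\geq a}) < \o(A)$ and $\o(B_{\not\geq b}) < \o(B)$, each product satisfies the inductive hypothesis, yielding the estimate $\o(A_{\not\geq a}) \otimes \o(B) \oplus \o(A_{\geq a}) \otimes \o(B_{\not\geq b})$.

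The main obstacle is then to show that this estimate plus one is at most $\o(A) \otimes \o(B)$. A key auxiliary bound is $\o(A_{\not\geq a}) + \o(A_{\geq a}) \leq \o(A)$ (ordinal sum), which holds because the lex sum $A_{\not\geq a} + A_{\geq a}$ augments $A$ and its $\o$ equals this ordinal sum. The required Hessenberg-product inequality does not follow from naïve monotonicity, since $\o(A_{\geq a})$ may equal $\o(A)$ (for instance when $\o(A_{\not\geq a}) = 0$), and its proof calls for a careful Cantor-normal-form analysis combining the lex-sum bound with strict monotonicity of $\otimes$ in each argument. Handling this identity cleanly is the crux of the argument.
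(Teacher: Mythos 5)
The paper does not actually prove this lemma: it is imported as Theorem~3.5 of \cite{dejongh77} and used as a black box, so there is no in-paper argument to compare yours against. Judged on its own merits, your lower bound is the standard one and is fine: $L_A\times L_B\geqaug A\times B$, and the classical fact (Carruth, de Jongh--Parikh) that the poset $\alpha\times\beta$ of two ordinals admits a linear extension of order type $\alpha\otimes\beta$ gives a linearization of $A\times B$ of that type.

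The upper bound, however, has a genuine gap, and it is not merely an omitted verification: the ``crux'' inequality you defer is false. Take $A=\om$, $B=\om+1$, $a=1$, and $b=\om$ (the top element of $B$). Then $\o(A_{\not\geq a})=1$, $\o(A_{\geq a})=\om$, $\o(B_{\not\geq b})=\om$, and all of your auxiliary facts hold ($1+\om\leq\om$, $1<\om$, $\om<\om+1$); yet
\[
\o(A_{\not\geq a})\otimes\o(B)\;\oplus\;\o(A_{\geq a})\otimes\o(B_{\not\geq b})
=(\om+1)\oplus\om^2=\om^2+\om+1\;>\;\om^2+\om=\om\otimes(\om+1)=\o(A)\otimes\o(B).
\]
So the estimate your decomposition yields for the residual $(A\times B)_{\not\geq(a,b)}$ already exceeds the target, and the induction cannot close. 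The underlying problem is that the partition into $(A_{\not\geq a}\times B)\sqcup(A_{\geq a}\times B_{\not\geq b})$ discards all comparabilities between the two blocks, and the resulting $\oplus$ bound is genuinely too weak (the true $\o$ of the residual in this example is below $\om^2+\om$, but your chain of inequalities cannot see that). The symmetric partition $(A\times B_{\not\geq b})\sqcup(A_{\not\geq a}\times B_{\geq b})$ repairs this example but fails on its mirror image, so no fixed choice of two-block decomposition works. This upper bound is the hard half of de Jongh and Parikh's theorem; their proof proceeds by a finer analysis of maximal linearizations and Cantor normal forms rather than by a disjoint-sum bound on residuals, and I see no way to rescue your inductive step without importing an argument of that kind.
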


\begin{lemma}[Lemma 4.6 of \cite{dzamonja2020}, see \cite{abraham87} for the proof]
\label{lem-product-h}
Let $A,B$ be wqos. Then 
$\h(A \times B ) = \sup\{\alpha \oplus\beta + 1 \;|\; \alpha < \h(A),  \beta < \h(B)\}$.
\end{lemma}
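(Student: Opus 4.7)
The plan is to reduce the statement to a classical formula for the rank of an element in a product of well-founded posets. For any wqo $P$, the residual equation $\h(P)=\sup_{x\in P}\bigl(\h(P_{<x})+1\bigr)$ invites us to define a pointwise rank $\mathrm{rk}_P(x) \eqdef \h(P_{<x})$, yielding $\h(P)=\sup_{x\in P}\bigl(\mathrm{rk}_P(x)+1\bigr)$ together with the cofinality statement: for every $\alpha<\h(P)$ there exists $x\in P$ with $\mathrm{rk}_P(x)\ge \alpha$. Since only strict inequalities matter, one may pass to the quotient poset and assume $A$ and $B$ are partial orders.

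The core claim is then, for every $(a,b)\in A\times B$,
\[
\mathrm{rk}_{A\times B}(a,b)=\mathrm{rk}_A(a)\oplus\mathrm{rk}_B(b).
\]
I would prove this by well-founded induction on $(a,b)$ in the product order, leveraging the standard recursive characterization
\[
\alpha\oplus\beta=\sup\bigl(\{\alpha'\oplus\beta+1:\alpha'<\alpha\}\cup\{\alpha\oplus\beta'+1:\beta'<\beta\}\bigr).
\]
For the inequality $\mathrm{rk}(a,b)\ge\mathrm{rk}(a)\oplus\mathrm{rk}(b)$, any $\alpha'<\mathrm{rk}_A(a)$ is bounded by $\mathrm{rk}_A(a')$ for some $a'<a$, so by induction and monotonicity of $\oplus$ we get $\alpha'\oplus\mathrm{rk}_B(b)+1\le \mathrm{rk}(a',b)+1\le \mathrm{rk}(a,b)$, and symmetrically on the other coordinate. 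For the reverse inequality, any $(a',b')<(a,b)$ satisfies $a'<a$ or $b'<b$; in each case the induction hypothesis, monotonicity, and the identity $\gamma\oplus(\delta+1)=(\gamma\oplus\delta)+1$ give $\mathrm{rk}(a')\oplus\mathrm{rk}(b')+1\le\mathrm{rk}(a)\oplus\mathrm{rk}(b)$.

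Taking suprema yields $\h(A\times B)=\sup_{(a,b)}\bigl(\mathrm{rk}_A(a)\oplus\mathrm{rk}_B(b)+1\bigr)$, and the cofinality of the rank sets in $\h(A)$ and $\h(B)$, combined with monotonicity of $\oplus$, delivers the announced $\sup\{\alpha\oplus\beta+1:\alpha<\h(A),\beta<\h(B)\}$. The main obstacle I expect is the bookkeeping in the upper-bound step of the induction: the product order permits $(a',b')<(a,b)$ with both coordinates strictly smaller, while the natural-sum recursion only decrements one coordinate at a time, so the doubly-strict cases must be routed through an intermediate one-step decrement via $\gamma\oplus(\delta+1)=(\gamma\oplus\delta)+1$. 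This interleaving is precisely what forces the \emph{natural} sum, rather than the ordinary ordinal sum, to appear in the formula.
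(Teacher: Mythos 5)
The paper gives no proof of this lemma at all---it is imported verbatim as Lemma 4.6 of \cite{dzamonja2020} with the proof attributed to \cite{abraham87}---so there is nothing in-paper to compare against, and your argument is essentially the standard one from those references. It is correct: the pointwise identity $\mathrm{rk}_{A\times B}(a,b)=\mathrm{rk}_A(a)\oplus\mathrm{rk}_B(b)$ for the well-founded rank $\mathrm{rk}_P(x)=\h(P_{<x})$, proved by induction from the recursion $\alpha\oplus\beta=\sup\bigl(\{\alpha'\oplus\beta+1:\alpha'<\alpha\}\cup\{\alpha\oplus\beta'+1:\beta'<\beta\}\bigr)$ together with cofinality of the realized ranks in $\h(A)$ and $\h(B)$, yields the formula, and your routing of the doubly-strict case through $\gamma\oplus(\delta+1)=(\gamma\oplus\delta)+1$ is exactly the step that makes the upper bound go through and forces the natural sum to appear.
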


However, the width of $A\times B$ is not functional in any of the invariants of $A$ and $B$:
\begin{example}
Let $H=\sum_{n<\om}\Gamma_n$. Observe that $\w(H)=\o(H)=\h(H)=\om$. Let $A_1 = H + H$ and $A_2= H + \om$ (see Figure \ref{fig-H}). With the method of residuals, one can see that $A_1$ and $A_2$ have the same invariants: $\w(A_1)=\w(A_2)=\om$, and $\o(A_1)=\o(A_2)=\h(A_1)=\h(A_2)=\om\cdot 2$. However $\w(A_1\times \om)= \om^2 \cdot 2 \neq \w(A_2 \times \om)= \om^2 + \om$.

\begin{figure}
\centering
    \scalebox{.8}{\begin{tikzpicture}[rounded corners]
\begin{centering}

\scalebox{1}{


\foreach \y [evaluate=\y as \yval] in {0,...,8}{
  \foreach \x in {0,...,\yval}{

    \node[draw,circle,inner sep=1pt,fill,color=red] at (\x/2-\yval/4+2,\y/2) {};
    \node[draw,circle,inner sep=1pt,fill,color=blue] at (\x/2 -\yval/4 +8,\y/2) {};
    \node[draw,circle,inner sep=1pt,fill,color=red] at (\x/2 -\yval/4+2,\y/2+4.5) {};
    
  }
}

\foreach \y in {0,...,8}{

    \node[draw,circle,inner sep=1pt,fill,color=blue] at (8,4.5+\y/2) {};

}

\fill [path fading=fade up,white] (-1, .5) rectangle (11,4.4);
\fill [path fading=fade up,white] (-1, 5) rectangle (11,9);

\node [color=red] at (-1,4.5){$ A_1\;:$};
\node [color=blue] at (5,4.5){$A_2\;:$};
\node at (2,7){$H$};
\node at (2,2.5){$H$};
\node at (8,2.5){$H$};
\node at (7.7,7){$\om$};

}

\end{centering}
\end{tikzpicture}}

    \caption{Two wqos $A_1$ and $A_2$ with the same invariants such that $\w(A_1\times\om)\neq\w(A_2\times \om)$.}
\label{fig-H}
\end{figure}
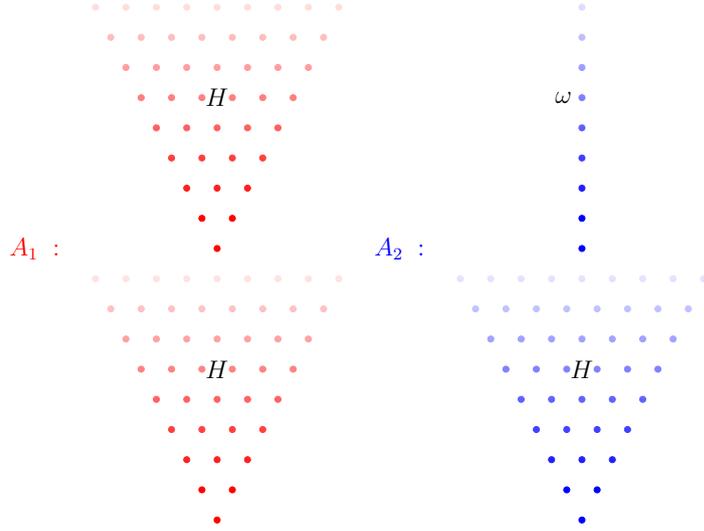

\end{example}

\subsection{Cartesian product of two ordinals}
\label{sec-abraham}

Abraham (\cite{abraham87}) used the method of residuals to compute the width of the cartesian product of two ordinals. Let us recall the main steps of his proof:

Let $\alpha,\beta$ be two ordinals. According to the method of residuals, 
\begin{align}
\label{residual-two-ordinals}
w(\alpha\times\beta) = \sup_{(x_1,x_2)\in \alpha\times\beta} \w((\alpha\times\beta)_{\perp (x_1,x_2)}) + 1\;.
\end{align}

Fix $(x_1,x_2) \in \alpha\times\beta$. Then for any $(y_1,y_2)\in\alpha\times\beta$, $(x_1,x_2)\perp(y_1,y_2)$ iff $x_1<y_1$ and $x_2>y_2$, or $x_1>y_1$ and $x_2<y_2$. Thus the residual $(\alpha\times\beta)_{\perp (x_1,x_2)}$ is a disjoint union (see Figure \ref{fig-res-2}): $$(\alpha\times\beta)_{\perp (x_1,x_2)} = \alpha_{<x_1}\times \beta_{>x_2} \;\sqcup\; \alpha_{>x_1}\times \beta_{<x_2}\;.$$

  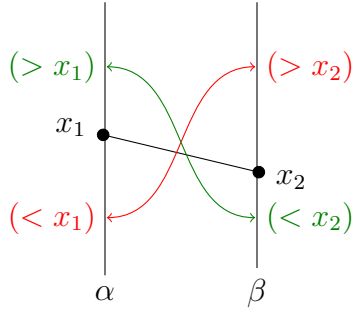
\begin{figure}[h]
            
               \begin{centering}
                \begin{tikzpicture}
                    \node (S) at (0,0){};
                    \node (S') at (2,0){};
                    \node (D) at (0,-4){$\alpha$};
                    \node (D') at (2,-4){$\beta$};
                    \node (X) at (-.45,-1.8){$x_1$};
                    \node (Y) at (2.45,-2.5){$x_2$};
                    \node (>X)[green!50!black] at (-.7,-1){$(> x_1)$};
                    \node (<X)[red] at (-.7,-3){$(< x_1)$};
                    \node (>Y)[red] at (2.7,-1){$(> x_2)$};
                    \node (<Y)[green!50!black] at (2.7,-3){$(< x_2)$};
                    \path[-] (S) edge node {} (D);
                    \path[-] (S') edge node {} (D');
                    \path[*-*] (X) edge node {} (Y);
                    \path[<->][red] (<X) edge [in = 180, out=0] node {} (>Y);
                    \path[<->][green!50!black] (>X) edge [in = 180, out=0] node {} (<Y);
               \end{tikzpicture}

               \end{centering}
               \caption{Residual of $\alpha\times\beta$ at $(x_1,x_2)$ as a disjoint union.
            }
            \label{fig-res-2}
        \end{figure}

 Observe that $\alpha_{<x_1}$ is isomorphic to $x_1$, and $\alpha_{>x_1}$ to $\alpha - (x_1 +1)$. The same reasoning applies to $\beta_{<x_2}$ and $\beta_{>x_2}$. Using
 Lemma \ref{lem-DSS-sqcup} we can rewrite Equation (\ref{residual-two-ordinals}) as:  $$\w(\alpha\times\beta) = \sup_{x_1\in{\alpha}\atop x_2\in\beta} \bigl(\w(x_1\times(\beta -x_2)) \oplus \w((\alpha - x_1) \times x_2)\bigr) + 1\;.$$

This is how $\w(\alpha\times\beta)$ is computed in \cite{abraham87}. Here are the main results (the successor and limit cases are dealt with separately):

\begin{lemma}[Lemma 3.2 of \cite{abraham87}]
If $\alpha$ is infinite, then $\w(\alpha\times(\beta+ 1)) = \w(\alpha\times \beta) + 1$. (If $\alpha$ is finite then
the equality holds iff $\alpha \geq \beta + 1$).
\end{lemma}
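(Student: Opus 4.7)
The plan is to compute $\w(\alpha\times(\beta+1))$ via the residual formula
\[
\w(\alpha\times(\beta+1)) \;=\; \sup_{(x_1,x_2)\in\alpha\times(\beta+1)}\bigl(\w((\alpha\times(\beta+1))_{\perp (x_1,x_2)})+1\bigr),
\]
bounding the supremum from above by transfinite induction on $\alpha$ and from below by evaluating at a single well-chosen witness. The residual at $(x_1,x_2)$ decomposes as $\alpha_{>x_1}\times (\beta+1)_{<x_2}\sqcup x_1\times (\beta+1)_{>x_2}$, and the key observation is that $(\beta+1)_{<x_2}\equiv x_2$ while $(\beta+1)_{>x_2}\equiv \beta_{>x_2}+1$ (the element $\beta$ being added on top of $\beta_{>x_2}$).

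For the \emph{upper bound} $\w(\alpha\times(\beta+1))\leq\w(\alpha\times\beta)+1$, I proceed by induction on $\alpha$ and reduce to showing $\w(R_{(x_1,x_2)})\leq\w(\alpha\times\beta)$ for every $(x_1,x_2)$. If $x_2=\beta$, then $R_{(x_1,x_2)}=\alpha_{>x_1}\times\beta$ is a substructure of $\alpha\times\beta$ and Lemma~\ref{lem-leq-aug} yields the bound. If $x_2<\beta$, set $\gamma=\beta_{>x_2}$ and write the residual as $x_1\times(\gamma+1)\sqcup\alpha_{>x_1}\times x_2$. The induction hypothesis applied at $x_1<\alpha$ (with $\gamma$ in place of $\beta$) gives $\w(x_1\times(\gamma+1))\leq\w(x_1\times\gamma)+1$. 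Combining with Lemma~\ref{lem-DSS-sqcup} and the identity $(\mu+1)\oplus\nu=(\mu\oplus\nu)+1$ yields $\w(R_{(x_1,x_2)})\leq\w((\alpha\times\beta)_{\perp(x_1,x_2)})+1$, which is capped at $\w(\alpha\times\beta)$ by the residual formula for $\w(\alpha\times\beta)$ itself.

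For the \emph{lower bound}, I evaluate the residual at the point $(0,\beta)$: since no element can strictly dominate $\beta$ in the second coordinate, $R_{(0,\beta)}=\alpha_{>0}\times\beta$. When $\alpha$ is infinite, the identity $1+\alpha=\alpha$ forces $\alpha_{>0}\equiv\alpha$, so $R_{(0,\beta)}\equiv\alpha\times\beta$ and $\w(\alpha\times(\beta+1))\geq\w(\alpha\times\beta)+1$, yielding equality in the infinite case. When $\alpha$ is finite with $\alpha\geq\beta+1$ (forcing $\beta$ finite), $R_{(0,\beta)}\equiv(\alpha-1)\times\beta$ has width $\min(\alpha-1,\beta)=\beta=\w(\alpha\times\beta)$, so the same $+1$ bound holds.

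The main obstacle is choosing a legitimate induction parameter: induction on $\beta$ fails because $\beta_{>x_2}$ can still equal $\beta$ when $\beta$ is infinite and $x_2$ is absorbed by it, whereas induction on $\alpha$ succeeds precisely because $x_1$ is \emph{strictly} below $\alpha$. For the remaining finite case $\alpha<\beta+1$ (so $\alpha\leq\beta$), the standard computation for finite products (Section~\ref{section-finite}) gives $\w(\alpha\times(\beta+1))=\alpha=\w(\alpha\times\beta)$, so no $+1$ increment occurs and the stated characterization of the finite case follows.
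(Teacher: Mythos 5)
The paper does not actually prove this lemma: it is imported verbatim as Lemma~3.2 of \cite{abraham87}, and only the general shape of Abraham's argument (the method of residuals for $n=2$) is sketched in the surrounding text. Your proof is correct and follows precisely that residual-based route: the decomposition of the residual of $\alpha\times(\beta+1)$ at $(x_1,x_2)$ into $x_1\times(\beta+1)_{>x_2}\;\sqcup\;\alpha_{>x_1}\times x_2$ is right, the induction on $\alpha$ is well-founded (you only need the inequality $\w(x_1\times(\gamma+1))\leq\w(x_1\times\gamma)+1$ from the hypothesis, and that direction holds unconditionally, so the conditional clause in the finite case does not obstruct the induction), and the witness $(0,\beta)$ together with $\alpha_{>0}\equiv\alpha$ for infinite $\alpha$ gives the matching lower bound. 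Two small remarks. First, the upper bound needs no induction at all: as posets on the same underlying set, $\alpha\times(\beta+1)\geqaug(\alpha\times\beta)\sqcup(\alpha\times\{\beta\})$, so Lemmas~\ref{lem-leq-aug} and~\ref{lem-DSS-sqcup} give $\w(\alpha\times(\beta+1))\leq\w(\alpha\times\beta)\oplus 1=\w(\alpha\times\beta)+1$ in one line. Second, in your last paragraph the appeal to Section~\ref{section-finite} is slightly misplaced, since for finite $\alpha\leq\beta$ the ordinal $\beta$ may be infinite and $\alpha\times(\beta+1)$ is then not a finite poset; the fact you actually need is that every antichain of $\alpha\times\gamma$ has at most $\alpha$ elements when $\alpha$ is finite (two elements sharing a first coordinate are comparable), so every branch of $\Inc(\alpha\times\gamma)$ has length at most $\alpha$ and $\w(\alpha\times\gamma)=\alpha$ whenever $\gamma\geq\alpha$. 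Neither point affects correctness.
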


Any ordinal can be written in Cantor normal form as $\alpha = \om^{\alpha'}\cdot a \nfp \rho$, where
the symbol $\nfp$ is used to point out a $+$ that could be replaced by a $\oplus$. Similarly $\alpha$ can be written in Cantor normal form without multiplicities $\alpha = \sum_{i \in [0,l]} \om^{\alpha_i}$, with $\alpha_0\geq\dots\geq\alpha_l$. We will use the latter in Section \ref{section-infinite}.

\begin{theorem}[Theorem 3.4 of \cite{abraham87}]
\label{thm-abraham-limit}
For any ordinals $\alpha = \om^{\alpha'}\cdot a \nfp \rho$ and $\beta= \om^{\beta'}\cdot b \nfp \sigma$ ,
$$\w(\om\alpha\times\om\beta)=\om\om^{\alpha'\oplus\beta'} \nfp [\w(\om\om^{\alpha'}\times\om\sigma) \oplus \w(\om\om^{\beta'}\times\om\rho)]\;.$$

\end{theorem}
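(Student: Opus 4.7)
The plan is to apply the residual characterization of $\w$ to $\omega\alpha\times\omega\beta$. Starting from the identity
$$\w(\omega\alpha\times\omega\beta)=\sup_{(x_1,x_2)}\bigl(\w(x_1\times(\omega\beta-x_2))\oplus\w((\omega\alpha-x_1)\times x_2)\bigr)+1$$
derived in the excerpt via Lemma \ref{lem-DSS-sqcup} and the two-cone decomposition of Figure \ref{fig-res-2}, I would write $\omega\alpha=\omega^{\alpha'+1}\cdot a+\omega\rho$ and $\omega\beta=\omega^{\beta'+1}\cdot b+\omega\sigma$ in Cantor normal form and partition the supremum according to which CNF block contains each coordinate. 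Since $\omega\rho<\omega^{\alpha'+1}$ and $\omega\sigma<\omega^{\beta'+1}$, the products $\omega\omega^{\alpha'}\times\omega\sigma$ and $\omega\omega^{\beta'}\times\omega\rho$ on the right-hand side have strictly smaller CNF rank than $\omega\alpha\times\omega\beta$, so the whole argument can be set up as an induction on the pair $(\alpha,\beta)$, with the degenerate case $a=b=1$, $\rho=\sigma=0$ serving as the base: there the formula collapses to $\w(\omega^{\alpha'+1}\times\omega^{\beta'+1})=\omega^{\alpha'\oplus\beta'+1}$, which I would verify directly from diagonal antichains inside indecomposable factors.

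For the lower bound I would exhibit an explicit family of pivots that achieves the claimed value. Take $x_1=\omega^{\alpha'+1}\cdot(a-1)+\xi$ with $\xi<\omega^{\alpha'+1}$ and $x_2=\omega^{\beta'+1}\cdot(b-1)+\eta$ with $\eta<\omega^{\beta'+1}$. Then $\omega\alpha-x_1$ is order-isomorphic to $(\omega^{\alpha'+1}-\xi)+\omega\rho$, which for $\xi$ close to $\omega^{\alpha'+1}$ approaches $\omega\omega^{\alpha'}+\omega\rho$; symmetric statements hold for the other rectangle and for $\omega\beta-x_2$. Applying Lemma \ref{lem-DSS-sqcup} to the resulting disjoint union, the rectangles contribute widths approaching $\w(\omega\omega^{\alpha'}\times\omega\sigma)$ and $\w(\omega\omega^{\beta'}\times\omega\rho)$ respectively, on top of the bulk contribution $\omega\omega^{\alpha'\oplus\beta'}$ coming from the product of ``principal'' $\omega^{\alpha'+1}\times\omega^{\beta'+1}$ blocks (itself an instance of the inductive base case). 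Their natural sum plus one converges to
$$\omega\omega^{\alpha'\oplus\beta'}\nfp\bigl[\w(\omega\omega^{\alpha'}\times\omega\sigma)\oplus\w(\omega\omega^{\beta'}\times\omega\rho)\bigr].$$

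For the upper bound I would do a case analysis on the CNF block containing each pivot coordinate: either $x_1$ lies in some copy $\omega^{\alpha'+1}\cdot k$ with $k<a$, or in the tail $\omega\rho$, and symmetrically for $x_2$. In each case both rectangles decompose via Lemma \ref{lem-DSS-sqcup} into cartesian products of strictly simpler CNFs to which the induction hypothesis applies; successor-boundary pivots are absorbed using the Lemma 3.2 increment already recalled in the excerpt.

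The main obstacle is the upper bound: one must verify that no alternative pivot beats the family chosen above. This requires careful ordinal arithmetic to control how the natural sums $\oplus$ across cuts interact with the CNF-concatenation $\nfp$, so that the contributions telescope exactly onto the stated expression rather than merging into a strictly larger ordinal. Handling pivots that straddle the frontier between the principal part $\omega^{\alpha'+1}\cdot a$ and the tail $\omega\rho$ of $\omega\alpha$ (and symmetrically for $\omega\beta$) is where the bookkeeping of \cite{abraham87} concentrates.
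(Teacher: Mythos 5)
The paper does not actually prove this statement---it is imported verbatim as Theorem~3.4 of \cite{abraham87}, and the surrounding text only recalls the residual identity $\w(\alpha\times\beta)=\sup_{x_1,x_2}\bigl(\w(x_1\times(\beta-x_2))\oplus\w((\alpha-x_1)\times x_2)\bigr)+1$ on which Abraham's argument rests. Your outline follows that same residual-based route, so in spirit it matches the method the paper attributes to Abraham; but as a proof proposal it has a genuine gap at its centre. You declare the case $a=b=1$, $\rho=\sigma=0$ to be ``the base'' of an induction on the pair $(\alpha,\beta)$ and claim it can be ``verified directly from diagonal antichains.'' That case is $\w(\om\om^{\alpha'}\times\om\om^{\beta'})=\om\om^{\alpha'\oplus\beta'}$, i.e.\ the full product-of-two-indecomposables theorem (Lemma~\ref{lem-eta-2} here), and it is not a base case of your induction at all: $\alpha=\om^{\alpha'}$ can be arbitrarily large in the ordering on $(\alpha,\beta)$, and when $\rho=\sigma=0$ the right-hand side of the theorem reduces to exactly this quantity, so nothing has been reduced. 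Diagonal antichains give $\w(\om\times\om)\geq\om$ but nothing like $\w(\om^{\om}\times\om^{\om})\geq\om^{\om\cdot2}$; the lower bound there requires a transfinite construction of its own (a separate induction on $\alpha'\oplus\beta'$ in \cite{abraham87}, or, in the language of this paper, the transferability argument of Lemma~\ref{lbiftransferable}). Deferring the hard kernel of the theorem to an unproved ``degenerate case'' is the missing idea.

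Two further points. First, you repeatedly write $\om^{\alpha'+1}$ where the correct normal form of $\om\cdot\om^{\alpha'}$ is $\om^{1+\alpha'}$; these differ whenever $\alpha'$ is infinite, and since the exponent arithmetic ($\eta=1+(\alpha'-1)\oplus(\beta'-1)$ after the change of variables) is exactly where the formula lives, this is not a cosmetic slip. Second, you concede that the upper bound---showing no pivot beats your chosen family, and that the $\oplus$ across the two rectangles telescopes onto $\om\om^{\alpha'\oplus\beta'}\nfp[\cdots]$ rather than something larger---is ``the main obstacle,'' and you do not resolve it; that case analysis over pivots straddling the boundary between $\om^{1+\alpha'}\cdot a$ and $\om\rho$ is precisely the content of Abraham's proof, so the proposal stops short of the theorem.
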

With a simple change of variables, this becomes $$\w(\alpha\times \beta) =
\om^{\eta} \nfp [\w(\om^{\alpha'}\times\sigma) \oplus \w(\om^{\beta'}\times\rho)]$$ if $\alpha,\beta$ are limit ordinals, with $\eta = 1+(\alpha'-1)\oplus(\beta'-1)$.

\bigskip

Now let us try to use the method of residuals for the product of $n\geq 2$ ordinals $\alpha_1,\dots,\alpha_n$. Let $X = \alpha_1\times\dots\times\alpha_n$ and $x,y \in X$. Then $x=(x_1,\dots,x_n)\perp y=(y_1,\dots,y_n)$ iff there exist $i,j \in [1,n]$ such that $x_i < y_i$ and $x_j > y_j$. With some work, we can express the residual $X_{\perp x}$ as the union of subsets of the form $\underset{i\in I_1}{\bm{\times}} (<x_i) \times \underset{i\in I_2}{\bm{\times}} (>x_i) \underset{i\not\in I_1\cup I_2}{\bm{\times}} \{x_i\} $ with $I_1,I_2\subseteq [1,n]$ disjoint and non-empty.

 However, unlike the case $n=2$, this union of subsets is not a disjoint union of wqos. Take for instance $n=3$ and the subsets $(>x_1)\times (>x_2) \times (<x_3)$ and $(>x_1)\times (<x_2) \times (<x_3)$ (see Fig. \ref{fig-residual-fail-3}): they can have comparable elements. We could say that the residual is an augmentation of a disjoint union, but this method can only give us an upper bound on $\w(X)$.
 
 \begin{figure}[h]

               \begin{centering}
                \begin{tikzpicture}
                    \node (S) at (0,0){};
                    \node (S') at (2,0){};
                    \node (S'') at (4,0){};
                    \node (D) at (0,-4){$\alpha_1$};
                    \node (D') at (2,-4){$\alpha_2$};
                    \node (D'') at (4,-4){$\alpha_3$};
                    \node (X) at (-.45,-1.8){$x_1$};
                    \node (Y) at (2.25,-2.5){};
                    \node (Yl) at (2.4,-2.1){$x_2$};
                    \node (Y') at (1.8,-2.45){};
                    \node (Z) at (4.45,-2){$x_3$};
                    \node (>Y)[red] at (3,-0.8){$(>x_1)\times(> x_2)\times(<x_3)$};
                    \node (>Y)[green!50!black] at (1,-3.5){$(>x_1)\times(< x_2)\times(<x_3)$};
                    \path[-] (S) edge node {} (D);
                    \path[-] (S') edge node {} (D');
                    \path[-] (S'') edge node {} (D'');
                    \path[*-*] (X) edge node {} (Y);
                    \path[-*] (Y') edge node {} (Z);
                    \draw[<->][red] (0, -1) .. controls (4, -1) and (3,-2.5) .. (4, -3) ;
                    \draw[<->][green!50!black] (0, -1.2) .. controls (1, -1.2) and (0,-3.2) .. (4, -3.2) ;
               \end{tikzpicture}

               \end{centering}
               \caption{Two parts of the residual of $\alpha_1\times\alpha_2\times \alpha_3$ at $(x_1,x_2,x_3)$ that have comparable elements.
            }
            \label{fig-residual-fail-3}
        \end{figure}
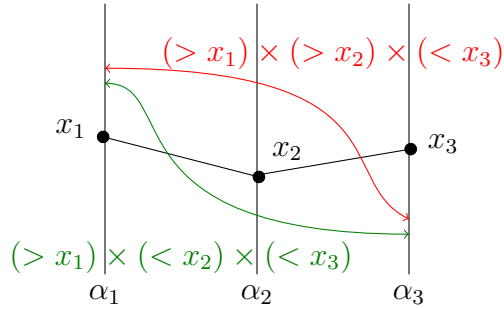
 
We need a method that will allow us to look further than the first element of the antichain. For this we extend the method of residuals with a game-theoretical point of view.

\subsection{Games}

\begin{definition}
$G_{X,\alpha}$ is a game for two players, let's call them Antoine (for antichain) and Odile (for ordinal), with the following rules:
\begin{itemize}
\item Each configuration of the game is a pair $(Y,\gamma)$ with $Y$ an antichain of $X$, and $\gamma\leq\alpha$.
\item The game begins in $(\emptyset,\alpha)$. Either Odile or Antoine begins.
\item At Odile's turn, she moves from configuration $(Y,\gamma)$ to $(Y,\gamma')$, with $\gamma'<\gamma$.
\item At Antoine's turn, he moves from configuration $(Y,\gamma)$ to $(Y \cup \{x\},\gamma)$, with $x \in X_{\perp Y}$.
\item The first player who cannot play loses.
\end{itemize} 
\end{definition}

This game is a specific case of the games defined in \cite{blass2008} and \cite{dzamonja2020}.
Since $X$ is FAC and $\alpha$ is WF, the players cannot play forever, so the game terminates.

\begin{lemma}[\cite{blass2008},\cite{dzamonja2020}]

$\w(X) \leq \alpha$ iff Odile has a winning strategy when Antoine begins. $\w(X) \geq \alpha$ iff
Antoine has a winning strategy when Odile begins.

\end{lemma}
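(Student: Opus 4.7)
The plan is to strengthen the claim into a position-dependent invariant covering every state of the game, not only the initial one, and then to specialize back to $(\emptyset,\alpha)$. Concretely, I want to prove: for every game configuration $(Y,\gamma)$, if it is Antoine's turn then Odile has a winning strategy iff $\w(X_{\perp Y})\leq\gamma$, and if it is Odile's turn then Odile has a winning strategy iff $\w(X_{\perp Y})<\gamma$. Since $X_{\perp\emptyset}=X$, the initial position $(\emptyset,\alpha)$ with Antoine to move yields ``Odile wins iff $\w(X)\leq\alpha$''; with Odile to move it yields ``Odile wins iff $\w(X)<\alpha$'', and because the game tree is well-founded it is determined, so ``Antoine wins iff $\w(X)\geq\alpha$''. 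That is exactly the lemma.

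The invariant is then proved by well-founded induction on the game tree, whose well-foundedness is exactly the termination argument recalled in the preamble ($X$ is FAC, $\alpha$ is WF). The base cases are immediate: if $X_{\perp Y}=\emptyset$ then Antoine has no legal move and loses, which matches $\w(X_{\perp Y})=0\leq\gamma$; if $\gamma=0$ then Odile has no legal move and loses, which matches the falsity of $\w(X_{\perp Y})<0$.

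For the inductive step at an Antoine-turn position $(Y,\gamma)$, any legal move picks $x\in X_{\perp Y}$ and leads to the Odile-turn position $(Y\cup\{x\},\gamma)$, where $X_{\perp(Y\cup\{x\})}=(X_{\perp Y})_{\perp x}$. By the induction hypothesis, Odile wins at that successor iff $\w((X_{\perp Y})_{\perp x})<\gamma$. So Odile has a winning strategy at $(Y,\gamma)$ iff this inequality holds for every $x\in X_{\perp Y}$, i.e.\ iff
\[
\sup_{x\in X_{\perp Y}}\bigl(\w((X_{\perp Y})_{\perp x})+1\bigr)\leq\gamma,
\]
which by the residual characterisation $\w(A)=\sup_{x\in A}(\w(A_{\perp x})+1)$ is exactly $\w(X_{\perp Y})\leq\gamma$. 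For the inductive step at an Odile-turn position $(Y,\gamma)$, Odile plays some $\gamma'<\gamma$ and the induction hypothesis says she wins the resulting Antoine-turn position iff $\w(X_{\perp Y})\leq\gamma'$; such a $\gamma'$ exists below $\gamma$ iff $\w(X_{\perp Y})<\gamma$, closing the induction.

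There is no real obstacle: the argument is essentially a translation between the two recursive formulas, one defining $\w(X_{\perp Y})$ via residuals and the other defining ``winning'' via the game rules. The only point that deserves a small amount of care is that the two halves of the invariant are genuinely mutually recursive (Antoine's move keeps $\gamma$ fixed while Odile's keeps $Y$ fixed), so the induction has to be carried out on the well-founded game tree itself, rather than on $\gamma$ alone or on some single fixed ordinal rank attached to $Y$.
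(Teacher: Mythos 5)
Your proof is correct, and it matches the paper's approach: the paper only cites this lemma from \cite{blass2008} and \cite{dzamonja2020} and sketches exactly your invariant in the remark that follows (Odile wins from a position $Y$ when she can keep $\gamma \geq r(Y)$, where the rank $r(Y)$ of the node $Y$ in $\Inc(X)$ is precisely $\w(X_{\perp Y})$). Your write-up just makes the well-founded induction on the game tree and the appeal to determinacy explicit, which is the intended argument.
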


Intuitively, one can see this game as playing along a branch of $\Inc(A)$: each time Antoine plays
$Y \leftarrow Y \cup \{x\}$, he moves from node $Y$ to its child $Y \cup \{x\}$. Odile has a winning strategy when she can play $\gamma \geq r(Y)$. Antoine has a winning strategy when he can play $Y$ such that $\gamma \leq r(Y)$.

To prove $\w(X) = \alpha$, we only need to exhibit two winning strategies, one for each player depending on who begins.

%
%

\section{How to prove lower bounds}
\label{section-lowerbound}

\subsection{Combining strategies}

Here we introduce a method to combine  several winning strategies for Antoine in order to prove lower bounds on the width of complex wqos.

We denote $A\perp B$ with $A$ and $B$ two subsets of a wqo when for any $a\in A, b\in B$, $a\perp b$ in this wqo.
We say $A_1,\dots,A_m$ is an \emph{incomparable} family of subsets of $A$ when $A_i \perp A_j$ for any $i\neq j$. Observe that, for an incomparable family, we have $A \geqstruct \bigsqcup_i A_i$ thus $\w(A) \geq \bigoplus_i \w(A_i)$. However, we can do almost as well with a weaker condition on the $A_i$s.

We say $A_1,\dots,A_m$ is a \emph{quasi-incomparable} family of subsets of $A$ if for any $i \in [1,m]$, for every finite set $Y \subseteq A_1\cup \dots\cup A_{i-1}$, there exists $A_i' \subseteq A_i$ isomorphic to $A_i$ such that $A_i' \perp Y$. Note that this notion is sensitive to the way we number the $A_i$s.

\begin{lemma}[How to combine winning strategies for Antoine]
\label{lem-Antoine-strat}
\quad\newline Let $A_1,\dots,A_m$ be a quasi-incomparable family of subsets of $A$. Then $\w(A)\geq \w(A_m) + \dots + \w(A_1)$.
\end{lemma}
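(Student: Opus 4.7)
The plan is to invoke the game-theoretic characterization of width and construct a winning strategy for Antoine in $G_{A, \alpha}$ when Odile begins, where $\alpha = \w(A_m) + \dots + \w(A_1)$. For each $i$, the game characterization supplies a winning strategy $\sigma_i$ for Antoine in $G_{A_i, \w(A_i)}$ when Odile begins; since the strategy depends only on the order type, it transfers verbatim to any isomorphic copy $A_i^* \subseteq A_i$.

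The key observation is that any $\gamma < \alpha$ admits a unique decomposition $\gamma = \w(A_m) + \dots + \w(A_{i+1}) + \delta$ with $i \in \{1,\dots,m\}$ and $\delta < \w(A_i)$; call $i$ the \emph{phase} of $\gamma$. Antoine's combined strategy plays in phases: in phase $i$ he maintains a fixed subset $A_i^* \subseteq A_i$ isomorphic to $A_i$ and simulates $\sigma_i$ on $A_i^*$, interpreting Odile's current $\gamma$ as the sub-game ordinal $\delta$. When her next move stays in the same phase, $\delta$ strictly decreases and the simulation continues. When her move lifts the phase to some strictly larger $i' > i$, Antoine invokes the quasi-incomparability hypothesis to pick a fresh $A_{i'}^* \subseteq A_{i'}$ isomorphic to $A_{i'}$ and incomparable to the current antichain $Y$---possible because at that instant $Y$ is a finite subset of $A_1 \cup \dots \cup A_{i'-1}$---and then restarts the simulation from $\sigma_{i'}$ with Odile's virtual first move set to the new $\delta$.

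Verifying correctness reduces to two points: every move $x$ Antoine produces lies in $A_{\perp Y}$, and the game terminates. For the first, $x \in A_i^*$ is incomparable to the elements played in earlier phases by the construction of $A_i^*$, and to those already played in the current phase by the winning strategy $\sigma_i$ applied to $A_i^*$. For the second, Odile must strictly decrease $\gamma$ at every turn, so she is eventually the one who cannot move and loses. Applying the game characterization lemma then yields $\w(A) \geq \alpha$.

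The main delicate point will be the phase bookkeeping: one must check that phases are entered at most once and always in increasing order of $i$. This follows from monotonicity---as $\gamma$ strictly decreases, its phase $i(\gamma)$ can only grow---which in turn guarantees that at each transition the antichain accumulated so far is confined to $A_1 \cup \dots \cup A_{i'-1}$, exactly the hypothesis required to invoke quasi-incomparability. Note also that the order of summation matters precisely because quasi-incomparability is asymmetric in the indices, which is why the sum reads $\w(A_m) + \dots + \w(A_1)$ rather than the natural sum.
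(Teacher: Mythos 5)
Your proof is correct and follows essentially the same route as the paper's: the same phase decomposition of $\gamma$ along the sum $\w(A_m)+\dots+\w(A_1)$, the same invariant that the accumulated antichain stays inside $A_1\cup\dots\cup A_k$ so that quasi-incomparability can supply a fresh isomorphic copy at each phase transition, and the same transfer of the component strategies through the isomorphisms. The monotonicity of the phase index, which you flag as the delicate point, is exactly the bookkeeping the paper's argument relies on as well.
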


\begin{proof}

Let us note $\alpha_i$ for $\w(A_i)$

For any $i\in[1,m]$, Antoine has a winning strategy $S_i$ on $G_{A_i,\alpha_i}$ when Odile begins.
We want to combine those $S_i$ into a winning strategy for Antoine on
$G_{A,\alpha_m+\dots+\alpha_1}$ when Odile begins.

Intuitively, the game is played in $m$ phases. Odile goes through the sum $\alpha_m+\dots+\alpha_1$ from right to left, which means that at the $j$-th phase she is decreasing the term $\alpha_j$, while Antoine plays his strategies $S_j$ on $A_j'$ the subset of $A_j$ incomparable to the antichain built by Antoine during phases $1$ to $j-1$. As the subsets $A'_i$ are only isomorphic to the $A_i$s on which the strategies are defined, we have to remember to shift the $S_i$s.

More formally, assume that Odile has selected some ordinal $\gamma$, and Antoine has selected an antichain $Y\in A_1 \cup \dots \cup A_k$.
Now $\gamma$ can be written in a unique way as $\gamma = \alpha_m + \dots + \alpha_{k+1} + \sigma$ with $\sigma< \alpha_k$ for some $k\in[1,m]$: the game is in phase $k$.  

\emph{Our invariant:} During phase $k$, $Y\in A_1 \cup \dots \cup A_k$. By definition of a quasi-incomparable family, there exists $A_k'\subseteq A_k$ isomorphic to $A_k$ such that $A_k' \perp (Y\setminus A_k)$. The antichain $Y\cap A_k$ comes from the strategy $S_k$ played on $A_k'$.

Now it is Odile's turn, and she selects some $\gamma' <\gamma$. We know $\gamma'$ is either above or strictly below $\alpha_m + \dots + \alpha_{k+1}$:
\begin{itemize}

\item If $\gamma = \alpha_m + \dots + \alpha_{j+1} + \sigma'$ then we move to the $j$th phase of the game: By definition of a quasi-incomparable family, there exists $A_j'\subseteq A_j$ isomorphic to $A_j$ such that $A_j' \perp Y$. We follow the strategy $S_j$ on $A_j'$ which selects some $x\in A_j'$. Since $A_j' \perp Y$, $Y\leftarrow Y \cup \{x\}$ is still an antichain.

\item If $\gamma = \alpha_m + \dots + \alpha_{k+1} + \sigma'$ for some $\sigma' <\sigma$, then we can keep applying the strategy $S_k$ on $A_k'$, which selects some $x \perp Y \cap A_k'$. Since $A_k' \perp (Y\setminus A_k)$, $Y \leftarrow Y \cup \{x\}$ is still an antichain.

\end{itemize}

\end{proof}

\subsection{Lower bound for self-residual wqos}

\begin{definition}[Self-residual]
Let $A$ be a wqo. $A$ is \emph{self-residual} if for any $x\in A$, $A_{\not \leq x}$ contains an isomorphic copy of $A$.
\end{definition}
Let us illustrate this notion with an example: An ordinal $\alpha$ is said to be \emph{indecomposable}, or \emph{additive principal}, if for any $\beta,\gamma <\alpha$, $\beta + \gamma < \alpha$. All indecomposable ordinals are of the form $\alpha = \om^{\alpha'}$.

Then for all infinite indecomposable ordinal $\alpha$, for any $x<\alpha$, $\alpha_{\not\leq x} \equiv \alpha - (x + 1) = \alpha$
, because $x+1<\alpha$ and by definition of indecomposable. Therefore $\alpha$ is self-residual .
 
 \medskip
The notion of self-residual is compatible with cartesian product: if $A$ and $B$ are self-residual wqos, then $A\times B$ is self-residual. In particular, a cartesian product of $n$ infinite indecomposable ordinal is self-residual.

If $A$ is self-residual, then for all finite $Y\subseteq A$, $A_{\not\leq Y}$ contains an isomorphic copy of $A$ (by induction on the size of $Y$).

Here is an application of Lemma \ref{lem-Antoine-strat} that will be useful in Section \ref{section-indecomposable}:

\begin{lemma}
\label{lemma-cdot-k}
Let $A,B$ be two wqos such that $A$ is self-residual, and $k\in\mathbb{N}$. Then $\w(A\times (B \cdot k)) \geq \w(A\times B) \cdot k$.
\end{lemma}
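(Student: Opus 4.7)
The plan is to realise $A\times (B\cdot k)$ as $k$ layers, each isomorphic to $A\times B$, stacked one on top of the other, and then invoke Lemma \ref{lem-Antoine-strat} on these layers, using self-residuality of $A$ to verify the quasi-incomparability condition.

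First I would unfold the direct product: since $k$ is finite, $B\cdot k$ is the lexicographic sum of $k$ copies of $B$, so $A\times(B\cdot k)$ decomposes as a set into disjoint layers $L_0,\dots,L_{k-1}$ with each $L_j\equiv A\times B$. A short check of the componentwise order shows that for $x\in L_i$ and $y\in L_j$ with $i<j$, one always has $x\not\geq y$, while $x\leq y$ iff the $A$-coordinates satisfy $a_x\leq a_y$; in particular $x\perp y$ iff $a_x\not\leq a_y$.

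Next I would enumerate the layers from top to bottom by setting $A_i:=L_{k-i}$ for $i=1,\dots,k$, and check that $(A_1,\dots,A_k)$ is quasi-incomparable. Given a finite $Y\subseteq A_1\cup\dots\cup A_{i-1}$, every element of $Y$ sits in a layer strictly above $L_{k-i}$; let $Y_A\subseteq A$ be its projection onto the $A$-coordinate. Since $A$ is self-residual, the remark following its definition supplies an isomorphic copy $A'\equiv A$ inside $A_{\not\leq Y_A}$. The set $A_i':=A'\times B\subseteq L_{k-i}$ is then isomorphic to $A\times B=A_i$, and by the incomparability criterion above every element of $A_i'$ is incomparable to every $y\in Y$. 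Applying Lemma \ref{lem-Antoine-strat} yields
$$\w(A\times(B\cdot k))\;\geq\;\w(A_k)+\dots+\w(A_1)\;=\;\w(A\times B)\cdot k,$$
since all $\w(A_i)$ equal $\w(A\times B)$.

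The key design choice, and the only real subtlety, is orientation: enumerating the layers from top to bottom forces the quasi-incomparability condition to reduce to $a_x\not\leq a_y$, which is exactly the residual shape self-residuality of $A$ controls. The opposite enumeration would require $A_{\not\geq Y_A}$ to contain a copy of $A$, a property that self-residuality does not supply.
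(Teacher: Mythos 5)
Your proposal is correct and follows essentially the same route as the paper: decompose $B\cdot k$ as a lexicographic sum of $k$ copies of $B$, order the resulting layers of $A\times(B\cdot k)$ from top to bottom, use self-residuality of $A$ to place a fresh copy of $A\times B$ inside $A_{\not\leq Y_A}\times B$ in the current layer, and conclude via Lemma \ref{lem-Antoine-strat}. Your closing remark on the orientation of the enumeration correctly identifies why the incomparability check reduces to $a_x\not\leq a_y$, which is exactly the residual shape that self-residuality controls.
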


\begin{proof}

Let $B_1,\dots,B_k$ be disjoint copies of $B$. Then $B\cdot k$ is isomorphic to the lexicographic sum $B_k + \dots + B_1$. We claim that $(A\times B_i)_{i\in[1,n]}$ is a quasi-incomparable family of subsets of 
$A \times (B\cdot k)$:

Fix $j \in [1,k]$ and $Y\subset (A\times B_1) \cup \dots \cup (A\times B_{j-1})$ finite (Figure \ref{fig-cdot-k} illustrates the case $j=3$).
Let $Y_{|A} \eqdef \{a\in A \;|\;(a,b)\in Y, b \in B \cdot k \}$. 
We want to find a subset of $C\subseteq A\times B_j$ isomorphic to $A\times B_j$ such that $C\perp Y$.
Since $A$ is self-residual, $A_{\not\leq Y_{|A} }$ contains an isomorphic copy of $A$, hence $ A_{\not\leq Y_{|A} } \times B_j \equiv A \times B_j$. Since for any $(a,b) \in A_{\not\leq Y_{|A} } \times B_j$,$(a',b')\in Y$, $a >_A a'$ and $b<_{B\cdot k} b'$ so $(a,b)\perp (a',b')$.

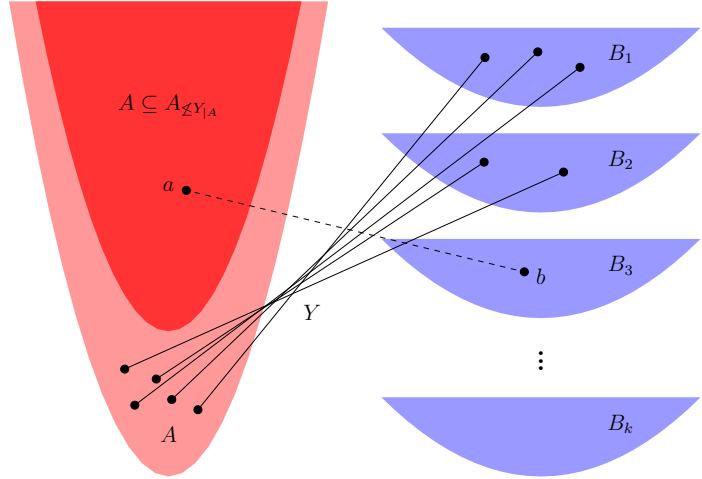
\begin{figure}[h]

               \centering
               \scalebox{0.7}{
                \begin{tikzpicture}
                \fill [color=red!40] (-3,9) -- (3,9)--plot[domain=3:-3](\x,\x*\x)--cycle;
                \fill [color=red!80] (-2.5,9) -- (2.5,9)--plot[domain=2.5:-2.5](\x,\x*\x+2.75)--cycle;
                \fill [color=blue!40](4,1.5)--(10,1.5)--plot[domain=3:-3](\x+7,\x*\x/6)--cycle;
                \fill [color=blue!40](4,4.5)--(10,4.5)--plot[domain=3:-3](\x+7,\x*\x/6+3)--cycle;
                \fill [color=blue!40](4,6.5)--(10,6.5)--plot[domain=3:-3](\x+7,\x*\x/6+5)--cycle;
                \fill [color=blue!40](4,8.5)--(10,8.5)--plot[domain=3:-3](\x+7,\x*\x/6+7)--cycle;
                \draw [*-*] (.5,1.2) -- (6,8);
                \draw [*-*] (0,1.4) -- (7,8.1);
                \draw [*-*] (-.3,1.8) -- (6,6);
                \draw [*-*] (-.9,2) -- (7.5,5.8);
                \draw [*-*] (-.7,1.3) -- (7.8,7.8);
                \node (Ares) at (0,7){$A \subseteq A_{\not\leq Y_{|A}}$};
                \node (A) at (0,0.8){$A$};
                \node (Bun) at (8.5,8){$B_1$};
                \node (Bd) at (8.5,6){$B_2$};
                \node (Bt) at (8.5,4){$B_3$};
                \node (Bk) at (8.5,1){$B_k$};
                \node (Y) at (2.7,3.1){$Y$};
                \node (dots) at (7,2.3){\Huge $\vdots$};
                \node (a) at (0,5.5){$a$};
                \node (b) at (7,3.8){$b$};
                \draw [*-*][dashed](a) -- (b);
                
               \end{tikzpicture}
               
}
               \caption{All elements $(a,b)$ of $A_{\not\leq Y_{|A}}\times(B_k + \dots + B_3)$ are incomparable to $Y\subseteq A \times (B_2 + B_1)$.}
               \label{fig-cdot-k}
        \end{figure}

Therefore $(A\times B_i)_{i\in[1,n]}$ is a quasi-incomparable family, so according to Lemma \ref{lem-shift-iso}, 
$$\w\left(A\times \left(\sum_{i=k}^1 B_i\right)\right) \geq \sum_{i=k}^1\w(A\times B_i) = \w(A\times B) \cdot k\;.\qedhere$$

\end{proof}

When $\w(A\times B)$ is a finite multiple of an indecomposable, we have the other side of the equality: $A\times (B\cdot k) \geqaug A \times B \times \Gamma_k$, so $\w(A\times (B\cdot k)) \leq \w(A\times B) \otimes k = \w(A\times B) \cdot k$, therefore $\w(A\times (B\cdot k)) = \w(A\times B) \cdot k$.

\subsection{Lower bound for transferable wqos}

A wqo $A$ is \emph{transferable} if $\w(A_{\not\leq Y}) = \w(A)$ for any finite $Y\in A$. Observe that it is a weaker condition than self-residuality:

\begin{lemma}
A self-residual wqo is transferable.
\end{lemma}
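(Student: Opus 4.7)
The plan is to prove the two inequalities $\w(A_{\not\leq Y}) \leq \w(A)$ and $\w(A_{\not\leq Y}) \geq \w(A)$ separately, both via Lemma \ref{lem-leq-aug}, for any finite $Y \subseteq A$. Both directions are essentially immediate once the right substructure relation is identified, so this should be a short proof.

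For the upper bound, I would simply observe that $A_{\not\leq Y}$ is a subset of $A$ with the inherited order, so $A_{\not\leq Y} \leqstruct A$. By the substructure clause of Lemma \ref{lem-leq-aug}, this gives $\w(A_{\not\leq Y}) \leq \w(A)$ with no further work. (Note that this part does not require self-residuality at all; it holds for any residual of any wqo.)

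For the lower bound, I would invoke the observation already made in the text: since $A$ is self-residual, one proves by induction on $|Y|$ that $A_{\not\leq Y}$ contains an isomorphic copy $A'$ of $A$. The base case $|Y| = 0$ is trivial, and the inductive step uses that $A_{\not\leq y}$ contains a copy of $A$ for each $y$, applied inside the copy of $A$ obtained from the induction hypothesis, noting that $(A_{\not\leq Y \setminus \{y\}})_{\not\leq y} = A_{\not\leq Y}$. Then $A' \leqstruct A_{\not\leq Y}$ with $A' \equiv A$, so Lemma \ref{lem-leq-aug} gives $\w(A) = \w(A') \leq \w(A_{\not\leq Y})$.

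Combining the two bounds yields equality, hence $A$ is transferable. There is no real obstacle here: the inductive argument for the existence of an isomorphic copy of $A$ inside $A_{\not\leq Y}$ is the only nontrivial ingredient, and it is already flagged parenthetically in the paragraph preceding the statement, so it can be cited rather than reproved.
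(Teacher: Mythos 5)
Your proof is correct and takes essentially the same route as the paper's: the paper likewise deduces, by induction on $|Y|$ (citing the same parenthetical remark preceding the lemma), that $A_{\not\leq Y}$ contains an isomorphic copy of $A$, and concludes $\w(A_{\not\leq Y}) = \w(A)$. The only difference is cosmetic: you spell out the two inequalities via Lemma \ref{lem-leq-aug} where the paper asserts the equality directly.
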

\begin{proof}
Let $A$ be a self-residual wqo, i.e., for any $x \in A$, $A_{\not\leq x}$ contains an isomorphic copy of $A$. Thus by induction on the size of $Y$, $A_{\not\leq Y}$ contains an isomorphic copy of $A$. Therefore $\w(A_{\not\leq Y}) = \w(A)$, so $A$ is transferable.
\end{proof}

Transferability is used in \cite{dzamonja2020} to prove a lower bound:

\begin{lemma}[Theorem 4.16 of \cite{dzamonja2020}]
\label{weak-lbiftransferable}
Suppose that $A$ is a transferable wqo and $\beta$ is an ordinal.
Then $\w(A \times \beta) \geq \w(A) \cdot \beta$.
\end{lemma}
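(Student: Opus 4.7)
The plan is to exhibit a winning strategy for Antoine in the game $G_{A \times \beta,\, \w(A) \cdot \beta}$ when Odile begins. Write $\alpha = \w(A)$. Every positive $\gamma \leq \alpha \cdot \beta$ decomposes uniquely as $\gamma = \alpha \cdot \delta + \sigma$ with $\delta < \beta$ and $\sigma < \alpha$; call $\delta$ the \emph{phase} of the state $(Y,\gamma)$. The key order fact is that in $A \times \beta$, whenever $\delta < \delta'$, one has $(a,\delta) \perp (a',\delta')$ iff $a \not\leq a'$. I will keep Antoine inside a single slice $A \times \{\delta\}$ per phase; since Odile must strictly decrease $\gamma$, the sequence of phases visited is weakly decreasing in $\beta$, so whenever the phase is $\delta$ every previously played element lies in a slice of index $\geq \delta$.

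The engine is then to use transferability to hand Antoine a fresh $\alpha$-wide sub-game on each phase. For the current phase $\delta$, let $Y_A^{>\delta}$ be the (finite) set of $A$-projections of elements of $Y$ sitting in strictly higher slices; transferability gives $\w(A_{\not\leq Y_A^{>\delta}}) = \alpha$, so Antoine fixes a winning strategy $T_\delta$ for $G_{A_{\not\leq Y_A^{>\delta}},\, \alpha}$ with Odile beginning. In the big game, when Odile enters phase $\delta$ by moving to $\gamma = \alpha \cdot \delta + \sigma$, Antoine treats this as the virtual move $\alpha \to \sigma$ in $T_\delta$ and plays $(a,\delta)$ for the $a$ returned by $T_\delta$. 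While Odile stays in phase $\delta$ and merely decreases $\sigma$, Antoine keeps relaying her moves to $T_\delta$ and replying in slice $\delta$. When Odile jumps to a strictly smaller phase $\delta^\star < \delta$, Antoine discards the simulation at $\delta$, recomputes $Y_A^{>\delta^\star}$, and launches a fresh simulation of the newly available $T_{\delta^\star}$.

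To close the proof I would verify three points: Antoine always has a legal move because $T_\delta$ does; the chosen $(a,\delta)$ is incomparable to prior plays in slice $\delta$ by $T_\delta$'s antichain guarantee; and it is incomparable to prior plays in strictly higher slices because $a \in A_{\not\leq Y_A^{>\delta}}$ combined with the order fact above. Since $\gamma$ strictly decreases in the well-order $\alpha \cdot \beta$, Odile is eventually forced to $\gamma = 0$ and loses. The hard part, I expect, is justifying that discarding an ongoing simulation on a phase change is harmless: one has to notice that Antoine never needs to actually win any individual $T_\delta$-game — only the move-generation property of $T_\delta$ is used — and that transferability supplies a new $\alpha$-wide residual, and thus a new strategy $T_{\delta^\star}$, each time a strictly lower phase opens. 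This is also what blocks a direct appeal to Lemma \ref{lem-Antoine-strat} for a transfinite $\beta$: its quasi-incomparability clause demands isomorphic copies of the $A_i$s, whereas transferability only matches their width.
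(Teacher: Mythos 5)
Your proof is correct. Note first that the paper itself does not prove this lemma: it is imported as Theorem 4.16 of \cite{dzamonja2020}, so there is no in-paper argument to compare against. Your reconstruction is nonetheless sound and fits naturally into the Section \ref{section-lowerbound} toolkit: you run $G_{A\times\beta,\,\w(A)\cdot\beta}$ with Odile starting, read the phase $\delta$ off the ordinal division $\gamma=\w(A)\cdot\delta+\sigma$, confine Antoine to the slice $A\times\{\delta\}$ during phase $\delta$, and use transferability to restart a fresh width-$\w(A)$ sub-game inside $A_{\not\leq Y_A^{>\delta}}$ each time the phase drops; the weak decrease of phases plus the observation that $(a,\delta)\perp(a',\delta')$ reduces to $a\not\leq a'$ for $\delta<\delta'$ makes the accumulated set an antichain. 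The three verification points you list are exactly the ones that matter, and your closing remark correctly identifies why Lemma \ref{lem-Antoine-strat} cannot simply be invoked here: quasi-incomparability demands isomorphic copies of the blocks (and the lemma only combines finitely many of them), whereas transferability only preserves width, so the transfinite case genuinely needs this hand-rolled relay of strategies. One cosmetic slip: the division $\gamma=\w(A)\cdot\delta+\sigma$ with $\delta<\beta$ and $\sigma<\w(A)$ fails for $\gamma=\w(A)\cdot\beta$ itself, but that configuration is only ever Odile's to move from, so it never needs a phase; likewise the degenerate cases $\w(A)=0$ or $\beta=0$ are trivially true and worth a one-line disclaimer.
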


From this lemma, we can deduce a more general version of itself which we will use in Section \ref{section-indecomposable}.

\begin{lemma}
\label{lbiftransferable}
Suppose that $A$ is a transferable wqo and $B$ any wqo. Then $\w(A\times B)\geq \w(A)\cdot\o(B)$.
\end{lemma}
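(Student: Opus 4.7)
The plan is to reduce the general case to the ordinal case already established in Lemma \ref{weak-lbiftransferable}, by using the classical characterization of $\o(B)$ as a maximal linearization. The key idea is that augmenting the second factor of the product $A \times B$ makes the width smaller (by Lemma \ref{lem-leq-aug}), so replacing $B$ with an ordinal augmentation gives a lower bound on $\w(A\times B)$.

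First I would invoke the historical (de Jongh--Parikh) definition of the maximal order type recalled in the Remark after Lemma \ref{lem-w-leq-o}: the wqo $B$ admits a linear extension on the same underlying set whose order type is the ordinal $\alpha = \o(B)$. In the notation of the excerpt this says $\o(B) \geqaug B$. Next I would observe that augmenting only the second coordinate of a cartesian product is itself an augmentation of the whole product: since the underlying sets agree and $\leq_B \subseteq \leq_{\o(B)}$, we have $\leq_{A\times B}\, \subseteq\, \leq_{A\times \o(B)}$, that is, $A \times \o(B) \geqaug A \times B$.

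Applying Lemma \ref{lem-leq-aug} to this augmentation yields $\w(A \times B) \geq \w(A \times \o(B))$. Since $\o(B)$ is an ordinal and $A$ is by hypothesis transferable, Lemma \ref{weak-lbiftransferable} applies with $\beta = \o(B)$ and gives $\w(A \times \o(B)) \geq \w(A) \cdot \o(B)$. Chaining the two inequalities delivers the required bound
\[
\w(A \times B) \;\geq\; \w(A) \cdot \o(B).
\]

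There is no genuine obstacle here: the only point that deserves care is the (routine) verification that augmentation is preserved by taking the cartesian product with a fixed factor, and the implicit appeal to the equivalence between the tree-rank definition of $\o(B)$ used throughout the paper and the linearization characterization that the argument relies on; both facts are already spelled out in the preliminaries.
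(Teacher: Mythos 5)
Your argument is correct and is essentially identical to the paper's own proof: both observe that $B \leqaug \o(B)$, deduce $\w(A\times B) \geq \w(A\times\o(B))$ via Lemma \ref{lem-leq-aug}, and conclude with Lemma \ref{weak-lbiftransferable}. The extra care you take in justifying that augmenting one factor augments the product, and in invoking the linearization characterization of $\o(B)$, only makes explicit what the paper leaves as an observation.
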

\begin{proof}
Observe that $B \leqaug \o(B)$. Therefore $\w(A\times B) \geq \w(A \times \o(B)) \geq \w(A) \cdot \o(B)$.
\end{proof}

Combined with the method of residuals, Lemma \ref{lbiftransferable}  allows us to compute the width of simple examples:

We note $A^{\times n}\eqdef A \times \dots \times A$ the cartesian product of $n$ copies of a wqo $A$.

\begin{proposition}
\label{prop-om-times-n}

$\w(\om^{\times n}) = \om^{n-1}$ for $n\geq 1$.
\end{proposition}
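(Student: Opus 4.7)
The plan is to proceed by induction on $n$. The base case $n=1$ is immediate: $\omega$ is linearly ordered so $\w(\omega)=1=\omega^0$. For the inductive step, assuming $\w(\omega^{\times(n-1)})=\omega^{n-2}$, I would establish the two matching bounds separately.

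For the lower bound, $\omega$ is infinite indecomposable, so the cartesian product $\omega^{\times(n-1)}$ is self-residual by the discussion following Definition of self-residual, hence transferable. Applying Lemma \ref{lbiftransferable} with $A=\omega^{\times(n-1)}$ and $B=\omega$ yields
$$\w(\omega^{\times n})\;=\;\w\bigl(\omega^{\times(n-1)}\times\omega\bigr)\;\geq\;\w(\omega^{\times(n-1)})\cdot\o(\omega)\;=\;\omega^{n-2}\cdot\omega\;=\;\omega^{n-1}.$$

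For the upper bound I would apply the method of residuals. Fix $x=(x_1,\dots,x_n)\in\omega^{\times n}$ and set $R=(\omega^{\times n})_{\perp x}$. Every $y\in R$ must have $y_j<x_j$ for at least one coordinate $j$, so assigning to each $y$ the least such index $j(y)$ gives a partition $R=\bigsqcup_{j=1}^n R_j$ with $R_j\subseteq\{y:y_j<x_j\}$. The order inherited on $R$ from $\omega^{\times n}$ is an augmentation of the disjoint-sum order on $\bigsqcup_j R_j$, so by Lemmas \ref{lem-leq-aug} and \ref{lem-DSS-sqcup},
$$\w(R)\;\leq\;\bigoplus_{j=1}^n \w(R_j).$$
Each factor is controlled as follows: $R_j$ is a substructure of $\omega^{j-1}\times x_j\times\omega^{n-j}$, and this wqo is an augmentation of $\omega^{\times(n-1)}\times\Gamma_{x_j}$ since the finite linear order $x_j$ is an augmentation of the antichain $\Gamma_{x_j}$. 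By Lemma \ref{lem-leq-aug}, Lemma \ref{lem-times-Gamma}, and the induction hypothesis,
$$\w(R_j)\;\leq\;\w\bigl(\omega^{\times(n-1)}\times\Gamma_{x_j}\bigr)\;=\;\omega^{n-2}\otimes x_j\;=\;\omega^{n-2}\cdot x_j.$$
Since all summands share the same $\omega^{n-2}$ component in Cantor normal form, the natural sum collapses to $\omega^{n-2}\cdot(x_1+\cdots+x_n)$, which is strictly below $\omega^{n-1}$. Taking the supremum over $x$ in the residual formula for $\w$ gives $\w(\omega^{\times n})\leq\omega^{n-1}$, matching the lower bound.

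The delicate step is the upper bound: as illustrated by Figure \ref{fig-residual-fail-3}, $R$ cannot be written as a genuine disjoint union of simple wqos (the $R_j$'s have comparable elements across indices), so one has to partition $R$ as a set by the least violating coordinate and then \emph{lose} the cross-comparabilities by passing to a disjoint-sum order. This loss is harmless because $\w$ is anti-monotone under augmentation, while the resulting pieces each embed into a product with a finite factor and can be bounded by Lemma \ref{lem-times-Gamma} plus the inductive hypothesis.
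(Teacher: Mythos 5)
Your proof is correct, and it follows the paper's overall strategy: a lower bound via transferability (Lemma \ref{lbiftransferable}) and an upper bound via the method of residuals with an induction on $n$. The details differ in two places. For the lower bound the paper applies Lemma \ref{lbiftransferable} with $A=\om$ and $B=\om^{\times(n-1)}$, obtaining $\w(\om)\cdot\o(\om^{\times(n-1)})=\om^{n-1}$ in one step without the induction hypothesis; your choice $A=\om^{\times(n-1)}$, $B=\om$ works just as well but leans on the inductive assumption. For the upper bound the paper splits the residual at $m=(m_0,m')$ three ways according to whether the first coordinate is $<m_0$, $>m_0$ or $=m_0$, which leaves a recursive term $\{m_0\}\times(\om^{\times n})_{\perp m'}$ that must be bounded by the residual equation; your $n$-way partition by the least coordinate $j$ with $y_j<x_j$ avoids any sub-residual term and bounds each piece directly by $\w\bigl(\om^{\times(n-1)}\times\Gamma_{x_j}\bigr)=\om^{n-2}\cdot x_j$ via Lemmas \ref{lem-leq-aug} and \ref{lem-times-Gamma}, after which the natural sum $\om^{n-2}\cdot(x_1+\cdots+x_n)$ is visibly below $\om^{n-1}$. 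In both arguments the pieces are not pairwise incomparable and one only gets an augmentation of a disjoint sum; you state explicitly why this suffices ($\w$ decreases under augmentation), which is exactly the right justification. Your decomposition is, if anything, a little cleaner and more self-contained than the paper's.
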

\begin{proof}
Case $n =1$: $\w(\om)=1$. 

If $n>1$,
$
\w(\om^{\times n})\geq \w(\om) \cdot \o(\om^{\times (n-1)}) = \om^{n-1}$ according to Lemma \ref{lbiftransferable}.

Let us prove the upper bound by induction on $n$, initialized in $n=1$: Assume $w(\omega^{\times n})=\omega^{n-1}$ for some $n$. Let $m=(m_0,\dots,m_n)$ be any element of $\omega^{\times (n + 1)}$, $m'=(m_1,\dots,m_n)$, and $k$ the cardinal of $(\omega^{\times n})_{< m'}$.
Then:

\begin{align*}
(\omega^{\times (n + 1)})_{\perp m} & \geqaug (< m_0)\times(\omega^{\times n})_{>m'} \;\sqcup\; (> m_0)\times(\omega^{\times n})_{<m'}\;\sqcup\; \{m_0\}\times (\omega^{\times n})_{\perp m'} \\
& \geqaug \Gamma_{m_0} \times \omega^{\times n} \;\sqcup\; \omega \times \Gamma_k \;\sqcup\;  (\omega^{\times n})_{\perp m'} \; .
\end{align*}

Therefore by induction hypothesis $\w\left(\omega^{\times (n + 1)}_{\perp m}\right)\leq \omega^{n-1}\cdot m_0 \;\oplus\; k \;\oplus\; \gamma $ with $\gamma < \omega^{n-1}$. 

Thus by the method of residuals: $\w\left(\omega^{\times (n + 1)}\right)=\underset{m}{\sup}\left\{\w(\omega^{\times (n + 1)}_{\perp m}) + 1 \right\} \leq \omega^{n}$.

\end{proof}

\begin{proposition}
\label{prop-omom-times-n}
$\w((\omom)^{\times n}) = \om^{\om\cdot n}$ for $n\geq 2$.
\end{proposition}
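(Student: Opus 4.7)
The plan is to establish both $\w((\omom)^{\times n}) \leq \om^{\om\cdot n}$ and $\w((\omom)^{\times n}) \geq \om^{\om\cdot n}$, the first bound uniformly and the second by induction on $n\geq 2$.

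For the upper bound, I would invoke $\w(X) \leq \o(X)$ (Lemma \ref{lem-w-leq-o}) together with Lemma \ref{lem-product-o}, which gives $\o(A\times B) = \o(A)\otimes\o(B)$. Since $\o(\omom) = \omom = \om^\om$, an iterated application yields
\[
\o((\omom)^{\times n}) = \om^{\om\oplus\om\oplus\cdots\oplus\om} = \om^{\om\cdot n},
\]
so $\w((\omom)^{\times n}) \leq \om^{\om\cdot n}$.

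For the lower bound, the base case $n=2$ comes from Abraham's Theorem \ref{thm-abraham-limit}. Writing $\omom = \om\cdot\om^\om$ and substituting the Cantor normal forms $\alpha = \beta = \om^\om = \om^\om\cdot 1 \nfp 0$ into the theorem (so $\alpha' = \beta' = \om$, $a = b = 1$, $\rho = \sigma = 0$), one obtains
\[
\w(\omom\times\omom) = \om\cdot\om^{\om\oplus\om} \nfp \bigl[\w(\om\cdot\om^\om\times 0)\oplus\w(\om\cdot\om^\om\times 0)\bigr] = \om^{1+\om\cdot 2} = \om^{\om\cdot 2}.
\]
For the induction step, suppose the lower bound holds at $n-1$, with $n\geq 3$. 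Since $\omom$ is an infinite indecomposable ordinal and self-residuality is closed under cartesian products (as recalled in the excerpt), $A \eqdef (\omom)^{\times(n-1)}$ is self-residual, hence transferable. Applying Lemma \ref{lbiftransferable} with $B = \omom$,
\[
\w((\omom)^{\times n}) \geq \w(A)\cdot\o(\omom) \geq \om^{\om\cdot(n-1)}\cdot\om^\om = \om^{\om\cdot(n-1)+\om} = \om^{\om\cdot n}.
\]

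I do not anticipate a real obstacle: the whole proof is essentially a bootstrap from Abraham's two-dimensional formula using the transferability machinery already developed in this section. The only care needed is in unpacking the Cantor normal forms correctly for Abraham's theorem in the base case and in the routine ordinal arithmetic $1+\om\cdot 2 = \om\cdot 2$ and $\om\cdot(n-1)+\om = \om\cdot n$ (valid because $n$ is finite).
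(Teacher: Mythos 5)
Your proof is correct and takes essentially the same approach as the paper: the upper bound via $\w(X)\leq\o(X)$ and Lemma \ref{lem-product-o}, and the lower bound via transferability (Lemma \ref{lbiftransferable}) bootstrapped from Abraham's formula for $n=2$. The only cosmetic difference is that the paper applies Lemma \ref{lbiftransferable} once with $A=\omom\times\omom$ and $B=(\omom)^{\times(n-2)}$, which avoids your factor-by-factor induction, but both computations are valid.
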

\begin{proof}
The case $n=2$ is an application of Theorem \ref{thm-abraham-limit}. If $n>2$:
\begin{align*}
\w((\omom)^{\times n})&\leq \o((\omom)^{\times n}) = \om^{\om\cdot n}
\text { according to Lemma \ref{lem-w-leq-o},}\\
\w((\omom)^{\times n})&\geq \w(\omom\times\omom) \cdot \o((\omom)^{\times (n-2)}) = \om^{\om\cdot 2}\cdot \om^{\om\cdot (n-2)}= \om^{\om\cdot n}
\end{align*}
according to Lemmas \ref{lem-product-o} and \ref{lbiftransferable}.
\end{proof}

\section{Product of indecomposable ordinals}
\label{section-indecomposable}

We want to compute $\w(\om^{\alpha_1}\times\dots\times \om^{\alpha_n})$ for $\alpha_1,\dots,\alpha_n > 0$. We recall this result from \cite{abraham87} for $n=2$:

\begin{lemma}
\label{lem-eta-2}
Let $\alpha_1,\alpha_2 > 0$. Then $\w(\om^{\alpha_1}\times \om^{\alpha_2})= \om^{\eta}$ with $\eta ={1 +\bigl((\alpha_1-1) \oplus (\alpha_2-1)\bigr)}$.

\end{lemma}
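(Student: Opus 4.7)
The plan is to derive Lemma \ref{lem-eta-2} as a direct specialisation of Theorem \ref{thm-abraham-limit}, exploiting the fact that pure powers $\om^{\alpha_1}$ and $\om^{\alpha_2}$ have the simplest possible Cantor normal forms. Since the lemma is flagged in the text as a recall from \cite{abraham87}, the argument should essentially be bookkeeping rather than anything new.

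First I would observe that, since $\alpha_1,\alpha_2 > 0$, both $\om^{\alpha_1}$ and $\om^{\alpha_2}$ are limit ordinals, and each admits the trivial single-term Cantor decomposition $\om^{\alpha_i} = \om^{\alpha_i}\cdot 1 \nfp 0$. Matching against the template $\alpha = \om^{\alpha'}\cdot a \nfp \rho$ used in the rewrite of Theorem \ref{thm-abraham-limit}, this gives $\alpha' = \alpha_1$, $\beta' = \alpha_2$ and $\rho = \sigma = 0$.

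Next I would substitute into the rewritten identity
\[
\w(\alpha\times\beta) = \om^{\eta} \nfp [\w(\om^{\alpha'}\times\sigma) \oplus \w(\om^{\beta'}\times\rho)], \qquad \eta = 1+(\alpha'-1)\oplus(\beta'-1).
\]
Both bracketed widths contain the empty ordinal as a factor, so they vanish, and the whole right-hand side collapses to $\om^{\eta}$. Reading off $\eta = 1+(\alpha_1-1)\oplus(\alpha_2-1)$ then yields the stated formula.

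There is no real obstacle here: the only things worth checking are that $\om^{\alpha_i}$ really is a limit (true since $\alpha_i\geq 1$), and that $a=1,\ \rho=0$ is an admissible Cantor normal form in Abraham's convention, which is plain. The reason I would not attempt an independent proof, say via the game-theoretic lower bound of Lemma \ref{lem-Antoine-strat} and a residual-based upper bound, is that reproving Abraham's two-factor result would essentially duplicate \cite{abraham87}; the whole point of calling the statement a recall is to reuse it as the base case for the $n$-factor recursion developed in Section \ref{section-indecomposable}.
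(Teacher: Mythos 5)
Your derivation is correct and matches the paper's intent exactly: the paper offers no proof of Lemma~\ref{lem-eta-2}, merely recalling it from \cite{abraham87}, and specialising the already-stated Theorem~\ref{thm-abraham-limit} with $\alpha'=\alpha_1$, $\beta'=\alpha_2$, $a=b=1$, $\rho=\sigma=0$ (so that both bracketed widths are widths of empty products and vanish) is precisely the bookkeeping that justifies it. No gap.
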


Observe that for any ordinal $\alpha>0$, $\alpha - 1 = \alpha$ iff $\alpha$ is infinite. 
Thus $\eta$ can be simplified depending on the finiteness or infiniteness of $\alpha_1$ and $\alpha_2$:
\begin{itemize}
\item If $\alpha_1,\alpha_2$ are finite then 
$\eta =\alpha_1 \nfp\alpha_2 -1$.
\item If $\alpha_1$ is infinite and $\alpha_2$ finite then 
$\eta = \alpha_1 \nfp \alpha_2 - 1$.
\item If $\alpha_1,\alpha_2$ are infinite then 
$\eta = \alpha_1 \oplus \alpha_2$.
\end{itemize}

We can simplify the statement of Lemma \ref{lem-eta-2} if we order $\alpha_1 \geq \alpha_2$ without loss of generality, which gives us $\w(\om^{\alpha_1}\times \om^{\alpha_2})= \om^{\alpha_1 \oplus (\alpha_2-1)}$.

\begin{theorem}
\label{thm-eta}
Let $X= \om^{\alpha_1}\times\dots\times \om^{\alpha_n}$, with $n \geq 2$ and   $\alpha_1 \geq\dots \geq \alpha_n>0$. 
Then $\w (X) = \om^{\eta}$ with $\eta =\alpha_1 \oplus \bigl( (\alpha_2 \oplus \dots \oplus \alpha_n) - 1 \bigr )$, i.e.:
\begin{itemize}
\item If $\alpha_1, \dots,\alpha_n$ are finite then 
$\eta = \alpha_1 \nfp \dots \nfp \alpha_n -1$.
\item If $\alpha_1$ is infinite and $\alpha_2\dots,\alpha_n$ are finite then 
$\eta = \alpha_1 \nfp (\alpha_2 \nfp \dots \nfp \alpha_n - 1)$.
\item If $\alpha_1, \dots,\alpha_k$ are infinite with $k \geq 2$ then 
$\eta = \alpha_1 \oplus \dots \oplus \alpha_n$.
\end{itemize}

\end{theorem}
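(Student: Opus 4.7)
My plan is to prove Theorem \ref{thm-eta} by induction on $n \geq 2$, with base case $n = 2$ supplied by Lemma \ref{lem-eta-2}. For $n \geq 3$ I would treat the upper and lower bounds separately.

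For the upper bound, the method of residuals applies. Fix $x = (x_1,\dots,x_n) \in X$; using indecomposability of $\om^{\alpha_i}$, we have $(<x_i) \equiv x_i$ and $(>x_i) \equiv \om^{\alpha_i}$, so the residual $X_{\perp x}$ is the union, over disjoint non-empty $I_1, I_2 \subseteq [1,n]$, of sets of the form
$$S_{I_1, I_2} \;\equiv\; \prod_{i \in I_1} x_i \;\times\; \prod_{i \in I_2} \om^{\alpha_i} \;\times\; \prod_{i \notin I_1 \cup I_2} \{x_i\}.$$
This union is an augmentation of the corresponding disjoint sum, so by Lemmas \ref{lem-leq-aug} and \ref{lem-DSS-sqcup}, $\w(X_{\perp x}) \leq \bigoplus_{I_1,I_2} \w(S_{I_1,I_2})$. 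Each summand has $|I_2| < n$ full factors, so the inductive hypothesis bounds its width after expanding each $x_i$ in Cantor normal form below $\om^{\alpha_i}$. Taking the supremum over $x$ and case-splitting on how many $\alpha_i$ are infinite should yield $\w(X) \leq \om^\eta$.

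For the lower bound, $X$ is self-residual (each $\om^{\alpha_i}$ is an infinite indecomposable hence self-residual, and Cartesian products of self-residuals are self-residual), and in particular transferable. When $\alpha_n$ is finite, the inductive hypothesis and Lemma \ref{lbiftransferable} give
$$\w(X) \;\geq\; \w(\om^{\alpha_1} \times \cdots \times \om^{\alpha_{n-1}}) \cdot \o(\om^{\alpha_n}) \;=\; \om^{\eta' + \alpha_n} \;=\; \om^\eta,$$
where $\eta'$ is the exponent for the $(n-1)$-fold product, because $\alpha_n$ finite makes ordinary and natural addition coincide on the tail. The main obstacle is the remaining case, where every $\alpha_i$ is infinite: here $\eta = \alpha_1 \oplus \cdots \oplus \alpha_n$, but transferability only yields $\om^{(\alpha_1 \oplus \cdots \oplus \alpha_{n-1}) + \alpha_n}$, which is strictly smaller because ordinary addition absorbs the finite Cantor tails of $\alpha_1 \oplus \cdots \oplus \alpha_{n-1}$ into the leading term of $\alpha_n$. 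Note however that in this case $\om^\eta = \o(X)$ by Lemma \ref{lem-product-o}, so together with Lemma \ref{lem-w-leq-o} the lower bound amounts to proving $\w(X) = \o(X)$.

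To achieve this tight bound I would apply Lemma \ref{lem-Antoine-strat} to a quasi-incomparable family of sub-products, each an isomorphic shifted copy of $\om^{\alpha_1} \times \cdots \times \om^{\alpha_{n-1}}$, with the shifts provided by self-residuality of the $(n-1)$-fold sub-product (generalizing the construction in the proof of Lemma \ref{lemma-cdot-k}). The family should be indexed along the Cantor components of $\om^{\alpha_n}$ so that Lemma \ref{lem-Antoine-strat}'s sum of widths reconstructs exactly the natural sum $\alpha_1 \oplus \cdots \oplus \alpha_n$ in the exponent. The delicate step — and the combinatorial heart of the proof — is arranging this family so that the summed widths total $\om^\eta$ rather than a strictly smaller ordinal sum; once this is in place, the remaining ordinal arithmetic is routine.
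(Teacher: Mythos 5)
Your outline matches the paper's proof for everything except the case where all the $\alpha_i$ are infinite, and there it has a genuine gap. For the other cases you follow the paper almost exactly: the base case $n=2$ from Lemma \ref{lem-eta-2}, the upper bound by bounding the residual $X_{\perp x}$ by an augmentation of a disjoint sum and applying Lemmas \ref{lem-leq-aug} and \ref{lem-DSS-sqcup}, and the lower bound via transferability and Lemma \ref{lbiftransferable} when at least one exponent is finite. (One small repair: for the upper bound the induction cannot be on $n$ alone, since the summand $S_{I_1,I_2}$ with $I_1\cup I_2=[1,n]$ still has $n$ nontrivial factors after you replace each $x_i$ by $\om^{\alpha_i'}\cdot m_i$; the paper inducts on the tuple $(\alpha_1,\dots,\alpha_n)$ in the product order, which is what your phrase ``after expanding each $x_i$ in Cantor normal form'' implicitly requires.)

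The all-infinite case is the heart of the theorem, and the construction you sketch for it cannot work as described. A quasi-incomparable family of isomorphic copies of $\om^{\alpha_1}\times\dots\times\om^{\alpha_{n-1}}$ indexed ``along'' $\om^{\alpha_n}$ yields, via Lemma \ref{lem-Antoine-strat}, an ordinal sum of copies of $\om^{\eta'}$ with $\eta'=\alpha_1\oplus\dots\oplus\alpha_{n-1}$, hence at best $\om^{\eta'}\cdot\om^{\alpha_n}=\om^{\eta'+\alpha_n}$ --- which is exactly the bound you already identified as too weak. The deficit between $\eta'+\alpha_n$ and $\eta'\oplus\alpha_n$ lives in the Cantor tails of $\alpha_1,\dots,\alpha_{n-1}$ (e.g.\ for $\alpha_1=\alpha_2=\om+1$, $\alpha_3=\om$ one gets $\om^{\om\cdot 3}$ instead of $\om^{\om\cdot 3+2}$), so no decomposition of the \emph{last} factor alone can recover it; also note $\om^{\alpha_n}$ is indecomposable, so it has no nontrivial Cantor components to index by. The paper instead runs a transfinite induction on the exponent tuple $(\alpha_1,\dots,\alpha_n)$, anchored at $\w((\omom)^{\times n})=\om^{\om\cdot n}$ (Proposition \ref{prop-omom-times-n}), and at each step attacks the factor $\alpha_i$ whose \emph{last} Cantor exponent $\rho$ is smallest among all $n$ factors, splitting on $\rho=0$ (where Lemma \ref{lemma-cdot-k} supplies the missing finite multiples), $\rho$ successor, and $\rho$ limit. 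Deferring this as ``the delicate step'' leaves the hardest part of the theorem unproved, and the route you point toward for it heads in the wrong direction.
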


\begin{proof}
Case $n=2$ is treated in \cite{abraham87}. Assume that $n>2$.

Let $k\in[0,n]$ be such that $\alpha_1,\dots,\alpha_k$ are infinite and $\alpha_{k+1},\dots,\alpha_n$ are finite. Inside both the proofs of the lower bound and upper bound, the cases $k=0$, $k=1$, $2\leq k < n$ and $k=n$ will be treated separately when necessary.

We prove the lower bound with Lemma \ref{lbiftransferable}:

For any ordinal $\alpha>0$, $\om^\alpha$ is principal additive so self-residual. Therefore
$\om^{\alpha_1}\times \dots \times \om^{\alpha_j}$ is self-residual for all $j\leq n$ hence transferable.

\begin{itemize}
\item If $k=0$, $\w (X) \geq \w(\om^{\alpha_1}\times \om^{\alpha_2}) \cdot \o(\om^{\alpha_3}\times\dots\times \om^{\alpha_n}) = \om^{(\alpha_1 \nfp \alpha_2 - 1) + (\alpha_3 \oplus \dots \oplus \alpha_n)} = \om^{\alpha_1 \nfp \dots \nfp \alpha_n -1}$.
\item If $k=1$, $\w (X) \geq \w(\om^{\alpha_1}\times \om^{\alpha_2}) \cdot \o(\om^{\alpha_3}\times\dots\times \om^{\alpha_n}) = \om^{(\alpha_1 \nfp (\alpha_2 - 1)) + (\alpha_3 \oplus \dots \oplus \alpha_n)} = \om^{\alpha_1 \nfp (\alpha_2 \nfp \dots \nfp \alpha_n -1)}$.
\item If $2 \leq k < n$, then $\w (X) \geq \w(\om^{\alpha_1}\times \dots \times \om^{\alpha_k}) \cdot \o(\om^{\alpha_{k+1}}\times\dots\times \om^{\alpha_n}) =
\om^{(\alpha_1 \oplus \dots \oplus \alpha_k) + (\alpha_{k+1} \oplus \dots \oplus \alpha_n)} = \om^{\alpha_1 \oplus \dots \oplus \alpha_n}$ by induction on $n$.
\item If $k=n$, then we cannot use Lemma \ref{lbiftransferable} (it would give a $+$ instead of a $\oplus$) so we proceed by induction on $(\alpha_1,\dots,\alpha_n)$ with the cartesian product ordering: 

We already know that $\w((\omom)^{\times n}) = \om^{\om\cdot n}$ for any $n\geq 2$ from Proposition \ref{prop-omom-times-n}.
Let $i$ such that $\alpha_i$ has the smallest last exponent $\rho$, i.e., $\alpha_i = \alpha_i' \nfp \om^{\rho}$ 
and $\alpha_1 \oplus \dots \oplus \alpha_n = (\alpha_1 \oplus \dots \oplus \alpha_i' \oplus \dots \oplus \alpha_n) \nfp \om^\rho$. The ordinal $\rho$ can either be null, a successor or a limit ordinal:
 \begin{itemize}
 \item If $\rho = 0$ then $X \geqstruct \om^{\alpha_1}\times\dots\times \om^{\alpha_i'} \cdot k \times \dots \times \om^{\alpha_n}$ for any $k<\om$, and $\alpha_i'$ is infinite. 
 Since  $\om^{\alpha_1}\times\dots\times \om^{\alpha_{i-1}}\times  \om^{\alpha_{i-+}} \times \dots \times \om^{\alpha_n}$ is self-residual, then according to Lemma \ref{lemma-cdot-k}, 
 $$\w(X) \geq \w(\om^{\alpha_1}\times\dots\times \om^{\alpha_i'} \times \dots \times \om^{\alpha_n})\cdot k
 =\om^{\alpha_1 \oplus \dots \oplus \alpha_i' \oplus \dots \oplus \alpha_n}\cdot k$$ by induction hypothesis, so $\w(X) \geq \om^{\alpha_1 \oplus \dots \oplus \alpha_n}$.
 \item Otherwise if $\rho = \rho' + 1$ is a successor then $X \geqstruct \om^{\alpha_1}\times\dots\times \om^{\alpha_i' \nfp \om^{\rho'} \cdot k} \times \dots \times \om^{\alpha_n}$. Now $\alpha_i'\, \nfp\, \om^{\rho'} \cdot k$ is either infinite or finite (case where $\alpha_i' = 0$ and $\rho = 1$). In the first case we call on our induction hypothesis, in the second case we are in the situation where more than two exponents are infinite but not all of them, which we already treated. In both cases, we get:
 $$\w(X) \geq \om^{\alpha_1 \oplus \dots \oplus (\alpha_i' \nfp \om^{\rho'} \cdot k) \oplus \dots \oplus \alpha_n}
 =\om^{\alpha_1 \oplus \dots \oplus \alpha_i' \oplus \dots \oplus \alpha_n}\cdot \om^{\om^{\rho'} \cdot k}$$ so $\w(X) \geq \om^{\alpha_1 \oplus \dots \oplus \alpha_n}$.
 \item If $\rho = \sup_i \;\rho_i$ is a limit, then we reason as in the successor case, with $\alpha_i' \nfp \om^{\rho_i}$ instead of $\alpha_i' \nfp \om^{\rho'} \cdot k$

\end{itemize}

\end{itemize}

Now for the upper bound: 
In the case $k\geq 2$, we
already know from Lemma \ref{lem-w-leq-o} that $\w(X) \leq \o(X)=\om^{\alpha_1\oplus\dots\oplus \alpha_n}$.

When $k\leq 1$ we prove the upper bound by induction on $(\alpha_1,\dots,\alpha_n)$, using the methods of residuals:

We already know from Proposition \ref{prop-om-times-n} that $\w(\om^{\times n}) = \om^{n - 1}$. 

Let $x\in X$. For all $i\in[1,n]$ there exists $0\leq\alpha_i' < \alpha_i$ and $m_i \in\mathbb{N}$ such that $x_i\leq \om^{\alpha_i'}\cdot m_i<\om^{\alpha_i}$. The residual $X_{\perp x}$ is included in an augmentation of a disjoint sum of terms of the form
$\underset{i\in I}{\bm{\times}} (<x_i) \times \underset{i\not\in I}{\bm{\times}} (\geq x_i)$ with $I \subsetneq [1,n], I \neq \emptyset$. Hence:

\begin{align*}
\w(X_{\perp x}) &\leq \bigoplus_I \w\left(\underset{i\in I}{\bm{\times}} (<x_i) \times \underset{i\not\in I}{\bm{\times}} (\geq x_i)\right)\\
&\leq \bigoplus_I \w\left(\underset{i\in I}{\bm{\times}} (\om^{\alpha_i'}\times \Gamma_{m_i}) \times \underset{i\not\in I}{\bm{\times}} \om^{\alpha_i}\right)\\
&\leq \bigoplus_I \w\left(\underset{i\in I}{\bm{\times}} \om^{\alpha_i'} \times \underset{i\not\in I}{\bm{\times}} \om^{\alpha_i}\right)\cdot \prod_{i\in I} m_i\\
&\leq \left(\bigoplus_I \w\left(\underset{i\in I}{\bm{\times}} \om^{\alpha_i'} \times \underset{i\not\in I}{\bm{\times}} \om^{\alpha_i}\right)\right)\cdot m\;.
\end{align*}

By induction hypothesis, 
 $\w\left(\underset{i\in I}{\bm{\times}} \om^{\alpha_i'} \times \underset{i\not\in I}{\bm{\times}} \om^{\alpha_i}\right) = \om^{\eta'}$ for some $\eta' < \alpha_1 \nfp (\alpha_2 \nfp \dots \nfp \alpha_n -1)$.
Therefore $\w(X) \leq \om^{\alpha_1 \oplus (\alpha_2 + \dots + \alpha_n -1)}$.

\end{proof}

Theorem \ref{thm-eta} can be extended to all ordinals $\alpha_i\geq 0$:
\begin{theorem}
\label{thm-eta-Z}
Let $X= \om^{\alpha_1}\times\dots\times \om^{\alpha_n}$, with $n \geq 1$ and $\alpha_1\geq\dots\alpha_n \geq 0$ be $n$ ordinals. If $\alpha_2 = \dots = \alpha_n = 0$ then $\w(X)=1$, otherwise
$\w(X) = \om^{\alpha_1 \oplus ((\alpha_2 \oplus \dots \oplus \alpha_n) - 1)}$.
\end{theorem}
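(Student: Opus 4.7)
The plan is to reduce the general statement to Theorem~\ref{thm-eta} by exploiting the fact that $\om^0 = 1$ is a singleton, so any factor $\om^{\alpha_i}$ with $\alpha_i = 0$ contributes nothing to the cartesian product up to isomorphism. Concretely, for any wqo $A$, $A \times 1 \equiv A$, and this extends by induction to any number of trivial factors. So in $X = \om^{\alpha_1} \times \dots \times \om^{\alpha_n}$, we can discard all the indices $i$ with $\alpha_i = 0$ without changing the isomorphism type (and therefore without changing $\w$).

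First I would dispatch the degenerate case. If $\alpha_2 = \dots = \alpha_n = 0$ then $X \equiv \om^{\alpha_1}$, which is a well-ordering (a singleton if $\alpha_1 = 0$ as well, since the convention $\alpha = 0$ was already allowed). By the initial example stating $\w(\alpha) = 1$ for every ordinal $\alpha > 0$, and trivially for $\alpha = 0$ too, we get $\w(X) = 1$, matching the statement.

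Otherwise let $k \in [2,n]$ be the largest index with $\alpha_k > 0$; this exists precisely because not all of $\alpha_2, \dots, \alpha_n$ vanish, and we have $k \geq 2$ and $\alpha_1 \geq \dots \geq \alpha_k > 0$. Then
\[
X \;\equiv\; \om^{\alpha_1} \times \dots \times \om^{\alpha_k} \times \underbrace{1 \times \dots \times 1}_{n-k} \;\equiv\; \om^{\alpha_1} \times \dots \times \om^{\alpha_k}\;.
\]
Applying Theorem~\ref{thm-eta} to the right-hand product (whose $k$ exponents are all strictly positive and weakly decreasing) yields
\[
\w(X) \;=\; \om^{\alpha_1 \oplus ((\alpha_2 \oplus \dots \oplus \alpha_k) - 1)}\;.
\]
Since $\alpha_{k+1} = \dots = \alpha_n = 0$, we have $\alpha_2 \oplus \dots \oplus \alpha_k = \alpha_2 \oplus \dots \oplus \alpha_n$, so this is exactly the claimed value $\om^{\alpha_1 \oplus ((\alpha_2 \oplus \dots \oplus \alpha_n) - 1)}$.

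There is no real obstacle here: the proof is a one-line reduction whose only content is the observation that trivial factors may be absorbed, together with the recognition that the case where all indecomposable factors collapse to one (or zero) has $\w = 1$ because $X$ is then itself a well-ordering. The only subtle point is handling $k = 1$, which is precisely what the first case of the theorem isolates, preventing us from misapplying Theorem~\ref{thm-eta} (which requires $n \geq 2$ strictly positive exponents).
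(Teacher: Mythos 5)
Your proposal is correct and follows essentially the same route as the paper's own proof: both discard the trivial factors $\om^{0}=1$ up to isomorphism, dispatch the case where at most one exponent is positive (giving width $1$ since $X$ is then a well-ordering), and otherwise apply Theorem~\ref{thm-eta} to the remaining $k\geq 2$ positive exponents, noting that the vanishing exponents do not affect the natural sum. No further comparison is needed.
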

\begin{proof}
Let $k \leq n$ such that $\alpha_1\geq\dots\geq\alpha_k > 0 = \alpha_{k+1} =\dots = \alpha_n$.
If $k=0$ or $1$, then $X\equiv\om^{\alpha_1}$ so $\w(X)=1$. Otherwise $k\geq 2$, and $X \equiv \om^{\alpha_1}\times\dots\times\om^{\alpha_k}$ so according to Theorem \ref{thm-eta}, $\w(X)  = \om^{\alpha_1 \oplus ((\alpha_2 \oplus \dots \oplus \alpha_k) - 1)} = \om^{\alpha_1 \oplus ((\alpha_2 \oplus \dots \oplus \alpha_n) - 1)}$.
\end{proof}

An immediate corollary of Theorem \ref{thm-eta} which will be useful later is:
\begin{corollary}[Monotonicity]
\label{thm-eta-monotonicity}
Let $\alpha_1,\dots,\alpha_i\dots,\alpha_n>0$ be $n$ ordinals, and let $\alpha_i' > \alpha_i$ for some $i \leq n$. Then
$$\w(\om^{\alpha_1} \times \dots \times \om^{\alpha_n}) <
\w( \om^{\alpha_1} \times \dots \times \om^{\alpha_i'}\times \dots \times \om^{\alpha_n})\;.$$
\end{corollary}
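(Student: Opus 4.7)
My plan is to apply Theorem~\ref{thm-eta} to both widths, reducing the claim to a strict inequality $\eta < \eta'$ between two ordinal exponents; since $\alpha \mapsto \om^\alpha$ is strictly monotone, this then yields $\om^\eta < \om^{\eta'}$. The first substantive step is to unify the three subcases of Theorem~\ref{thm-eta} into a single formula depending only on the multiset $\{\alpha_1,\dots,\alpha_n\}$. Setting $S \eqdef \alpha_1 \oplus \dots \oplus \alpha_n$, one checks that when at least two of the $\alpha_j$ are infinite the "$-1$" appearing in the formula is applied (after sorting) to the infinite ordinal $\alpha_2 \oplus \dots \oplus \alpha_n$ and therefore has no effect, giving $\eta = S$; in the other case (at most one $\alpha_j$ infinite) $S$ is a successor and $\eta$ is its immediate predecessor.

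Let $S'$ denote the natural sum obtained by substituting $\alpha_i'$ for $\alpha_i$, and let $k, k'$ count the infinite components before and after the update. I would first observe $S < S'$ from the strict monotonicity of $\oplus$ in each argument, and $k' \in \{k, k+1\}$ since only one coordinate is altered. A short case split on $(k, k')$ then finishes the proof. If $k' \leq 1$, both $\eta$ and $\eta'$ are predecessors of successor ordinals, and strict monotonicity of the predecessor map on successors gives $\eta < \eta'$. If $k \geq 2$ (hence also $k' \geq 2$), then $\eta = S < S' = \eta'$ directly. If $k = 1$ and $k' = 2$, then $\eta$ is the predecessor of $S$ while $\eta' = S'$, so $\eta < S \leq S' = \eta'$. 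The a priori case $k = 0, k' \geq 2$ does not occur because $k' - k \leq 1$.

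The only delicate point is really the unification step: one must carefully track how the subtraction by $1$ in Theorem~\ref{thm-eta} interacts with the finiteness or infiniteness of the tail $\alpha_2 \oplus \dots \oplus \alpha_n$, in particular to see that cases 1 and 2 of Theorem~\ref{thm-eta} both yield the predecessor of the successor ordinal $S$ while case 3 yields $S$ itself. Once this is set up, everything else reduces to standard strict monotonicity properties of $\oplus$, of the predecessor map, and of $\om^{(\cdot)}$.
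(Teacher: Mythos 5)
Your proposal is correct and follows the same route as the paper: apply Theorem~\ref{thm-eta} and show that the exponent $\eta$ is strictly increasing in each $\alpha_i$, then conclude via strict monotonicity of $\alpha\mapsto\om^{\alpha}$. In fact your version is more careful than the paper's one-line argument, since by rephrasing $\eta$ symmetrically in terms of $S=\alpha_1\oplus\dots\oplus\alpha_n$ and the number $k$ of infinite components, you sidestep the fact that the formula of Theorem~\ref{thm-eta} presupposes a sorted tuple (so that increasing one $\alpha_i$ may change which term the ``$-1$'' is applied to), a point the paper's proof glosses over.
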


\begin{proof}
The function $(\alpha_1,\dots,\alpha_n) \mapsto \alpha_1 \oplus \bigl( (\alpha_2 \oplus \dots \oplus \alpha_n) - 1\bigr)$ is strictly increasing, because the natural sum is strictly increasing, and the left subtraction (the only kind of subtraction defined for ordinals) is also strictly increasing in its left argument.
\end{proof}

\section{Product of infinite ordinals}
\label{section-infinite}

In this section we extend the width of the product of indecomposable ordinals (Theorem \ref{thm-eta-Z}) to the width of the product of infinite ordinals (Theorem \ref{limit-formula}).

Let $X = \alpha_1 \times \dots \times \alpha_n$ be a cartesian product of $n$ infinite ordinals. 
Each $\alpha_i$ is written in Cantor normal form without multiplicities as $\alpha_i = \sum_{j<l_i} \om^{\alpha_{i,j}}$, with $\alpha_{i,0}\geq\dots\geq \alpha_{i,l_i-1}$ for $i \in [1,n]$.

We partition $X$ into \emph{slices}: for any $s=(s(1),\dots,s(n))\in Sl(X)\eqdef l_1\times\dots\times l_n$, we define the slice $X_s$ as 
\begin{align*}
X_s &\eqdef \underset{i\in [1,n]}{\bm{\times}} X_{s,i}\\
\text{with } X_{s,i} &\eqdef
\left\{\delta \;\middle|\;\sum_{j< s(i)} \om^{\alpha_{i,j}} 
\leq \delta <
\sum_{j\leq s(i)} \om^{\alpha_{i,j}}\right\} \;.
\end{align*}

By abuse of language, we also call $s$ a slice.

Observe that as a substructure of $X$, $X_s$ is isomorphic to $\underset{i\in [1,n]}{\bm{\times}} \om^{\alpha_{i,s_i}}$. Therefore we know how to compute $\w(X_s)$ thanks to Theorem \ref{thm-eta-Z}.

We say $s\in Sl(X)$ is \emph{grounded} if there exists $k \in [1,n]$ such that $s(k)=0$. Let $Gr(X)\eqdef \{s\in Sl(X) \;|\;\exists k\in[1,n], s(k)=0\}$ the set of grounded slices.

 \begin{figure}[h]
            
               \begin{centering}
                \begin{tikzpicture}
                	\node (A) at (0, -.5){$\omom + \om$};
                	\node (B) at (2, -.5){$\om\cdot 3$};
                	\node (A) at (4, -.5){$\om^3 + \om^2 + 1$};
                	\node [red](Xs) at (4.8, 4.8) {$X_s$};
                	\node [green!50!black](Xt) at (4.8, 1.5) {$X_t$};
                    \draw [thick](0,0) -- (0,3) node  [midway, left]{$\omom$};
                    \node[draw,circle,inner sep=1pt,fill] at (0,0) {};
                    \draw [dotted, very thick](0,3) -- (0,3.4);
                    \node[draw,circle,inner sep=1pt,fill] at (0,3.6) {};
                    \draw [thick](0,3.6) -- (0,4.5)node  [midway, left]{$\om$};
                    \draw [dotted, very thick](0,4.5) -- (0,4.9);
                    \node[draw,circle,inner sep=1pt,fill] at (2,0) {};
                    \draw [thick](2,0) -- (2,1)node  [midway, left]{$\om$};
                    \draw [dotted, very thick](2,1) -- (2,1.4);
                    \node[draw,circle,inner sep=1pt,fill] at (2,1.6) {};
                    \draw [thick](2,1.6) -- (2,2.5)node  [midway, left]{$\om$};
                    \draw [dotted, very thick](2,2.5) -- (2,2.9);
                    \node[draw,circle,inner sep=1pt,fill] at (2,3.1) {};
                    \draw [thick](2,3.1) -- (2,4)node  [midway, left]{$\om$};
                    \draw [dotted, very thick](2,4) -- (2,4.4);
                    \node[draw,circle,inner sep=1pt,fill] at (4,0) {};
                    \draw [thick](4,0) -- (4,2)node  [midway, left]{$\om^3$};
                    \draw [dotted, very thick](4,2) -- (4,2.4);
                    \node[draw,circle,inner sep=1pt,fill] at (4,2.6) {};
                    \draw [thick](4,2.6) -- (4,4)node  [midway, left]{$\om^2$};
                    \draw [dotted, very thick](4,4) -- (4,4.4);
                    \node[draw,circle,inner sep=1pt,fill] at (4,4.7) {};
                    \draw [red] plot[smooth]coordinates{(-.1, 3.3)(-.3,3.7)(-.3,5)(.2,4.9)(2,1.4)(3.9,4.8)(4.2,4.65)(3.9,4.2)(2,-.1)(-.1,3.3)};
                    \draw [green!50!black] plot[smooth]coordinates{(0, 3.3)(-.2,3.7)(-.2,5)(.2,4.9)(2,4.5)(3.9,2.6)(4.2,2.1)(4,-.1)(2,2.9)(0,3.3)};
               \end{tikzpicture}

               \end{centering}
               \caption{Slices and grounded slices: $X=(\omom + \om)\times(\om\cdot 3)\times (\om^3+\om^2+1)$,
               $X_s$ and $X_t$ for  $s = (1,0,2)$, $t=(1,2,0)$ are both grounded.
            }
        \end{figure}
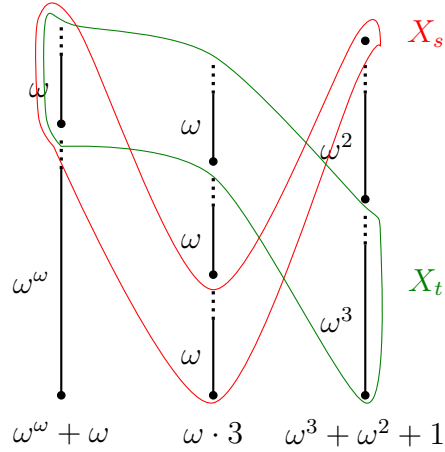

We compute $\w(X)$ by slices:

\begin{theorem}
\label{limit-formula}
\begin{equation*}
\w(X) = 
	\bigoplus_{s \in Gr(X)} 
	\w(X_s)\; .
	\end{equation*}
\end{theorem}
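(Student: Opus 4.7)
The plan is to prove both inequalities separately.

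For the lower bound $\w(X) \geq \bigoplus_{s \in Gr(X)} \w(X_s)$, I would apply Lemma~\ref{lem-Antoine-strat} to a quasi-incomparable family built from the grounded slices. Enumerate $Gr(X) = \{s_1, \dots, s_m\}$ so the ordering is both reverse topological for the slice poset ($s_j < s_i$ implies $j > i$) and weakly width-increasing ($\w(X_{s_1}) \leq \dots \leq \w(X_{s_m})$); these are compatible because Corollary~\ref{thm-eta-monotonicity} gives $\w(X_s) \geq \w(X_{s'})$ whenever $s < s'$ in the slice poset, so reverse topological already forces weak width-monotonicity on comparable pairs, while incomparable ties can be broken by width. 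Given a finite $Y \subseteq \bigcup_{j<i} X_{s_j}$, I build $X_{s_i}' \subseteq X_{s_i}$ by shifting, for every grounded coordinate $k \in K_i \eqdef \{k : s_i(k) = 0\}$, the $k$th factor of $X_{s_i}$ from $[0, \om^{\alpha_{k,0}})$ to the tail $(M_k, \om^{\alpha_{k,0}})$ with $M_k = \max\{y_k : y \in Y,\ y_k < \om^{\alpha_{k,0}}\}$; by indecomposability of $\om^{\alpha_{k,0}}$ this tail is isomorphic to the whole factor, so $X_{s_i}' \equiv X_{s_i}$. For any $y \in Y \cap X_{s_j}$ with $j < i$, the ordering excludes $s_j < s_i$, so either $s_j \perp s_i$ (making $x \perp y$ automatic) or $s_j > s_i$: in the latter case, $s_j$'s grounded coordinate $k^{**}$ is forced into $K_i$ because $s_i(k^{**}) \leq s_j(k^{**}) = 0$, the shift ensures $x_{k^{**}} > y_{k^{**}}$, and a coordinate where $s_j > s_i$ strictly yields $y_{k'} > x_{k'}$, so $x \perp y$. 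Lemma~\ref{lem-Antoine-strat} then yields $\w(X) \geq \w(X_{s_m}) + \dots + \w(X_{s_1})$, and since each $\w(X_{s_i})$ is $\om^{\eta_i}$ or $1$ and the indexing arranges these exponents to decrease left-to-right in that sum, the ordinary sum collapses to the natural sum $\bigoplus_{s \in Gr(X)} \w(X_s)$.

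For the upper bound I would use the method of residuals, in the spirit of the proof of Theorem~\ref{thm-eta}. For each $x \in X$ sitting in slice $s^*$, $X_{\perp x}$ is an augmentation of a disjoint union of products $\prod_{k \in I_1} (<x_k) \times \prod_{k \in I_2} (>x_k) \times \prod_{k \notin I_1 \cup I_2} \{x_k\}$ over disjoint nonempty $I_1, I_2 \subseteq [1,n]$. Rewriting $(<x_k)$ and $(>x_k)$ through the Cantor normal form of $\alpha_k$ presents each piece as a cartesian product of ordinals whose width is computable inductively (via Theorem~\ref{thm-eta-Z} on the indecomposable slices together with the theorem itself on smaller CNF data). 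Summing naturally over $(I_1, I_2)$ and taking the supremum over $x$ yields the upper bound.

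The subtle point, and the main obstacle, is that this residual estimate must land exactly at $\bigoplus_{s \in Gr(X)} \w(X_s)$ without overshooting: naive augmentations such as $X \geqaug (\om^{\alpha_{1,0}} \times A) \sqcup (\alpha_1' \times A)$ fail because slices that are non-grounded in $X$ can be grounded in the subproduct $\alpha_1' \times A$, adding spurious contributions. The residual analysis must therefore be carried out so that the sign-pattern pieces $(I_1, I_2)$ align with the grounded slices of $X$, and non-grounded slice contributions are absorbed into the dominating grounded ones via the strictness of Corollary~\ref{thm-eta-monotonicity}; this tight matching is the heart of the upper bound and the main place where the infinite-ordinal hypothesis on each $\alpha_i$ is essential.
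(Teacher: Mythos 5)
Your lower bound is essentially the paper's own argument: enumerate the grounded slices so that widths weakly increase, realize each $X_{s_i}'$ by shifting the grounded coordinates past the finitely many values already used by $Y$ (sound, since $\om^{\alpha_{k,0}}$ is indecomposable, so the tail is isomorphic to the whole factor), verify quasi-incomparability, and invoke Lemma~\ref{lem-Antoine-strat}. Your case split (for $j<i$, either $s_j\perp s_i$, which gives incomparability for free, or $s_j>s_i$, where the shifted grounded coordinate of $s_j$ supplies the second witness) is a slightly cleaner packaging of the paper's argument, which instead finds the coordinate $i_2$ with $s_j(i_2)>s_k(i_2)$ via a tie-breaking rule on $\sum_k s(k)$ together with the strict monotonicity of Corollary~\ref{thm-eta-monotonicity}. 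This half is fine.

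The upper bound, however, is where your proposal has a genuine gap, and you say so yourself: you describe the residual approach, observe that the natural sum over sign patterns $(I_1,I_2)$ threatens to overshoot $\bigoplus_{s\in Gr(X)}\w(X_s)$ because sub-products acquire spurious grounded slices, and then declare that resolving this ``tight matching'' is ``the heart of the upper bound'' --- without actually resolving it. That unresolved step \emph{is} the theorem; as written, the upper bound is a plan, not a proof, and it is not clear the residual route closes at all for general CNF data. The paper avoids residuals entirely here. It introduces the strict componentwise order $\prec$ on slices and the surjection $g:Sl(X)\to Gr(X)$, $g(s)(i)=s(i)-\min_k s(k)$, notes that each fibre $g^{-1}(s)$ is a $\prec$-chain whose slices form a lexicographic sum inside $X$ with widths decreasing up the chain (maximal at the grounded slice $s$), and that distinct fibres assemble into a disjoint sum up to augmentation:
$$X \geqaug \bigsqcup_{s\in Gr(X)}\;\sum_{s'\in g^{-1}(s)} X_{s'}\;,$$
whence Lemmas~\ref{lem-DSS-sum} and~\ref{lem-DSS-sqcup} give $\w(X)\leq\bigoplus_{s\in Gr(X)}\max_{s'\in g^{-1}(s)}\w(X_{s'})=\bigoplus_{s\in Gr(X)}\w(X_s)$ in one line. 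This fibration of the non-grounded slices over the grounded ones is exactly the idea your sketch is missing; with it, the ``absorption of non-grounded contributions'' you were hoping to engineer inside the residual computation happens for free via Lemma~\ref{lem-DSS-sum}.
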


First we prove the upper bound $\w(X) \leq
	\bigoplus_{s \in Gr(X)} 
	\w(X_s)$.

\begin{proof}[Proof of the upper bound of Theorem \ref{limit-formula}]
\quad\newline

For any two $s,t \in Sl(X)$, we write $s \prec t$ iff for all $i \in [1,n]$, $s(i) < t(i)$. Observe that, for any $s \prec t$:
\begin{itemize}
\item for any $x\in X_s,\; x' \in X_t$, we have $x <_X x'$.
\item for any $i \in [1,n]$, $\alpha_{i,s(i)}\geq \alpha_{i,t(i)}$, so $X_s \geqstruct X_t$, therefore $\w(X_s) \geq \w(X_t)$.
\end{itemize} 

We define a surjective function $g:Sl(X)\rightarrow Gr(X)$ which associates with any slice a grounded slice:
$$g(s)(i) \eqdef s(i) - k 
\text{ with } k = \min_{i\in[1,n]}  s(i)\;.$$

This surjection has interesting properties: \begin{itemize}
\item If $s$ is grounded then $g(s)=s$, otherwise $g(s)\prec s$. Thus $\w(X_s)\leq \w(X_{g(s)})$.
\item For any $s\neq t \in Sl(X)$ such that $g(s) = g(t)$, $ s \prec t \text{ or } s \succ t$.
Therefore for any $s\in Gr(X)$, the union of all the $X_{s'}$ such that $g(s')=s$ is a lexicographic sum in $X$.
\end{itemize}
 
Thus $X$ can be expressed as an augmentation of the disjoint sum of lexicographic sums of $X_s$ gathered by the image of $s$ through $g$:
$$X \geqaug \underset{s \in Gr(X)}{\bigsqcup}\;\underset{s'\in g^{-1}(s)}{\sum} X_{s'}\;,$$

which implies, according to Lemmas \ref{lem-DSS-sqcup} and \ref{lem-DSS-sum},
$$\w(X) \leq \underset{s \in Gr(X)}{\bigoplus}\;\underset{s'\in g^{-1}(s)}{\max} \w(X_{s'})
= \underset{s \in Gr(X)}{\bigoplus} \w(X_s)\;. $$

\end{proof}

We need to introduce a few notations before proving the lower bound of Theorem \ref{limit-formula}.

Let $Y$ be a finite set of elements of $X$.
We define a function $\shift$ which outputs a subset of $X$:
$\shift(X_s,Y)\eqdef \underset{i\in [1,n]}{\bm{\times}} \shift(X_{s,i},Y)$
with $$\shift(X_{s,i},Y)\eqdef \begin{cases}

\bigr{\{\xi(Y,i) 
< \delta <
 \om^{\alpha_{i,0}}\bigl\} \text{ if $s(i) = 0$,}}\\
{X_{s,i} \text{ otherwise,}}\\
\end{cases}$$
where we define $\xi(Y,i)$ as the max of the $i$-th components of elements of $Y$ which are less than $\om^{\alpha_i,0}$:
$$\xi(Y,i) \eqdef \max \;
\left\{y(i)\;|\;y \in Y, y(i) < \om^{\alpha_i,0}\right\}\;.
$$

If $\{y(i)\;|\;y \in Y, y(i) < \om^{\alpha_i,0}\} = \emptyset$ then let $\xi(Y,i) = \texttt{-}1$.
\begin{figure}[h]
\begin{centering}
                \begin{tikzpicture}
                	\node (A) at (0, -.5){$\omom + \om$};
                	\node (B) at (2, -.5){$\om\cdot 3$};
                	\node (C) at (4, -.5){$\om^3 + \om^2 + 1$};
                	\node [blue!70!black](Y) at (3,3){$Y$};
                	\node [green!70!black](Xt) at (5.4, 1.5) {$\shift(X_s,Y)$};
                    \draw [thick](0,0) -- (0,3);
                    \draw [dotted, very thick](0,3) -- (0,3.4);
                    \draw [thick](0,3.5) -- (0,4.5);
                    \draw [dotted, very thick](0,4.5) -- (0,4.9);
                    \draw [thick](2,0) -- (2,1);
                    \draw [dotted, very thick](2,1) -- (2,1.4);
                    \draw [thick](2,1.5) -- (2,2.5);
                    \draw [dotted, very thick](2,2.5) -- (2,2.9);
                    \draw [thick](2,3) -- (2,4);
                    \draw [dotted, very thick](2,4) -- (2,4.4);
                    \draw [thick](4,0) -- (4,2);
                    \draw [dotted, very thick](4,2) -- (4,2.4);
                    \draw [thick](4,2.5) -- (4,4);
                    \draw [dotted, very thick](4,4) -- (4,4.4);
                    \draw [ultra thin, *-] (4,4.6) -- (4,4.7);
                    \draw [green!70!black] plot[smooth]coordinates{(0, 3.3)(-.2,3.7)(-.2,5)(.2,4.9)(2,2.9)(3.9,2.6)(4.2,2.1)(4,0.8)(2,1.4)(0,3.3)};
                    \draw [blue!70!black](0,4.2) -- (2,3.8) -- (4,0.1);
                    \draw [blue!70!black](0,4.2) -- (2,3.6) -- (4,0.4);
                    \draw [blue!70!black](0,3.8) -- (2,.1) -- (4,4.65);
                    \draw [blue!70!black](0,4) -- (2,.5) -- (4,3.5);
                    \draw [blue!70!black](0,4) -- (2,.7) -- (4,3.3);
                    \node [red](tresh) at (4.9,0.7){$\xi(Y,3)$};
                    \draw [red, very thick](3.9,0.7) -- (4.1,0.7);
               \end{tikzpicture}

               \end{centering}
               \caption{$\shift(X_s,Y)$, for $s=(1,1,0)$, is incomparable to $Y$}
        \end{figure}
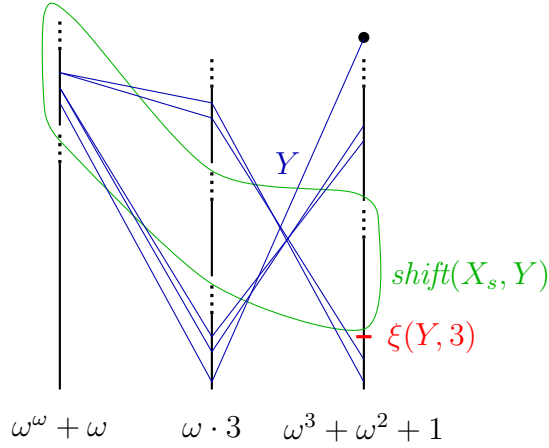

\begin{lemma}
\label{lem-shift-iso}
$\shift(X_{s})$ is isomorphic to $X_s$.
\end{lemma}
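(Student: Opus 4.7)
The claim reduces, by definition of the cartesian product, to showing that for each coordinate $i \in [1,n]$, the factor $\shift(X_{s,i}, Y)$ is order-isomorphic to $X_{s,i}$. My plan is to split on whether $s(i) = 0$ or not.

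If $s(i) \neq 0$, then $\shift(X_{s,i}, Y) = X_{s,i}$ by definition, and there is nothing to prove. So the only real case is $s(i) = 0$, where
\[
X_{s,i} = \bigl\{\delta \mid 0 \leq \delta < \om^{\alpha_{i,0}}\bigr\} \cong \om^{\alpha_{i,0}},
\qquad
\shift(X_{s,i},Y) = \bigl\{\delta \mid \xi(Y,i) < \delta < \om^{\alpha_{i,0}}\bigr\}.
\]
The second set, as a suborder of the ordinals, is isomorphic to the ordinal $\om^{\alpha_{i,0}} - (\xi(Y,i) + 1)$, interpreting the subtraction as the unique $\gamma$ with $(\xi(Y,i)+1) + \gamma = \om^{\alpha_{i,0}}$ (with the convention that when $\xi(Y,i) = \texttt{-}1$ the shift is trivial and the set is $\om^{\alpha_{i,0}}$ itself).

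The crux is therefore the identity $\om^{\alpha_{i,0}} - (\xi(Y,i)+1) = \om^{\alpha_{i,0}}$, which follows from indecomposability: since $Y$ is finite, $\xi(Y,i)$ is a strict bound on finitely many ordinals each $< \om^{\alpha_{i,0}}$, so $\xi(Y,i)+1 < \om^{\alpha_{i,0}}$, and because $\om^{\alpha_{i,0}}$ is additive principal, $\beta + \om^{\alpha_{i,0}} = \om^{\alpha_{i,0}}$ for every $\beta < \om^{\alpha_{i,0}}$. Applying this with $\beta = \xi(Y,i)+1$ gives the desired isomorphism.

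Once each coordinate is handled, the product isomorphism $\shift(X_s,Y) = \bigtimes_{i} \shift(X_{s,i},Y) \cong \bigtimes_i X_{s,i} = X_s$ is obtained componentwise. I do not expect any genuine obstacle: the only subtlety is simply to notice that the finiteness of $Y$ guarantees $\xi(Y,i) < \om^{\alpha_{i,0}}$, so that indecomposability of $\om^{\alpha_{i,0}}$ applies.
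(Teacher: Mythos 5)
Your proof is correct and follows essentially the same route as the paper's: a coordinate-wise case split on whether $s(i)=0$, and in the nontrivial case the observation that the tail $\{\delta \mid \xi(Y,i) < \delta < \om^{\alpha_{i,0}}\}$ is a left subtraction from the indecomposable ordinal $\om^{\alpha_{i,0}}$ and hence isomorphic to it. The only difference is cosmetic (you subtract $\xi(Y,i)+1$ where the paper writes $\om^{\alpha_{i,0}} - \xi(Y,i)$, which changes nothing by indecomposability).
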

\begin{proof}
For any $i \in [1,n]$, $\shift(X_{s,i})$ is isomorphic to $X_{s,i}$:
If $s(i) > 0$ then $\shift(X_{s,i})=X_{s,i}$.
Otherwise $s(i) = 0$ and $\shift(X_{s,i})= \{\xi(Y,i) 
< \delta <
 \om^{\alpha_{i,0}}\} 
\approx  \om^{\alpha_{i,0}} - \xi(Y,i)$.
Since $\alpha_i$ is infinite, $\om^{\alpha_{i,0}}$ is infinite indecomposable. Therefore
$\om^{\alpha_{i,0}} - \xi(Y,i) \approx \om^{\alpha_{i,0}} = X_{s,i}$.
\end{proof}

\medskip

\begin{proof}[Proof of the lower bound of Theorem \ref{limit-formula}]
\quad\newline 
	
Proof idea: The notion of quasi-incomparable family is dependent on the order of the subsets. Thus we will order the grounded slices in such a way that $(X_s)_{s\in Gr(X)}$ is a quasi-incomparable family of subsets of $X$, and that Lemma \ref{lem-Antoine-strat} returns the expected result.
\medskip
	
	We first define the order.
	According to Theorem \ref{thm-eta-Z}, $\w(X_s)$ can be written under the form $\w(X_s)=\om^{\delta_s}$. Therefore we can order all the grounded slices as $s_1,\dots,s_L$ 
	(where $L = \prod l_i - \prod (l_i -1)$) in such a way that for any $i<j$, $\delta_{s_i}\leq \delta_{s_j}$. Then $\bigoplus_{s \in Gr(X)} 
	\w(X_s) = \w(X_{s_L}) \nfp \w(X_{s_{L-1}}) \nfp \dots \nfp \w(X_{s_1}) $.
	
	Since there are some slices $s\neq t$ such that $\delta_s =\delta_t$, we can refine our ordering of the slices. For any $i<j$, \begin{itemize}
	\item either $\delta_{s_i} < \delta_{s_j}$,
	\item or $\delta_{s_i} = \delta_{s_j}$ and 
	$\sum_{k\in[1,n]} s_i(k) \geq \sum_{k\in[1,n]} s_j(k)$.
	\end{itemize}
	From now on we write $\delta_i$ for $\delta_{s_i}$ and $X_i$ for $X_{s_i}$.
	\medskip
	
Now we will show that $X_{s_1},\dots, X_{s_L}$ form a quasi-incomparable family of subsets of $X$. 

Fix $k \in [1,L]$ and $Y \subseteq X_{s_1}\cup\dots\cup X_{s_{k-1}}$ a finite set. Then we define $X_{s_k}'\subseteq X_{s_k}$ as $\shift(X_{s_k},Y)$. According to Lemma \ref{lem-shift-iso}, $X_{s_k}'$ is indeed isomorphic to $X_{s_k}$. 

\medskip
Now we will show $X_{s_k}' \perp Y$. For any elements $y\in Y$, it is sufficient to find $i_1,i_2 \in [1,n]$ such that the $i_1$th component of $y$ is below $\shift(X_{s_k,i_1})$, and the $i_2$th component of $y$ is above $\shift(X_{s_k,i_2})$.

For any $j<k$, $s_j$ is grounded, hence there exists $i_1 \in [1,n]$ such that $s_j(i_1) = 0$. If $s_k(i_1) > 0$ then all elements $X_{s_j,i_1}$ are below 
$X_{s_k,i_1}$. Otherwise if $s_k(i_1) = 0$ then $\shift(X_{s_k,i_1}) = \{\xi(Y,i_1) 
< \delta <
 \om^{\alpha_{i_1,0}}\}$ and the $i_1$th component of all elements of $Y\cap X_{s_j}$ is below $\xi(Y,i_1)$.

We claim that there exists $i_2$ such that $s_j(i_2) > s_k(i_2)$, which means that all elements of $X_{s_j,i_2}$ are above 
$X_{s_k,i_2}$. 
Since $j<k$, we know $\delta_j \leq \delta_k$:
\begin{itemize}
\item If $\delta_j = \delta_k$ then $\sum_{i\in[1,n]} s_j(i) \geq \sum_{i\in[1,n]} s_k(i)$ so there exists $i_2$ such that $s_j(i_2) > s_k(i_2)$. 
\item Otherwise $\delta_j < \delta_k$. Assume for the sake of contradiction that for all $i \in [1,n]$, $s_j(i) \leq s_k(i)$. Then $\alpha_{i,s_j(i)} \geq \alpha_{i,s_k(i)}$. Thus Theorem \ref{thm-eta-monotonicity} leads us to a contradiction: $\delta_j \geq \delta_k$.
\end{itemize}

It follows that $X_{s_k}' \perp (Y \cap X_j)$ for any $j<k$.
We know $Y \subseteq X_{s_1}\cup\dots\cup X_{s_{k-1}}$ so $X_{s_k}' \perp Y$. Therefore $(X_{s_j})_{j \in [1,L]}$ is a quasi-incomparable family, hence according to Lemma \ref{lem-Antoine-strat}, $$\w(X) \;\geq\;\w(X_{s_L}) \nfp \dots \nfp \w(X_{s_1}) \;= \bigoplus_{s \in Gr(X)} 
	\w(X_s)\;.$$
	
\end{proof}
\medskip

\subsection{Alternative expressions for $w(X)$}

Our formula to compute $\w(X)$ is expressed for ordinals written in normal form without multiplicities. However, there are two other ways to write ordinals in normal form, which are more commonly used:

\begin{align*}
\alpha_i 
		&= \sum_{j<l_i} \om^{\alpha_{i,j}} &\text{ (this is the one we used until now),}\\
		&= \sum_{j<l'_i} \om^{\alpha'_{i,j}}\cdot a_{i,j} &\text{ in developed normal form,}\\
		&= \om^{\alpha_i'}\cdot a_i \nfp \sigma_i &\text{ in short normal form.} \\
\end{align*}

When we go from the normal form without multiplicities to the developed normal form, it allows us to regroup several slices $s\in Sl(X)$ into one slice $t \in Sl'(X) \eqdef l_1'\times\dots\times l_n'$. We denote $Gr'(X)$ as the grounded slices of $Sl'(X)$ We define $f$ as the function from $Sl(X)$ to $Sl'(X)$ such that $f(s)=t$ if $\alpha_{i,s(i)}=\alpha'_{i,t(i)}$ for any $i\in[1,n]$. Let $X_t \eqdef \underset{i\in [1,n]}{\bm{\times}} \om^{\alpha'_{i,t(i)}}$. Then $f(s) = t \implies X_s \equiv X_t$. Therefore for every $t$ there exists $a_t\in\mathbb{N}$ such that
$$\underset{s \in Gr(X), f(s) = t}{\bigoplus} \w(X_s) = \w(X_t) \otimes a_t\;.$$

Note: this $\otimes$ can be replaced by $\cdot$ the usual product.

We want to compute $a_t = \bigl\{ s \in Gr(X)|f(s) = t\bigr\}$. First observe that $\bigl\{ s \in  Sl(X) |f(s) = t\bigr\} = \prod_{i\in [1,n]} a_{i,t(i)}$.
If $f(s) = t$ and $s \in Gr(X)$, then $t \in Gr'(X)$. This implies that $s$ is amongst the slices that are null in at least one of the $i$ where $t$ is null. Hence:
\begin{align*}
a_t &= \left(\prod_{t(i)=0} a_{i,0} - \prod_{t(i)=0} (a_{i,0} -1)\right) \cdot \prod_{t_i>0} a_{i,t(i)}\;,\\
\shortintertext{and}\quad \w(X) &= \underset{t \in Gr'(X)}{\bigoplus} \w(X_t)\otimes a_t\;.
\end{align*}

To go from developed normal form to short normal form, we can regroup together all slices $t$ in meta-slices $M_I\subseteq Sl'(X)$ indexed by $I \subseteq [1,n]$, such that $t \in M_I$ if for all $i\in [1,n]$,  $i \in I \Leftrightarrow t(i)=0$. Note that $M_{\emptyset}$ contains no grounded slices.

Let $X_I = \left(\underset{i\in I}{\bm{\times}}\om^{\alpha_i'}\right) \times \Bigl(\underset{i\not\in I}{\bm{\times} \sigma_i}\Bigr)$. Observe that, contrary to $X_s$ and $X_t$, $X_I$ is not a product of indecomposable ordinals. Our goal is to express $\w(X)$ as a natural sum of $\w(X_I)$:

\begin{align*}
\w(X)
&= \underset{I\neq\emptyset}{\bigoplus}\;\left(\underset{t\in M_I\cap Gr'(X)}{\bigoplus} w(X_t) \otimes a_t\right)\\
&= \underset{I\neq\emptyset}{\bigoplus}\;\left(\underset{t\in M_I\cap Gr'(X)}{\bigoplus} w(X_t) \otimes \prod_{t_i>0} a_{i,t(i)}\right)
\otimes \left(\prod_{i\in I} a_{i,0} - \prod_{i\in I} (a_{i,0} -1)\right)\\
&= \underset{I\neq\emptyset}{\bigoplus}\;\w(X_I)
\otimes L_I\;, \shortintertext{with} L_I &= \prod_{i\in I} a_{i} - \prod_{i\in I} (a_{i} -1)\;.
\end{align*}

Therefore we know three ways to express the width of a cartesian product of $n$ infinite ordinals, depending on the normal form in which they are written.

\begin{theorem}[Cartesian product of infinite ordinals]
\begin{align*}
\w(X) 
&= \bigoplus_{s\in Gr(X)} &&\w(X_s)\\
&= \bigoplus_{t \in Gr'(X)} &&\w(X_t) \otimes \left(\prod_{t(i)=0} a_{i,0} - \prod_{t(i)=0} (a_{i,0} -1)\right) \cdot \prod_{t_i>0} a_{i,t(i)} \\
&= \bigoplus_{I\subseteq[1,n],I\neq\emptyset}&&\w(X_I) \otimes \left(\prod_{i\in I} a_{i} - \prod_{i\in I} (a_{i} -1)\right)\;.
\end{align*}

\end{theorem}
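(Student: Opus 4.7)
The first identity is exactly Theorem \ref{limit-formula}, so the plan reduces to justifying the two regroupings that turn it into the developed and short normal form expressions.

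For the second identity, I would start from $\bigoplus_{s \in Gr(X)} \w(X_s)$ and partition $Gr(X)$ along the fibres of the collapsing map $f : Sl(X) \to Sl'(X)$ defined above. Because $f(s)=t$ forces $X_s \equiv X_t$, and because $\w(X_t)$ is of the form $\om^{\delta_t}$ by Theorem \ref{thm-eta-Z}, hence additively indecomposable, the natural sum of $a_t \eqdef \bigl|\{s \in Gr(X) : f(s)=t\}\bigr|$ copies of $\w(X_t)$ collapses to $\w(X_t) \otimes a_t$. It then remains to verify the counting identity
\[ a_t = \Bigl(\prod_{t(i)=0} a_{i,0} - \prod_{t(i)=0}(a_{i,0}-1)\Bigr) \cdot \prod_{t(i)>0} a_{i,t(i)} \]
by a coordinatewise inclusion--exclusion: each coordinate $i$ with $t(i) > 0$ contributes $a_{i,t(i)}$ preimages $s(i)$ freely, while the coordinates in $I = \{i : t(i) = 0\}$ must jointly contain at least one $s(i) = 0$ for $s$ to be grounded, and the ``all minus none'' expression counts precisely the tuples in $\prod_{i\in I}a_{i,0}$ with at least one zero entry.

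For the third identity, I would group the grounded slices $t \in Gr'(X)$ according to their zero-support $I = \{i : t(i)=0\}$, which ranges over non-empty subsets of $[1,n]$. The factor $L_I = \prod_{i\in I}a_{i} - \prod_{i\in I}(a_{i}-1)$ depends only on $I$ and pulls out of the inner sum; what remains, $\bigoplus_{t \in M_I} \w(X_t) \otimes \prod_{i\notin I} a_{i,t(i)}$, should coincide with $\w(X_I)$. Indeed, $X_I = \bigl(\underset{i\in I}{\bm{\times}} \om^{\alpha_i'}\bigr) \times \bigl(\underset{i\notin I}{\bm{\times}} \sigma_i\bigr)$ is itself a cartesian product of infinite ordinals, and applying the second formula to $X_I$ returns precisely the inner sum, once the grounded slices of $X_I$ in developed normal form are identified with $M_I$ (the coordinates $i \in I$ are automatically ``at position $0$'' in $X_I$ because $\om^{\alpha_i'}$ has a single Cantor summand).

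The main obstacle I expect is exactly this last bookkeeping step: matching the grounded slices of the product $X_I$ with the meta-slice $M_I$ of $X$, and checking that the indecomposable widths $\w(X_t)$ and the multiplicities $\prod_{i\notin I}a_{i,t(i)}$ agree on both sides. Once this correspondence is set up, the three expressions are purely algebraic rewrites of one another, invoking no order-theoretic content beyond Theorems \ref{thm-eta-Z} and \ref{limit-formula} together with the fact that natural sums of equal additively indecomposable ordinals collapse to natural products.
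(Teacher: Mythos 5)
Your proposal follows essentially the same route as the paper: the first identity is Theorem \ref{limit-formula}, the second is obtained by fibring $Gr(X)$ over the collapsing map $f$ and counting grounded preimages with the same ``all minus none'' inclusion--exclusion, and the third by grouping over the zero-support $I$, pulling out $L_I$ by distributivity, and recognising the inner sum as $\w(X_I)$. The bookkeeping step you single out --- identifying the slices of $X_I$ with $M_I$ and noting that every slice of $X_I$ is automatically grounded because each coordinate $i\in I$ contributes a single Cantor term --- is exactly what the paper also leaves implicit, so nothing further is needed.
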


\section{Combining finite and infinite ordinals}
\label{section-finite-infinite}

\begin{theorem}
\label{thm-product-finite-infinite}
For $X$ a cartesian product of infinite ordinals, and $k_1,\dots,k_m \in \mathbb{N}$, 
$$\w(X\times k_1 \times \dots \times k_m) = \w(X) \otimes k_1 \otimes \dots \otimes k_m$$
\end{theorem}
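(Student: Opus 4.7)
The plan is to prove the two inequalities separately: the upper bound will follow from known results on antichain products, while the lower bound will use two nested applications of Lemma \ref{lem-Antoine-strat} (combination of strategies for Antoine), mirroring the lower bound proof of Theorem \ref{limit-formula} but with each slice $X_s$ replaced by $X_s \times F$ throughout, where $F := k_1 \times \dots \times k_m$ is viewed as a finite poset of cardinal $K := k_1 \cdots k_m$.

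For the upper bound, I would observe that $k_i \geqaug \Gamma_{k_i}$ for each $i$, and that $\Gamma_{k_1} \times \dots \times \Gamma_{k_m}$ is isomorphic to $\Gamma_K$. Hence $X \times k_1 \times \dots \times k_m$ is an augmentation of $X \times \Gamma_K$, and Lemma \ref{lem-leq-aug} combined with Lemma \ref{lem-times-Gamma} gives $\w(X \times k_1 \times \dots \times k_m) \leq \w(X) \otimes K = \w(X) \otimes k_1 \otimes \dots \otimes k_m$, since the natural product with a finite ordinal coincides with the usual product and is associative.

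For the lower bound, my first step is to show that, for any slice $X_s$ of $X$ (a product of infinite indecomposable ordinals, hence self-residual), one has $\w(X_s \times F) \geq \w(X_s) \otimes K$. I would enumerate $F = \{f_1, \dots, f_K\}$ along a linear extension of $F^{\mathrm{op}}$, so that $l < j$ implies $f_l \not<_F f_j$. The family $(X_s \times \{f_j\})_{j \in [1,K]}$ is then quasi-incomparable in $X_s \times F$: for any $(y, f_l) \in Y \subseteq \bigcup_{l<j} X_s \times \{f_l\}$ and candidate $(x, f_j)$, either $f_l \perp_F f_j$ (making the pair automatically incomparable in the product) or $f_l >_F f_j$ (reducing incomparability to the single constraint $x \not\leq y$ in $X_s$), and self-residuality of $X_s$ provides an isomorphic copy of $X_s$ inside $X_s \times \{f_j\}$ satisfying all such constraints simultaneously. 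Lemma \ref{lem-Antoine-strat} then yields $\w(X_s \times F) \geq \w(X_s) \cdot K$, which equals $\w(X_s) \otimes K$ because $\w(X_s)$ is $1$ or indecomposable by Theorem \ref{thm-eta-Z}.

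My second step is to lift this to $X \times F$ via the slice decomposition. I would order the grounded slices $s_1, \dots, s_L$ of $X$ exactly as in the proof of the lower bound of Theorem \ref{limit-formula} (with ties of $\delta_s$ broken by decreasing coordinate sum), and check that $(X_{s_i} \times F)_{i \in [1,L]}$ is quasi-incomparable in $X \times F$. Given finite $Y \subseteq \bigcup_{j<i} X_{s_j} \times F$, the set $\shift(X_{s_i}, Y_{|X}) \times F$ is, by Lemma \ref{lem-shift-iso}, a copy of $X_{s_i} \times F$ whose $X$-coordinates are incomparable in $X$ to those of the projection $Y_{|X}$, which already forces incomparability with $Y$ in $X \times F$ regardless of the $F$-coordinates; this is the one nontrivial verification. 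Lemma \ref{lem-Antoine-strat} combined with the first step then gives $\w(X \times F) \geq \sum_{i=L}^{1} \w(X_{s_i}) \otimes K = \bigoplus_{i=1}^L \w(X_{s_i}) \otimes K = \w(X) \otimes K$, where the ordinal sum collapses into the natural sum by the decreasing-exponent ordering of the slices, and the last equality is distributivity of $\otimes$ over $\oplus$ for finite $K$. The main conceptual obstacle, if any, is to notice that the shift-based argument of Theorem \ref{limit-formula} transfers to $X \times F$ verbatim: incomparability in $X$ is automatically inherited by the larger product, so nothing new has to be done in the $F$-coordinate.
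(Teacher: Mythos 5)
Your proof is correct and uses the same core machinery as the paper (the $\shift$ construction, Lemma \ref{lem-shift-iso}, the slice ordering from Theorem \ref{limit-formula}, and Lemma \ref{lem-Antoine-strat}), but it organizes the lower bound differently. The paper first observes $X\times k_1\times\dots\times k_m \leqaug X\times(k_1\cdot\,\dots\,\cdot k_m)$, so it never has to treat the finite factor as a poset: it reduces to a single chain $k$ and then runs \emph{one} flat quasi-incomparable family of $L\cdot k$ pieces $X_{s_q}\times\{r\}$, interleaved so that the $k$ levels of each grounded slice come consecutively with $r$ decreasing. You instead keep $F=k_1\times\dots\times k_m$ as a poset and nest two applications of Lemma \ref{lem-Antoine-strat}: one inside each grounded slice (your linear extension of $F^{\mathrm{op}}$ plays exactly the role of the paper's decreasing $r$), and one across grounded slices. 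Your inner step is in fact a special case of Lemma \ref{lbiftransferable}: $X_s$ is transferable and $\o(F)=K$, so $\w(X_s\times F)\geq\w(X_s)\cdot K$ comes for free and you could have cited it rather than re-proving it. Both routes are sound; the nested version costs you an extra appeal to monotonicity of ordinal addition when you substitute the inner bounds into the outer sum, which you handle correctly, and the collapse of the ordinary sum into the natural sum is justified by the decreasing order of the $\delta_{s_i}$ exactly as in the paper. One imprecision to fix: your parenthetical ``a product of infinite indecomposable ordinals, hence self-residual'' is not true of \emph{every} slice --- a slice with all exponents $\alpha_{i,s(i)}=0$ (e.g.\ the top slice of $(\om+1)^{\times n}$) is a singleton and not self-residual, and for such a slice the claim $\w(X_s\times F)\geq\w(X_s)\otimes K$ can actually fail. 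This is harmless because you only ever apply the claim to grounded slices, which always contain at least one factor $\om^{\alpha_{i,0}}$ with $\alpha_{i,0}\geq 1$, and a product with one self-residual factor is self-residual; you should restrict the statement of your first step accordingly.
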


\begin{proof}
$X\times k_1 \times \dots \times k_m \geqaug X\times \Gamma_{k_1} \times \dots \times \Gamma_{k_m}$ so by Lemma \ref{lem-times-Gamma} we get the bound $\w(X\times k_1 \times \dots \times k_m) \leq \w(X) \otimes k_1 \otimes \dots \otimes k_m$.

To prove the other direction, first observe that $X\times k_1 \times \dots \times k_m \leqaug X\times (k_1\cdot \; \dots\; \cdot k_m)$. Therefore we only need to prove that $\w(X\times k)\geq \w(X)\otimes k$ for any $k \in \mathbb{N}$.

\medskip
We will adapt the proof of the lower bound of Theorem \ref{limit-formula}, keeping in mind the notations $\shift,\xi, Sl(X), Gr(X), \dots$

Let $Z \eqdef X \times k$. For all $s \in Sl(X)$, for all $r<k$ , we define the slice $Z_{s}^r \eqdef X_s \times \{r\}$.

Let $s_1,\dots,s_L$ be the grounded slices of $X$ ordered as in the proof of the lower bound of Theorem \ref{limit-formula}: as before $(X_{s_i})_{i\in[1,L]}$ is a quasi-incomparable family of subsets of $X$, and $\w(X_{s_L})+\dots + \w(X_{s_1}) = \bigoplus_{s \in Gr(X)} \w(X_s)$. 

For any $j=q\cdot k + r < k\times L$ with $q<L,r<k$, let $Z_j \eqdef Z_{s_{q + 1}}^{k-1-r}$: it boils down to ordering all the slices $Z_s^r$ for $s\in Gr(X)$ and $r<k$ such that $$Z_0,\dots,Z_{(k\times L) - 1} =Z_{s_1}^{k-1},Z_{s_1}^{k-2},\dots,Z_{s_1}^0,Z_{s_2}^{k-1},\dots,Z_{s_L}^0\;.$$

\medskip

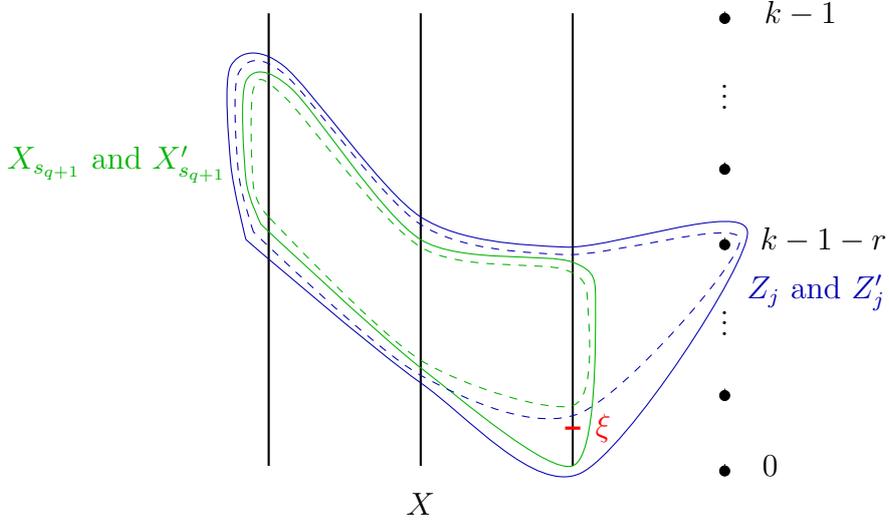
\begin{figure}[h]
\begin{centering}
                \begin{tikzpicture}
                	
                	\node (B) at (2, -.5){$X$};
                	\node [green!70!black](Xt) at (-2, 4) {$X_{s_{q+1}} \text{ and } X'_{s_{q+1}}$};
                	\node [blue!70!black](Xt) at (7.2, 2.3) {$Z_{j} \text{ and } Z'_{j}$};
                    \draw [thick](0,0) -- (0,6);
                    \draw [thick](2,0) -- (2,6);
                    \draw [thick](4,0) -- (4,6);
                    \draw [ultra thin, *-] (6,0) -- (6,0);
                    \draw [ultra thin, *-] (6,1) -- (6,1);
                    \draw [ultra thin, *-] (6,3) -- (6,3);
                    \draw [ultra thin, *-] (6,6) -- (6,6);
                    \draw [ultra thin, *-] (6,4) -- (6,4);
                    \node (z) at (6.6,0){$0$};
                    \node (k) at (7,6){$k-1$};
                    \node (k-) at (7.3,3){$k-1-r$};
                    \node (dots) at (6,2){$\vdots$};
                    \node (dots) at (6,5){$\vdots$};
                    \draw [green!70!black, dashed] plot[smooth]coordinates{(0, 3.3)(-.2,3.7)(-.2,5)(.2,4.9)(2,2.9)(3.9,2.6)(4.2,2.1)(4,0.8)(2,1.4)(0,3.3)};
                    \draw [green!70!black] plot[smooth]coordinates{(-.1, 3.2)(-.3,3.8)(-.3,5.1)(.3,5)(2,3)(4,2.7)(4.3,2.1)(4,0)(2,1.3)(-.1,3.2)};
                    \draw [blue!70!black, dashed] plot[smooth]coordinates{(-.2, 3.1)(-.4,3.9)(-.4,5.2)(.2,5.2)(2,3.2)(3.9,2.8)(6.2,3)(4.1,0.7)(2,1.2)(-.2, 3.1)};
                    \draw [blue!70!black] plot[smooth]coordinates{(-.3, 3)(-.5,4)(-.5,5.3)(.2,5.3)(2,3.3)(3.9,2.9)(6.3,3.1)(4.1,-.1)(2,1.1)(-.3, 3)};
                    \node [red](tresh) at (4.4,0.5){$\xi$};
                    \draw [red, very thick](3.9,0.5) -- (4.1,0.5);
               \end{tikzpicture}

               \end{centering}
               \caption{$X_{s_{q+1}}$ in green, $X'_{s_{q+1}}$ in dashed green, $Z_j$ in blue, $Z'_j$ in dashed blue, for $j= k\cdot q +r$. }
        \end{figure}

Let us prove that $(Z_j)_{j<L\cdot k}$ is a quasi-incomparable family of subsets of $Z$.
Fix some  $j=k\cdot q + r$. Let $Y\subset Z_0 \cup \dots\cup Z_{j-1}$ a finite set. Let $Y_{|X}=\{x\in X|(x,r)\in Y\}$.
 We define $Z_j' \eqdef X'_{s_{q+1}}\times \{k-1-r\}$ where $X'_{s_{q+1}}=\shift(X_{s_{q+1}},Y_{|X})$ again. It is sufficient to prove that for all $h<j$, $Z'_j \perp (Y\cap Z_{h})$. Let $h = k \cdot q' + r'$ with $r'<k$. Either $q'<q$, or $q'=q$ and $r'<r$:
 \begin{itemize}
 \item If $q' < q$, follow the same reasoning as in the proof of the lower bound of Theorem \ref{limit-formula} to show that $X'_{s_{q+1}} \perp (Y_{|X}\cap X_{s_{q'+1}})$.
 \item If $q'=q$ and $r'<r$, then $k-1-r'>k-1-r$ so all elements of $Z_h$ are above all the elements of $Z_j$ in the last component. Since $s_{q+1}$ is grounded, there exists $i$ such that $s_{q+1}(i) = 0$ and the $i$th component of every elements of $Y\cap Z_h$ is below $\xi(Y,i)$, so below the $i$-th component of elements of $Z'_j$.
 \end{itemize}
   Thus $Z'_j \perp (Y\cap Z_h)$ for any $h<j$, so $Z'_j\perp Y$.

Therefore $(Z_j)_{j<L\cdot k}$ forms a quasi-incomparable family of subsets of $Z$, so according to Lemma \ref{lem-Antoine-strat}, $\w(Z) \geq \w(Z_{(L\cdot k) -1}) + \dots + \w(Z_0)$.

Observe that for any $j=k\cdot q +r$, $Z_j\equiv X_{q+1}$ and $\w(Z_j)=\w(X_{q+1})$ is indecomposable. So $\w(Z_{k\cdot q})+ \w(Z_{(k\cdot q) + 1})+\dots + \w(Z_{(k\cdot q) + k -1}) = \w(X_{q+1})\otimes k$. Therefore $\w(Z) \geq \w(X_{s_L})\otimes k \nfp \dots \nfp \w(X_{s_1})\otimes k = \w(X) \otimes k$.

\end{proof}

\section{Applications}
\label{section-w-equals-o}

\subsection{When width coincides with maximal order type}
In view of $\w((\omom)^{\times n}) = \o((\omom)^{\times n})$ (Proposion \ref{prop-omom-times-n}), one wonders if more generally $\w(X)$ catches up with $\o(X)$, for instance when the $\alpha_i$s are large enough? It turns out that we can exactly characterize the cartesian products $X$ such that $\w(X)$ and $\o(X)$ coincide:

\begin{theorem}
\label{thm-w-equals-o}
Let $Z=\alpha_1\times\dots\times \alpha_n \times k_1 \times \dots \times k_m$ with $n>0$, $\alpha_1,\dots,\alpha_n\geq\om$, and $0<k_1,\dots,k_m<\om$.
Now $\w(Z)=\o(Z)$ iff there exist: \begin{itemize}
\item $i\in [1,n]$ such that $\alpha_i$ is infinite indecomposable, and
\item $j_1 \neq j_2 \in [1,n]$ such that the Cantor normal forms of $\alpha_{j_1}$ and $\alpha_{j_2}$
only have infinite exponents (i.e., $\alpha_{j_1}$ and $\alpha_{j_2}$ are exactly divisible by $\omom$).
\end{itemize}
\end{theorem}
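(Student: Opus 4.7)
The plan is a slice-by-slice comparison of $\w(X)$ and $\o(X)$ after reducing away the finite factors.

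By Theorem~\ref{thm-product-finite-infinite} and (iterated) Lemma~\ref{lem-product-o}, $\w(Z) = \w(X) \otimes k_1 \otimes \dots \otimes k_m$ and $\o(Z) = \o(X) \otimes k_1 \otimes \dots \otimes k_m$ where $X = \alpha_1 \times \dots \times \alpha_n$. Natural multiplication by a positive integer scales every CNF coefficient, hence is injective on ordinals, so $\w(Z) = \o(Z)$ iff $\w(X) = \o(X)$; and since the two conditions of the theorem depend only on the $\alpha_i$, we may restrict to $m = 0$. Writing each $\alpha_i = \sum_{j<l_i} \om^{\alpha_{i,j}}$ in Cantor normal form without multiplicities, distributivity of natural multiplication over natural sum gives $\o(X) = \bigoplus_{s \in Sl(X)} \om^{\sigma_s}$ with $\sigma_s = \bigoplus_i \alpha_{i,s(i)}$, while Theorem~\ref{limit-formula} combined with Theorem~\ref{thm-eta-Z} yields $\w(X) = \bigoplus_{s \in Gr(X)} \om^{\tau_s}$, where $\om^{\tau_s} = \w(X_s)$ for the slice $X_s \equiv \om^{\alpha_{1,s(1)}} \times \dots \times \om^{\alpha_{n,s(n)}}$.

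Since natural sums of $\om$-powers are determined by, and strictly monotonic in, their multisets of exponents, and since $\tau_s \leq \sigma_s$ at every slice (by Lemma~\ref{lem-w-leq-o}), $\w(X) = \o(X)$ holds iff $Gr(X) = Sl(X)$ and $\tau_s = \sigma_s$ for every $s$. Counting gives $|Sl(X)| - |Gr(X)| = \prod_i (l_i - 1)$, so $Gr(X) = Sl(X)$ iff some $l_{i_0} = 1$, i.e.\ iff condition (a) holds. Assuming (a), a case split on Theorem~\ref{thm-eta} by the number of infinite $\alpha_{i,s(i)}$ shows $\tau_s = \sigma_s$ iff either (i) all $\alpha_{i,s(i)} = 0$, or (ii) at least two of them are $\geq \om$; in any other regime, $\tau_s < \sigma_s$ strictly.

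Let $A = \{i : \om^\om \mid \alpha_i\}$, the indices whose CNF exponents are all infinite. If $|A| \geq 2$ (condition (b)), the two indices in $A$ always contribute exponents $\geq \om$, so case (ii) holds at every slice and $\w(X) = \o(X)$. Conversely, if $|A| \leq 1$, pick $i_0$ with $l_{i_0} = 1$ and define $s^*$ by $s^*_i = 0$ for $i \in A \cup \{i_0\}$ and $s^*_i = l_i - 1$ otherwise: at $s^*$ at most one component is $\geq \om$ (the $A$-one, if any), yet $\sigma_{s^*} > 0$ because $\alpha_{i_0,0} \geq 1$ (since $\alpha_{i_0}$ is infinite), so neither (i) nor (ii) holds, whence $\tau_{s^*} < \sigma_{s^*}$ and $\w(X) < \o(X)$ by strict monotonicity.

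The main obstacle is verifying the witness $s^*$ really avoids both escape clauses in every subcase of $|A| \leq 1$. The delicate case is $|A| = 1$ with $i_0 \notin A$: there the forced coordinate $\alpha_{i_0,0}$ contributes a positive finite summand to the sum of non-maximal exponents at $s^*$, which prevents (ii) from reappearing through a second infinite term and makes the ``$-1$'' in Theorem~\ref{thm-eta-Z}'s formula bite strictly.
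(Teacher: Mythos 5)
Your proposal is correct and follows essentially the same route as the paper: reduce to $\w(X)=\o(X)$ via Theorem~\ref{thm-product-finite-infinite} and Lemma~\ref{lem-product-o}, expand $\o(X)$ over all slices and $\w(X)$ over grounded slices, and observe that equality forces $Gr(X)=Sl(X)$ together with $\w(X_s)=\o(X_s)$ at every slice, which via Theorem~\ref{thm-eta} yields exactly the two stated conditions. The only cosmetic difference is that the paper extracts $j_1,j_2$ directly from the top slice $s:j\mapsto l_j-1$, whereas you argue the contrapositive with the tailored witness $s^*$ (and you spell out the converse that the paper leaves as ``clear''); both are sound.
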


Note that $i$ can be equal to $j_1$ or $j_2$ (for instance in case $n=2$).

\begin{proof}

 According to Theorem \ref{thm-product-finite-infinite}, 
$\w(Z) = \w(X) \otimes k_1\otimes \dots \otimes k_{m}$
where $X=\alpha_1 \times \dots\times\alpha_n$ as before,
and $\o(Z) = \o(X) \otimes k_1\otimes \dots \otimes k_{m}$ according to Lemma \ref{lem-product-o}.
Therefore $\w(Z) = \o(Z)$ iff $\w(X) = \o(X)$.
\medskip 

$(\Rightarrow)$ Assume $\w(X) = \o(X)$.

We express $\o(X)$ in a form that allow us to compare it easily to $\w(X)$:
\begin{align*}
\o(X) &= \underset{i\in[1,n]}{\bigotimes} \alpha_i   \text{ according to Lemma \ref{lem-product-o}}\\
&= \underset{s\in Sl(X)}{\bigoplus}
\left(\underset{i\in[1,n]}{\bigotimes} \om^{\alpha_{i,s(i)}}\right) \text{ by distributivity}\\
&= \underset{s\in Sl(X)}{\bigoplus} \o(X_s )\;,\\
\shortintertext{and}\w(X) &= \underset{s\in Gr(X)}{\bigoplus} \w(X_s) \text{ according to Theorem \ref{limit-formula}.} 
\end{align*}

According to Lemma \ref{lem-w-leq-o}, for every slice $s\in Sl(X)$, $0<\w(X_s )\leq \o(X_s )$. Moreover $Gr(X)\subseteq Sl(X)$.
Therefore $\w(X) = \o(X)$ if and only if $Gr(X) =Sl(X)$ and $\w(X_s) = \o(X_s)$ for any $s\in Sl(X)$.
\begin{itemize}
\item $Gr(X) =Sl(X)$ implies that there are no ungrounded slices, i.e., there exists $i\in[1,n]$ such that $l_i = 1$. Thus there exists $i$ such that $\alpha_i$ is indecomposable. 
\item According to Theorem \ref{thm-eta}, $\w(X_s )= \o(X_s)$  is true iff there exist $j_1\neq j_2$ such that $\alpha_{j_1,s(j_1)}$ and $\alpha_{j_2,s(j_2)}$ are both infinite. In particular, for the top slice $s:j\mapsto l_j - 1$, there exist $j_1\neq j_2$ such that $\alpha_{j_1,l_{j_1}-1}$ and $\alpha_{j_2,l_{j_2}-1}$ are both infinite, and therefore all exponents of $\alpha_{j_1}$ and $\alpha_{j_2}$ are infinite.
\end{itemize}

$(\Leftarrow)$ The above proof makes it clear that the necessary conditions are sufficient.
\end{proof}

\subsection{Measuring elementary wqos}

Let the family of \emph{elementary} wqos be the smallest family of wqos that contains $\emptyset$  and is closed by disjoint sum, cartesian product and building finite sequences, as defined in \cite{SS-icalp11}. The maximal order type and height of any elementary wqo are already well-known (see Lemmas \ref{lem-product-o} and \ref{lem-product-h} for the cartesian product, \cite{dzamonja2020} for the disjoint sum, \cite{schmidt79} for finite sequences). Here we will show how to compute their width.
\begin{remark}
This family contains $\emptyset^*$ which is isomorphic to the singleton $\Gamma_1$, and $(\emptyset^*)^*$ which is isomorphic to $\om$. Since it is closed by disjoint sum, it contains also $\Gamma_k$ modulo isomorphism for all $k\in\mathbb{N}$.
\end{remark}

We can easily compute the width of a disjoint sum of wqos $A,B$: $\w(A\sqcup B)= \w(A)\oplus \w(B)$ according to Lemma \ref{lem-DSS-sqcup}. Moreover, observe that the cartesian product distributes over the disjoint sum: $A \times (B\sqcup C) = (A\times B) \sqcup (A\times C)$. Therefore we can restrict our study of the width to elementary wqos of the form $A_1^*\times \dots \times A_n^*$ with $A_1,\dots , A_n$ elementary wqos.

\medskip

Let $A$ be a wqo. Then the poset $A^*$ (also written $A^{<\om}$) of finite sequences on $A$ ordered by embedding is a wqo when $A$ is (\cite{higman52}).
Recall from \cite{schmidt79} and \cite{dzamonja2020}:
\begin{lemma}
\label{lem-words-o}

\begin{align*}
\o(A^*) =\w(A^*)=\begin{cases}
\om^{\om^{\o(A) -1}} &\text{ if $\o(A)$ is finite,}\\
\om^{\om^{\o(A) +1}} &\text{ if $\o(A) = \delta + n$ with $\om^{\delta}=\delta$ and $n$ finite,}\\
\om^{\om^{\o(A)}} &\text{ otherwise.}
\end{cases}
\end{align*}
We write that in the simpler form $\o(A^*)=\om^{\om^{\o(A)'}}$.
\end{lemma}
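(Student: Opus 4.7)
The value of $\o(A^*)$ is Schmidt's classical theorem from \cite{schmidt79}; I would cite it directly, noting that the three cases correspond to whether $\o(A)$ is finite, is a successor-shift of an $\om$-fixed point, or neither. The upper bound $\w(A^*) \leq \o(A^*)$ is then immediate from Lemma \ref{lem-w-leq-o}, so the substantive content is the matching lower bound $\w(A^*) \geq \o(A^*)$.

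My plan for the lower bound is to refine witnesses to $\o(A^*)$ into antichains. Recall that $u \leq v$ in $A^*$ holds when there is an order-preserving injection $\phi$ of positions with $u_i \leq_A v_{\phi(i)}$, so proving pairwise incomparability requires defeating embedding in \emph{both} directions simultaneously. I would proceed by induction on the shape of $\o(A)$, following Schmidt's case split: starting from explicit antichains in $\Gamma_k^*$ of rank $\om^{\om^{k-1}}$ (built from sequences that pairwise disagree on some coordinate of a common length, a classical Higman-style construction), and then inserting \emph{blocking modifications} — extra letters or local repetitions — into the standard bad sequences realising $\o(A^*)$ so that any putative embedding in either direction is forced to fail while the rank of the resulting tree is preserved.

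The alternative, more in the spirit of this paper, would be to exhibit that $A^*$ is transferable (any finite $Y \subseteq A^*$ can be avoided by sequences that are longer than every element of $Y$, and those sequences contain an isomorphic copy of $A^*$) and then to lift lower bounds from well-chosen substructures via Lemma \ref{lem-Antoine-strat}, combining partial strategies for Antoine on $\Gamma_k^* \leqstruct A^*$ and on longer and longer "tail" copies of $A^*$. The hardest obstacle, in either approach, is the case where $A$ itself is a linear order such as $A = \om$: no internal antichain of $A$ is available for blocking, so the antichains of rank $\om^{\om^\om}$ in $\om^*$ must rely entirely on positional and length discrepancies, which is precisely where the classical arguments become most delicate and where a careful bookkeeping in the game $G_{A^*, \om^{\om^{\o(A)'}}}$ seems necessary.
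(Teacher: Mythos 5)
The paper does not prove this lemma at all: it is explicitly \emph{recalled} from the literature (``Recall from \cite{schmidt79} and \cite{dzamonja2020}''), with the maximal order type due to Schmidt and the width computation $\w(A^*)=\o(A^*)$ taken from \cite{dzamonja2020}. You correctly cite Schmidt for $\o(A^*)$ and correctly isolate the lower bound $\w(A^*)\geq\o(A^*)$ as the substantive content (the upper bound being Lemma \ref{lem-w-leq-o}), but that half is also a quoted result in this paper, not something re-derived; simply citing \cite{dzamonja2020} for it would match what the paper actually does.

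Taken on its own terms as a proof, your sketch of the lower bound has a genuine gap. The ``blocking modifications'' idea is a plan, not an argument: you do not specify what is inserted, why it defeats embeddings in \emph{both} directions, or why the rank of the resulting tree of antichains is preserved, and you yourself flag the linear case $A=\om$ (where $\w(\om^*)=\om^{\om^\om}$ must come entirely from positional and length discrepancies) as unresolved --- which is precisely the case any proof must handle. The transferability route cannot close this gap either: Lemma \ref{lbiftransferable} lower-bounds $\w(A^*\times B)$ by $\w(A^*)\cdot\o(B)$, i.e., it \emph{propagates} a lower bound on $\w(A^*)$ that you do not yet possess; it cannot bootstrap $\w(A^*)\geq\o(A^*)$ from nothing, and combining Antoine strategies on $\Gamma_k^*\leqstruct A^*$ via Lemma \ref{lem-Antoine-strat} only yields finite natural sums of whatever bounds you already have for $\Gamma_k^*$. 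Finally, note that the width part of the statement silently requires $\o(A)\geq 2$: if $\o(A)=1$ then $A^*\equiv\om$ and $\w(\om)=1\neq\om$, a degenerate case the paper handles separately in Section \ref{section-w-equals-o} and which a self-contained proof would have to exclude.
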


\begin{remark}
If $A\neq\emptyset$ then $A^*$ is transferable.
\end{remark}
\begin{proof}
For any $u\in A^*$, for any $a\in A$, $A^*_{\not\leq u}$ contains $\{uav|v\in A^*\}$, which is isomorphic to $A^*$. Therefore $A^*$ is self-residual, hence transferable.
\end{proof}

\begin{remark}
By Lemma \ref{lem-words-o}, if $\o(A)>1$, then $\o(A^*)$ verifies the two conditions described in Theorem \ref{thm-w-equals-o}: $\o(A^*)$ is infinite indecomposable, and its normal form only have infinite exponents. 
\end{remark}

This property of $A^*$ will prove useful thanks to the following theorem, which generalises Theorem \ref{thm-w-equals-o} to the cartesian products of $n$ wqos:

\begin{theorem}
\label{thm-app}
Let $A_1,\dots,A_n$ be a family of wqos.
If there exist $i,j_1\neq j_2 \in [1,n]$ such that $\o(A_i)$ is infinite indecomposable, and $\o(A_{j_1})$ and $\o(A_{j_2})$ only have infinite exponents, then $\w(A_1\times\dots\times A_n) = \o(A_1) \otimes \dots \otimes \o(A_n)$.
\end{theorem}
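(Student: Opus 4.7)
The plan is to sandwich $\w(A_1\times\dots\times A_n)$ between two quantities both equal to $\o(A_1)\otimes\dots\otimes\o(A_n)$.

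For the upper bound, Lemma~\ref{lem-w-leq-o} combined with repeated application of Lemma~\ref{lem-product-o} gives immediately
$$\w(A_1\times\dots\times A_n)\;\leq\;\o(A_1\times\dots\times A_n)\;=\;\o(A_1)\otimes\dots\otimes\o(A_n)\;.$$

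For the lower bound, I would factor through the maximal linearization. For each $k$, $\o(A_k)$ is, by definition of the maximal order type, an augmentation of $A_k$ on the same underlying set. Augmentations are preserved by cartesian product, so
$$A_1\times\dots\times A_n\;\leqaug\;\o(A_1)\times\dots\times\o(A_n)\;,$$
and by Lemma~\ref{lem-leq-aug} this gives $\w(A_1\times\dots\times A_n)\geq \w(\o(A_1)\times\dots\times\o(A_n))$. Now partition $[1,n]$ into the set $I$ of indices $k$ with $\o(A_k)\geq\om$ and the set $J$ of those with $\o(A_k)<\om$. By hypothesis $i,j_1,j_2\in I$, so the product
$$\underset{k\in I}{\bigtimes}\o(A_k)\;\times\;\underset{k\in J}{\bigtimes}\o(A_k)$$
satisfies the hypotheses of Theorem~\ref{thm-w-equals-o}: the infinite factors include an indecomposable one and two whose normal forms only have infinite exponents, while the finite factors play the role of the $k_j$'s. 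Hence Theorem~\ref{thm-w-equals-o} yields
$$\w\bigl(\o(A_1)\times\dots\times\o(A_n)\bigr)\;=\;\o\bigl(\o(A_1)\times\dots\times\o(A_n)\bigr)\;=\;\o(A_1)\otimes\dots\otimes\o(A_n)\;,$$
the last equality being Lemma~\ref{lem-product-o} applied to ordinals (where $\o$ of an ordinal is itself).

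The argument is essentially one of transfer: all the real work has been concentrated in Theorem~\ref{thm-w-equals-o}, and the main obstacle to making the transfer clean is merely the bookkeeping around finite $\o(A_k)$'s, ensuring that the required ``witnesses'' $i,j_1,j_2$ land in the infinite block $I$ so that the hypotheses of Theorem~\ref{thm-w-equals-o} go through. Degenerate cases such as some $A_k=\emptyset$ or $\o(A_k)=1$ are trivial (both sides vanish or the corresponding factor contributes a $\Gamma_1$ that can be discarded without changing either invariant).
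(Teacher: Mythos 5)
Your proposal is correct and follows essentially the same route as the paper: the upper bound via Lemma~\ref{lem-w-leq-o} and Lemma~\ref{lem-product-o}, and the lower bound by passing to the augmentation $\o(A_1)\times\dots\times\o(A_n)$ and invoking Theorem~\ref{thm-w-equals-o}. Your extra bookkeeping separating the finite and infinite factors is a detail the paper leaves implicit, but it does not change the argument.
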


\begin{proof}
According to Lemma \ref{lem-w-leq-o},
$$ \w(A_1\times\dots\times A_n)\leq \o(A_1) \otimes \dots \otimes \o(A_n)\;.$$

Since $A_i \leqaug \o(A_i)$ for any $i$, $A_1\times\dots\times A_{n}\leqaug \o(A_1) \times \dots \times \o(A_n)$, thus:
\begin{align*}
\w(A_1\times\dots\times A_n) &\geq \w(\o(A_1) \times \dots \times \o(A_n))\\
&= \o(\o(A_1) \times \dots \times \o(A_n)) \text{ according to Theorem \ref{thm-w-equals-o},}\\
&= \o(A_1) \otimes \dots \otimes \o(A_n)\;.\qedhere
\end{align*}

\end{proof}

Let $X=A_1^*\times\dots\times A_n^*$ with $n\geq 2$ and $A_i\neq\emptyset$ elementary for all $i\leq n$. Let's compute $\w(X)$:\begin{itemize}
\item If there exist $i\neq j\in[1,n]$ such that $\o(A_i)>1$ and $\o(A_j)>1$, then the conditions of Theorem \ref{thm-app} are fulfilled, and $\w(X)=\o(X)$.
\item Otherwise if there exists $i\in [1,n]$ such that $\o(A_i)>1$ and for all $j\neq i$, $o(A_j)=1$, i.e., $A_j \equiv\Gamma_1$ and $A_j^* \equiv \om$, then $X\equiv A_i^*\times \om^{\times (n-1)}$, and $\w(X)=\o(X)$. Lemma \ref{lem-w-leq-o} gives us the upper bound, and since $A_i^*$ is transferable:

\begin{align*}
\w( X^*\times\om^{\times n})&\geq \w(X^*)\cdot \o(\om^{\times n}) &\text{ according to Lemma \ref{lbiftransferable},}\\
&= \om^{\om^{\o'(X)} + n} &\text{ according to Lemma \ref{lem-words-o},}\\
&=\o(X^*)\otimes \o(\om^{\times n})\\
&= \o(X^*\times\om^{\times n})\;,
\end{align*}
which we know how to compute with Lemma \ref{lem-product-o}.
\item Otherwise, $A_i\equiv\Gamma_1$ for all $i\in [1,n]$, hence $X\equiv \om^{\times n}$. According to Proposition \ref{prop-om-times-n}, $\w(\om^{\times n})=\om^{n-1}$ for all $n\geq 1$.
\end{itemize}

Therefore we can measure any elementary wqos for all ordinal invariants.

\section{Product of finite ordinals}
\label{section-finite}

In the case of the cartesian product of finite ordinals, we have a finite poset, thus its width coincide with the length of its largest antichain. For the sake of completeness, we recall a classic result that characterize its width.

Let $k_1,\dots, k_n>0$ be $n$ finite ordinals, and $p_1,\dots,p_n$ distinct prime numbers. Observe that $X\eqdef k_1\times\dots\times k_n$ is isomorphic to the poset of the divisors of $p_1^{k_1-1}\cdot\;\dots\;\cdot p_n^{k_n-1}$ ordered by divisibility.
Therefore, according to Theorem 1 of \cite{debruijn51b}:

\begin{theorem}
\label{thm-finite}
One maximal antichain of $X$ is:
$$A = \left\{(m_1,\dots,m_n)\in X\;\middle|\; \sum m_i = \left\lfloor\frac{1}{2}\sum (k_i-1)\right\rfloor \right\}\;,$$
therefore $\w(X)=|A|$.
\end{theorem}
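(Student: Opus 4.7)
The plan is to verify both that $A$ is an antichain and that its size matches $\w(X)$.

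The antichain property is immediate: suppose $(m_1,\dots,m_n),(m_1',\dots,m_n') \in A$ are comparable in the product order, say $m_i \leq m_i'$ for all $i$. Then $\sum m_i \leq \sum m_i'$, but by definition of $A$ both sums equal $\lfloor \frac{1}{2}\sum (k_i-1)\rfloor$, so every coordinate inequality must be an equality. Hence $A$ is a genuine antichain, giving the lower bound $\w(X) \geq |A|$.

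For the upper bound $\w(X) \leq |A|$, the plan is to invoke a symmetric chain decomposition (SCD) of $X$, which is the core of the de Bruijn--Tengbergen--Kruyswijk theorem cited as Theorem 1 of \cite{debruijn51b}. Define the rank of $(m_1,\dots,m_n)$ as $\rho = \sum m_i$, ranging from $0$ to $R = \sum(k_i-1)$; the ``middle layer'' at rank $\lfloor R/2\rfloor$ is exactly $A$. I would construct, by induction on $n$, a partition of $X$ into chains $C_1,\dots,C_N$ such that each $C_j$ is a chain of consecutive ranks centred symmetrically around $R/2$. The base case $n=1$ is trivial since $k_1$ is itself one such chain. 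For the inductive step, given an SCD of $X' = k_1 \times \dots \times k_{n-1}$, one produces an SCD of $X' \times k_n$ by the standard bracketing procedure: each symmetric chain of $X'$ of length $\ell$ yields $\min(\ell,k_n)$ symmetric chains in $X'\times k_n$ of lengths $\ell + k_n - 1,\ \ell + k_n - 3,\ \dots$, obtained by pairing tail ends of one ``vertical'' copy with head ends of the next.

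Once the SCD exists, each chain $C_j$ passes through the middle layer in exactly one element, so $N = |A|$. Any antichain meets each $C_j$ in at most one element, hence has size at most $N = |A|$, giving $\w(X) \leq |A|$ and completing the proof.

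The main obstacle is the inductive bracketing: one must check that the new chains are genuinely chains in the product order and that their ranks remain symmetric around $R/2$. This combinatorial verification is exactly the content of de Bruijn--Tengbergen--Kruyswijk, so in keeping with the ``for completeness'' nature of this section I would simply cite \cite{debruijn51b} rather than reprove the SCD construction in full.
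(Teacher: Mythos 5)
Your proposal is correct and ultimately rests on the same external result as the paper: the paper simply observes that $k_1\times\dots\times k_n$ is isomorphic to the divisor lattice of $p_1^{k_1-1}\cdots p_n^{k_n-1}$ and cites Theorem 1 of \cite{debruijn51b}, whereas you work directly in the product of chains and sketch the symmetric-chain-decomposition argument underlying that theorem before citing it. The extra details you supply (the antichain check and the bracketing induction) are accurate, so this is essentially the paper's proof with the cited black box partially opened.
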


For instance, 
\begin{align*}
w(2^{\times n})&=\left|\left\{(m_1,\dots,m_n)\in \{0,1\}^{\times n}\;\middle|\; \sum m_i = \left\lfloor\frac{n}{2}\right\rfloor \right\}\right|\\
&= \binom{n}{\big\lfloor\frac{n}{2}\big\rfloor} \text{ the $n$th central binomial coefficient.}
\end{align*}

Similarly, $w(3^{\times n})$ is equal to the central trinomial coefficient, defined as the largest coefficient of the polynomial $(1+x+x^2)^n$. We can compute it efficiently:
$$\w(3^{\times n})=\sum_{0\leq i \leq \left\lfloor {n}/{2}\right\rfloor}\binom{n}{i}\binom{n-i}{i}\;.$$
This leads to a slightly different characterization of $\w(X)$ which can be deduced from Theorem \ref{thm-finite}:

\begin{corollary}
For $X=k_1\times\dots\times k_n$ a cartesian product of finite ordinals, $\w(X)$ is equal to the central coefficient of the polynomial $P_{k_1}\times\dots\times P_{k_n}$, where $P_{k_i}(x)=1 + x + \dots + x^{k_i -1}$.
\end{corollary}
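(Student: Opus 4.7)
The plan is to interpret the coefficients of the polynomial $\prod_i P_{k_i}(x)$ combinatorially and match them against the maximal antichain exhibited in Theorem \ref{thm-finite}. First I would write $P_{k_i}(x) = \sum_{m=0}^{k_i - 1} x^m$ and expand the product:
$$\prod_{i=1}^n P_{k_i}(x) \;=\; \sum_{m_1=0}^{k_1-1}\cdots\sum_{m_n=0}^{k_n-1} x^{m_1+\dots+m_n} \;=\; \sum_{s=0}^{D} c_s\, x^s,$$
where $D = \sum_i (k_i - 1)$ and
$$c_s \;=\; \bigl|\{(m_1,\dots,m_n) \in X : \textstyle\sum_i m_i = s\}\bigr|.$$
By Theorem \ref{thm-finite}, the maximal antichain $A$ consists precisely of those tuples in $X$ whose coordinate-sum is $\lfloor D/2 \rfloor$, so $|A| = c_{\lfloor D/2 \rfloor}$ and $\w(X) = c_{\lfloor D/2 \rfloor}$.

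It then remains to argue that $c_{\lfloor D/2 \rfloor}$ is the \emph{central} coefficient of $\prod_i P_{k_i}$. The key observation is that each $P_{k_i}$ is palindromic (its coefficients are invariant under $m \mapsto k_i - 1 - m$), and the product of palindromic polynomials is palindromic. Hence $\prod_i P_{k_i}$ has degree $D$ and satisfies $c_s = c_{D-s}$, so the middle coefficient $c_{\lfloor D/2 \rfloor}$ is the central coefficient in the natural sense.

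The only potential obstacle is clarifying what is meant by ``central coefficient''. If one wants ``central'' to also mean ``maximal'', an additional unimodality argument is needed: each factor $P_{k_i}$ has the constant sequence of coefficients $(1, \dots, 1)$, which is trivially log-concave and symmetric, and the convolution of symmetric log-concave sequences remains symmetric and unimodal with peak at the center. This shows $c_{\lfloor D/2 \rfloor} = \max_s c_s$, reconciling the two possible readings of ``central coefficient'' and closing the proof.
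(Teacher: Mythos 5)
Your proof is correct and follows essentially the same route as the paper: expand $\prod_i P_{k_i}$ so that the coefficient of $x^s$ counts tuples with coordinate-sum $s$, and identify the central coefficient with the coefficient of $x^{\lfloor\sum(k_i-1)/2\rfloor}$, i.e., with $|A|$ from Theorem \ref{thm-finite}. The symmetry and unimodality remarks you add are sound but not needed for the statement as phrased; the paper's own proof is just the one-line observation that the central coefficient is the coefficient of $x^{\lfloor(\sum k_i-1)/2\rfloor}$.
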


\begin{proof}
The central coefficient of $P_{k_1}\times\dots\times P_{k_n}$ is the coefficient of $x^{\left\lfloor\left(\sum k_i-1\right)/2\right\rfloor}$.
\end{proof}

\section{Conclusion}

Following \cite{kriz90b} and \cite{dzamonja2020}, we consider the width of wqos. Our work addresses the issue of computing the width of the cartesian product of known wqos, more specifically of $n$ ordinals, extending \cite{abraham87} which solved the case $n=2$. 

 Together, Theorems \ref{thm-eta-Z}, \ref{limit-formula}, \ref{thm-product-finite-infinite}, and \ref{thm-finite} cover all the cases needed to compute the width of the cartesian product of finitely many ordinals.
 
 These theorems rely on the well-known method of residuals to prove upper bounds, a game-theoretical approach to prove lower bounds, and new techniques to transfer strategies from simple wqos to more complex ones.
 
 Beyond the cartesian product of linear orders, our result can be used to compute the width of a generic family of elementary wqos.

The techniques developed here can help target other open questions on wqo width, for example how to compute the width of the powerset or the set of multisets over known wqos.

\section*{Aknowledgements}

The research described in this article started with my master internship supervized by Ph.~Schnoebelen. It has also benefited from numerous discussions and suggestions from M.~Dzamonja, S.~Halfon,  and S.~Schmitz.

\bibliographystyle{alpha}
\bibliography{biblio}

\end{document}